\declaretheorem[parent=section]{theorem}
\declaretheorem[sibling=theorem]{definition}
\declaretheorem[sibling=theorem]{corollary}
\declaretheorem[sibling=theorem]{lemma}
\declaretheorem[sibling=theorem]{example}
\declaretheorem[sibling=theorem]{fact}
\DeclareMathOperator {\Z}{\mathbb{Z}}
\DeclareMathOperator {\E}{\mathbb{E}}
\let\epsilon=\varepsilon
\newcommand{\poly}{\mathrm{poly}}
\newcommand{\q}{\widehat{z}}
\renewcommand{\d}{z}
\newcommand{\EMD}{\mathrm{EMD}}
\newcommand{\EC}{C} 
\newcommand{\PR}{L} 
\newcommand{\survive}{K} 
\newcommand{\R}{\mathbb{R}}
\newcommand{\pois}{\mathrm{Poisson}}
\title{Robust Set Reconciliation via Locality Sensitive Hashing}
\author{Michael Mitzenmacher\thanks{Harvard University School of Engineering and Applied Sciences.  email: michaelm@eecs.harvard.edu.  Michael Mitzenmacher was supported in part by NSF grants CNS-1228598, CCF-1320231, CCF-1563710 and CCF-1535795.} \and Tom Morgan\thanks{Harvard University School of Engineering and Applied Sciences.  email: tdmorgan@seas.harvard.edu.  Tom Morgan was supported in part by NSF grants CNS-1228598 and CCF-1320231.}}
\date{}
\begin{document}
\maketitle

\begin{abstract}
We consider variations of set reconciliation problems where two parties, Alice and Bob, each hold a set of points in a metric space, and the goal is for Bob to conclude with a set of points that is close to Alice's set of points in a well-defined way.  This setting has been referred to as robust set reconciliation.  More specifically, in one variation we examine the goal is for Bob to end with a set of points that is close to Alice's in earth mover's distance, and in another the goal is for Bob to have a point that is close to each of Alice's.  The first problem has been studied before;  our results scale better with the dimension of the space.  The second problem appears new.  

Our primary novelty is utilizing Invertible Bloom Lookup Tables in combination with locality sensitive hashing.  This combination allows us to cope with the geometric setting in a communication-efficient manner.  
\end{abstract}

\section{Introduction}

\paragraph{Set reconciliation.} Set reconciliation is a basic paradigm for data synchronization, for distributed databases and other distributed systems.  In a standard set reconciliation problem, two parties, Alice and Bob, hold sets of elements from a shared universe, and the goal is for them to communicate so that one or both of them has the union of both sets.  In many settings, the size of the set difference may be very small even though the sets may be large, and therefore the goal is for the communication to be proportional to the size of the set difference, rather than the size of each of the sets (which could be achieved simply by transferring the sets themselves).  We provide further description and references below.    

\paragraph{Robust set reconciliation.} Robust set reconciliation, introduced in \cite{chen2014robust}, generalizes set reconciliation to the scenario where the set elements lie in a metric space, and sufficiently close points should be thought of as equal.  As a natural example, set elements might be geometric coordinates for objects, as determined by sensors.  Each sensor corresponds to a set, and for the same object, each sensor might have slightly different, noisy measurements.  We might wish two sensors to synchronize their collections of known objects, and objects within a certain measured distance are either assumed to be (or for practical purposes may be treated as) the same.  Other applications would include reconciling other potentially noisy data, such as databases with floating point measurements or calculations, or databases with image data that has been subjected to varying compression schemes.  In such cases, the databases would not end up with the same data;  but this would suffice for numerical data sets where having points that are close enough may be all that is needed \cite{chen2014robust}.  This would, for example, be useful when the databases are used for machine learning via clustering or nearest neighbor search.  Here the most valuable new data to reconcile would be the outliers.

We study two different models of robust set reconciliation that achieve different types of guarantees.  In both settings Alice has a point set $S_A$, and Bob has a point set $S_B$ where all the points lie in a metric space $(U, f)$, where $U$ is a discretized metric space (as our bounds may depend on $|U|$) such as $[\Delta]^d$.  They communicate over a constant number of rounds so as to \emph{reconcile} Alice's data with Bob's;  that is, Bob's final point set $S_B'$ is close to Alice's, where the notion of closeness depends on the model.  We limit ourselves to computationally efficient (polynomial time) protocols, however what can be achieved without this limitation is an interesting open question.

\paragraph{Earth Mover's Distance model.} The first model, the Earth Mover's Distance model, was originally introduced in \cite{chen2014robust}.  As in \cite{chen2014robust}, we restrict ourselves to metric spaces of the form $(U,f)=([\Delta]^d,\ell_p)$.  We require that $|S_A| = |S_B| = |S_B'|$.  The goal here is for Bob to compute an $S_B'$ which minimizes $\EMD(S_A, S_B')$, the earth mover's distance between $S_A$ and $S_B'$, with only reasonable amounts of communication.\footnote{We note that Definition 2 of \cite{chen2014robust} makes the additional stipulation that $S_B' \subset S_A \cup S_B$, however neither our protocol nor the protocol of \cite{chen2014robust} meet this requirement. Both include points in $S_B'$ that approximate, without necessarily equaling, points from $S_A$.}  The earth mover's distance is the min-cost perfect matching between the point sets, where the cost is the distance function $f$.  

The following notation will be helpful.  Let $\EMD_k(X, Y)$ be the minimum earth mover's distance achievable between $X$ and $Y$ after excluding $k$ points from each set.  In other words, we would obtain $\EMD(S_A, S_B') = \EMD_k(S_A,S_B)$ if we were able to exactly identify the optimal $k$ points to remove from $S_B$ and the optimal $k$ points from $S_A$ to replace those with. 
Given a communication bound of $O(k \log |U|)$ bits (where $k$ is an input parameter), the smallest $\EMD(S_A, S_B')$ one could reasonably hope to achieve is $\EMD(S_A, S_B') = \EMD_k(S_A,S_B)$.   Indeed, \cite{chen2014robust} provided lower bounds for this model, which confirm that achieving $\EMD(S_A, S_B') = \EMD_k(S_A,S_B)$ requires $\tilde{\Omega}(k \log |U|)$ bits of communication. 

We do not achieve $\EMD(S_A, S_B') = \EMD_k(S_A,S_B)$, but instead obtain a multiplicative approximation to it while using $\tilde{O}(k)$ communication.\footnote{The $\tilde{O}$ here hides log factors of $n$ and log factors of parameters depending on the metric space, in particular $|U|$.}  In particular, we achieve an $O(\log n)$ approximation, improving over the $O(d)$ approximation (where $d$ is the dimension) of \cite{chen2014robust} for high dimensional data.  (One might think the results of \cite{chen2014robust}, in combination with dimension-reduction via the Johnson-Lindenstrauss lemma \cite{johnson1984extensions}, would achieve this for, for example, the $\ell_2$ norm.  However, inverting the dimensionality reduction would require additional rounds of communication;  moreover, our result holds for metrics where such general dimensionality reduction does not exist, such as the $\ell_1$ norm \cite{andoni2011near}.)  The setting where our improvement is most obvious is for Hamming space, where an $O(d)$ approximation is essentially useless, as the space has diameter $d$, while an $O(\log n)$ approximation would be useful, for example, when $n = \poly(d)$.  
Our results for this model are presented in \autoref{sec:emd}.

\paragraph{Gap Guarantee model.} In the second model, which we introduce, we aim for a stronger guarantee of closeness for every point, and consider the necessary communication.  Here Bob's final point set $S_B'$ will be of the form $S_B \cup T_A$, where $T_A \subset S_A$ includes every point in $S_A$ which is at least some chosen distance $r_2$ from every point in $S_B$.  Note that $T_A$ is allowed to contain additional points from $S_A$ beyond these.  That is, Bob is guaranteed that every point in the union of Alice's and Bob's original sets is close to some point in his final set.  In order to achieve nontrivial communication bounds for this guarantee, we introduce an additional parameter $r_1 < r_2$, with the intuition being that most of the points in $S_A$ are already within a distance $r_1$ of some points in $S_B$.  Our communication bounds are then in terms of the number of points that are not within $r_1$, and the gap between $r_1$ and $r_2$.  We call this model the Gap Guarantee model, and study it in \autoref{sec:gap}.  

We believe this model to be quite natural given our motivating sensor network example.  We would expect sensors observing the same object the have similar measurements (below some distance $r_1$) while discrete objects would yield very different measurements (above some distance $r_2$).  This model then guarantees the recovery of all differing objects, while the Earth Mover's Distance model gives a much weaker guarantee on the recovered set.  However, to achieve this distinct guarantee we may require significantly more communication.

Our general protocol uses $O((k + \rho n) \poly \log n + k \log |U|)$ bits of communication, where $k$ is a bound on the number of points each party has that are more than $r_1$ from any of the other party's points and $\rho$ is a parameter of the locality sensitive hash family used in the protocol, which depends on $r_1$ and $r_2$.  (In many metric spaces, $\rho = r_1/r_2$;  we will explain further in context.)  The improvement this achieves over the naive $O(n \log |U|)$ communication is twofold: its dependence on $\log |U|$ is proportional to $k$ and not $n$, which is very relevant for high dimensional data (where $\log |U|$ may be linear in the dimension $d$), and for a sufficiently small (sub-constant) $\rho$, it yields sublinear total communication.

\paragraph{One-way reconciliation.} Both of our models are defined for \emph{one-way} reconciliation, which we define to mean that which Bob wants to conclude with something approximating Alice's data, but Alice makes no changes to her own data.  For standard set reconciliation, the two-way reconciliation problem is natural, as we can have have both parties conclude with the union of their original sets.  For robust set reconciliation problems, the one-way variation is more natural.  For example, for both models we consider, we can easily achieve a natural version of two-way reconciliation by having both Alice and Bob run the protocol once in each direction;  however, they will generally not end with the same point set.  Furthermore, it is unclear what the natural guarantee for a two-way version of the Earth Mover's Distance model would be, especially since we don't expect Alice and Bob to end with the same set. 

\subsection{Related Work}
Here, we briefly describe important related work.  Standard set reconciliation has been studied in the context of distributed synchronization, with many possible applications, see e.g. \cite{minsky2003set,  starobinski2003efficient,eppstein2011s,mitzenmacher2013simple,mitzenmacher2012complexity} and citations therein.  As a fairly recent example, IBLTs (described below) have been proffered as a technique for scalable synchronization of transactions for Bitcoin, and have been discussed as an addition to the Bitcoin protocol \cite{gistgithub}.  

Two technologies underlying our results are locality sensitive hashing and invertible Bloom lookup tables (IBLTs).  Locality sensitive hashing hashes items that are close into the same bucket.  Here we follow the framework established by Indyk and Motwani \cite{indyk1998approximate}, though countless other works in locality sensitive hashing provide inspiration \cite{charikar2002similarity,datar2004locality,andoni2006near}.  Invertible Bloom lookup tables \cite{eppstein2011s,eppstein2011straggler,gm11} provide a particularly efficient approach for standard set reconciliation;  they allow sets with $d$ differences to be synchronized, after preprocessing taking time linear in the set sizes, in $O(d)$ space and time with some small probability of error.  We describe IBLTs in more detail below.  One of our primary technical contributions in this paper is an analysis of how errors due to noisy or otherwise inexact data propagate when using IBLTs, as we wish to limit this effect;  our analysis here may prove useful for other work.  

The idea of using hash-based data structures to handle close matches appears in the work of Kirsch and Mitzenmacher \cite{kirsch2006distance}, who consider generalizing Bloom filters (for membership queries) to distance-sensitive Bloom filters by making use of locality-sensitive hash functions to return a positive result if a query is close to a set element. Chen et. al. \cite{chen2014robust} introduce the concept of robust set reconciliation, and use a randomly offset quadtree with IBLTs to develop protocols for the earth mover's distance problem variation we consider here.  
Chen et. al. discuss many reasons why numerical data sets may have slightly different values, including noise, lossy compression, rounding errors, and privacy-preserving transformations.  Applications for the settings they describe are similarly relevant for our results.   

A related problem to our Earth Mover's Distance model is that of sketching and estimating the earth mover's distance \cite{charikar2002similarity,andoni2008earth,andoni2009efficient}.  However, we did not find existing results and techniques on this problem to be useful for robust set reconciliation, nor did the techniques we developed yield any immediate results in the sketching/estimation regime.

We also make use of the recent work of Mitzenmacher and Morgan \cite{mitzenmacher2017reconciling} on reconciling sets of sets.  In this setting, Alice and Bob each hold a parent set filled with child sets, and the goal is to synchronize their sets of sets using communication proportional to the number of child set operations by which they differ.  This model generalizes to reconciling various other sets of objects such as lists and unlabeled graphs.

Some of our analysis requires some technology from the theory of branching processes;  here \cite{geiger1999elementary,jiang2014parallel} proved helpful.

As mentioned, \cite{chen2014robust} is the most closely related work.  Indeed, like \cite{chen2014robust} we utilize locality-sensitive hashing in combination with IBLTs.  We differ in that, as mentioned, Chen et. al. specifically use a randomly offset quadtree, while we allow for any of a large class of locality sensitive hash families.  We call this class \emph{multi-scale} locality sensitive hash families, and they have the property that the probability of collision between two points gracefully degrades as a function of the points' distance.  Our main innovation comes from how we use our IBLTs.  \cite{chen2014robust} simply rounds points to the center of their quadtree cell, and insert those into an IBLT, while we insert key-value pairs where the key is a point's locality sensitive hash value and the value is the point itself.  Handling these pairs, which may have different values for the same key, requires a robust variant of an IBLT, along with some in depth analysis of an IBLT's peeling process.  We expect these ideas (multi-scale locality sensitive hashing and robust IBLTs) to be more generally useful.

\section{Preliminaries / Techniques Used}
We assume throughout that Alice and Bob's data points lie in a metric space $(U, f)$.  For technical simplicity, we often assume that $U = [\Delta]^d$ for some $\Delta, d \in \Z^+$, and that Alice and Bob have equal-sized point sets.  Specifically, Alice and Bob have point sets $S_A, S_B \subset U$ respectively, and $|S_A| = |S_B| = n$.

We work in the word RAM model, with words of size $\Omega(\log n + \log \Delta)$. All protocols are performed assuming public coins, meaning that the random bits used are shared by Alice and Bob without requiring any communication.  This in particular allows us to assume that all hash functions are shared between Alice and Bob, without worrying about the cost required to communicate them.  There are standard methods for converting protocols with public coins into ones with private coins using minimal additional communication \cite{newman1991private}.  In practice, one can often approximate protocols with public coins by first sharing a small random seed.  We sometimes refer to the number of rounds of communication a protocol uses, which is equal to the number of messages sent.  In particular, a protocol using only one round consists of a single message from Alice to Bob (or vice-versa).

\subsection{Locality Sensitive Hash Functions}
We start with the standard definition of locality sensitive hashing \cite{indyk1998approximate}.

\begin{definition}[LSH] A family $\mathcal{H} \subseteq \{h \mid h\colon U \to V\}$ is a \emph{locality sensitive hash} (LSH) family with respect to $(U,f)$ with parameters $(r_1, r_2, p_1, p_2)$ if $r_1 < r_2$, $p_1 > p_2$ and for any $x, y \in U$,
\begin{itemize}
\item if $f(x, y) \leq r_1$ then $\Pr_{h \sim H} [h(x) = h(y)] \geq p_1$, and
\item if $f(x, y) > r_2$ then $\Pr_{h \sim H} [h(x) = h(y)] \leq p_2$.
\end{itemize}
\end{definition}

A commonly defined meta-parameter for locality sensitive hash functions is $\rho = \log p_1 / \log p_2$, which is the key parameter of in interest in the analysis of many approximate nearest neighbor algorithms, and appears in our analysis as well.  It is known for example that there exist LSH families for the $\ell_1$ metric with $\rho = \Theta(r_1/r_2)$ (\cite{datar2004locality}) and for $\ell_2$ with $\rho = \Theta((r_1/r_2)^2)$ (\cite{andoni2006near}).

For some of our results, we require a slightly stronger formulation.  We have not found this formulation in the literature, although related ideas can be found in for example \cite{charikar2002similarity}, which includes a condition that has the probability that two hash values collide fall with their distance in a natural way.

\begin{definition}[MLSH] A family $\mathcal{H} \subseteq \{h \mid h\colon U \to V\}$ is a \emph{multi-scale locality sensitive hash} (MLSH) family with respect to $(U,f)$ with parameters $(r, p, \alpha)$ if $r > 0$, $0 < p < 1$, $0 < \alpha < 1$, and, for any $x, y \in U$,
\begin{itemize}
\item $\Pr_{h \sim \mathcal{H}} [h(x) = h(y)] \leq p^{\alpha \cdot f(x, y)}$, and
\item if $f(x, y) \leq r$ then $p^{f(x, y)} \leq \Pr_{h \sim \mathcal{H}} [h(x) = h(y)].$
\end{itemize}
\end{definition}

Many standard LSH families are also MLSH families for the right setting of their parameters.  One simple example is the standard LSH family for $(\{0,1\}^d, f_H)$ where $f_H$ is the Hamming distance.  The standard LSH here simply samples a random bit from the input.  The probability of collision between $x, y \in \{0,1\}^d$ is $1 - f_H(x,y) / d$ which is at most $e^{-f_H(x,y) / d}$ and at least $e^{-2f_H(x,y) / d}$ for $f_H(x,y) \leq .79 d$.  We can replace the $d$ in these bounds with any $w \geq d$ by padding our points with 0s until they are $w$-dimensional before sampling choosing a bit to sample.\footnote{Equivalently, and more efficiently, with probability $d/w$ our hash function will sample a random bit, and with probability $1-d/w$ it will be a constant function always equaling $0$.}  This yields the following lemma.
\begin{lemma} \label{lem:hammlsh}
For any $w \geq d$, there exists an MLSH family with respect to $(\{0,1\}^d, f_H)$ with parameters $(.79 w, e^{-2 / w}, 1/2)$.
\end{lemma}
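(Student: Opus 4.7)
The plan is to define the hash family explicitly (as sketched in the paragraph preceding the lemma), compute the collision probability in closed form, and then verify the two MLSH inequalities by elementary calculus.

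First, I would formally take $\mathcal{H} = \{h_i : i \in [w]\}$, where for each index $i$, the map $h_i \colon \{0,1\}^d \to \{0,1\}$ is defined by $h_i(x) = \tilde{x}_i$ and $\tilde{x} \in \{0,1\}^w$ is the string obtained from $x$ by appending $w-d$ zeros. Drawing $h \sim \mathcal{H}$ corresponds to choosing $i$ uniformly from $[w]$. Because the padded strings $\tilde{x}$ and $\tilde{y}$ agree on every position in $\{d+1,\ldots,w\}$ and on $d - f_H(x,y)$ positions within $[d]$, the collision probability is exactly
\[
\Pr_{h \sim \mathcal{H}}[h(x) = h(y)] = 1 - \tfrac{f_H(x,y)}{w}.
\]
Let $t = f_H(x,y)/w \in [0,1]$, so this probability equals $1 - t$. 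With $p = e^{-2/w}$ and $\alpha = 1/2$, the two MLSH inequalities to verify become (a) $1 - t \leq e^{-t}$ for all $t \geq 0$ and (b) $1 - t \geq e^{-2t}$ whenever $t \leq 0.79$.

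Inequality (a) is the textbook bound $1 + x \leq e^{x}$ applied at $x = -t$. For (b), I would define $g(t) = 1 - t - e^{-2t}$ and note $g(0) = 0$, $g'(0) = 1 > 0$, and $g''(t) = -4e^{-2t} < 0$, so $g$ is concave on $[0,\infty)$ and nonnegative on an initial interval $[0, t_0]$. Since a concave function that starts at zero with positive derivative has at most one additional zero, it suffices to check a single numerical value at the right endpoint, namely $g(0.79) = 0.21 - e^{-1.58} > 0$. The main (minor) obstacle is precisely this numerical check, which is the sole reason the radius in the lemma statement is the slightly awkward $0.79w$ rather than something rounder; any constant strictly below the unique positive root of $1 - t = e^{-2t}$ (approximately $0.797$) would do.
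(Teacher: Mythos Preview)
Your proposal is correct and follows essentially the same approach as the paper: the paper's justification (given in the paragraph immediately preceding the lemma) uses exactly the padded bit-sampling family, the closed-form collision probability $1 - f_H(x,y)/w$, and the two elementary inequalities $1-t \le e^{-t}$ and $1-t \ge e^{-2t}$ for $t \le 0.79$. Your concavity argument for the second inequality is a clean way to make rigorous what the paper asserts without proof.
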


Many other simple MLSH families exist.  For example,
inspection of simple random grid hashing and $p$-stable distribution hashing \cite{datar2004locality} yields the following lemmas, whose proofs are in \autoref{app:mlsh}.
\begin{restatable}{lemma}{lemgridmlsh}
\label{lem:gridmlsh}
For any $w > 0$, there exists an MLSH family with respect to $([\Delta]^d, \ell_1)$ with parameters $(.79 w, e^{-2 / w}, 1/2)$.
\end{restatable}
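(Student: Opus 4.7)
The plan is to show that the standard random-grid hash family for $\ell_1$ is already an MLSH with the stated parameters. Specifically, draw an offset vector $s = (s_1,\ldots,s_d) \in [0,w)^d$ uniformly at random and set
$h_s(x) = \bigl(\lfloor (x_1+s_1)/w\rfloor, \ldots, \lfloor (x_d+s_d)/w\rfloor\bigr)$,
so that the $d$ coordinate-wise hash comparisons are independent.

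The first step is the one-dimensional calculation: for $s_i$ uniform on $[0,w)$, the event $\lfloor (x_i+s_i)/w\rfloor = \lfloor (y_i+s_i)/w\rfloor$ fails precisely when $s_i$ lies in an interval of length $\min(|x_i-y_i|, w)$ within $[0,w)$, giving a per-coordinate collision probability of $\max(0, 1-|x_i-y_i|/w)$. By independence across coordinates,
$\Pr_{s}[h_s(x)=h_s(y)] = \prod_{i=1}^{d} \max(0, 1 - |x_i - y_i|/w)$.

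For the MLSH upper bound, I would apply $\max(0,1-z) \leq e^{-z}$ (valid for all $z\geq 0$) factor by factor. The product telescopes into $\exp(-\ell_1(x,y)/w) = (e^{-2/w})^{\ell_1(x,y)/2} = p^{\alpha f(x,y)}$, which is what the definition asks for, and this bound holds unconditionally on $f(x,y)$ (in particular, if some coordinate has $|x_i-y_i|\geq w$, the left side is simply $0$).

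For the lower bound, I would restrict to $f(x,y) = \ell_1(x,y) \leq .79w$. Since each $|x_i-y_i| \leq \ell_1(x,y)$, every coordinate-wise ratio lies in $[0,.79]$, so the factors are nonzero and I can apply the elementary inequality $1-z \geq e^{-2z}$ for $z\in [0,.79]$ term by term to obtain $\Pr_s[h_s(x)=h_s(y)] \geq \exp(-2\ell_1(x,y)/w) = p^{f(x,y)}$, matching the MLSH lower bound at scale $r = .79w$. The only mildly non-trivial step is verifying $1-z \geq e^{-2z}$ on $[0,.79]$, which follows from the expansion $-\ln(1-z) = z + z^2/2 + z^3/3 + \cdots \leq 2z$ in that range (tight near the endpoint, which is what pins down the constant $.79$); this is exactly the same inequality used in the Hamming construction preceding the lemma, so I foresee no real obstacle.
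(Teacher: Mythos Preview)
Your proposal is correct and follows essentially the same approach as the paper: the same randomly shifted grid hash, the same exact collision probability $\prod_i \max(0,1-|x_i-y_i|/w)$, and the same two elementary inequalities $1-z\le e^{-z}$ and $1-z\ge e^{-2z}$ on $[0,.79]$. The only cosmetic difference is that the paper first aggregates the product (via a union-bound lower bound $1-\|x-y\|_1/w$ and an AM--GM upper bound $(1-\|x-y\|_1/(dw))^d$) before passing to exponentials, whereas you apply the exponential bounds coordinate-by-coordinate; your route is marginally cleaner and also dispatches the $|x_i-y_i|\ge w$ case explicitly.
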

\begin{restatable}{lemma}{lempstablemlsh}
\label{lem:pstablemlsh}
For any $w > 0$, there exists an MLSH family with respect to $([\Delta]^d, \ell_2)$ with parameters $(.99w, e^{-2\sqrt{2/\pi}/w}, 1/(4\sqrt{2}))$.
\end{restatable}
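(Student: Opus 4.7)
The plan is to use the $p$-stable Gaussian construction of Datar, Immorlica, Indyk, and Mirrokni. Let $a \in \R^d$ have i.i.d.\ standard Gaussian entries and $b \sim \mathrm{Uniform}[0, w)$, and set $h_{a,b}(x) = \lfloor (a\cdot x + b)/w\rfloor$. By $2$-stability of the Gaussian, $a \cdot (x - y)$ is distributed as $c \cdot Z$ with $Z \sim N(0,1)$ and $c = \|x - y\|_2$; conditional on $a$, the probability over $b$ that the two points land in the same bucket is $\max(0, 1 - |a\cdot (x-y)|/w)$. Taking expectations gives the key collision-probability formula
\[
p(c) \;=\; \E_Z\!\left[\max\!\left(0,\,1 - \tfrac{c|Z|}{w}\right)\right].
\]
The rest of the proof is the analytic verification of the two MLSH inequalities for this $p(c)$.

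For the lower bound $p(c) \ge e^{-2\sqrt{2/\pi}\,c/w}$ on $c \le 0.99 w$, I would use a comparison of derivatives. Both sides equal $1$ at $c = 0$, and a direct computation shows $p'(0) = -\sqrt{2/\pi}/w$, which is exactly \emph{half} of the exponential's slope at the origin. This two-fold slack is the source of the constant $2\sqrt{2/\pi}$ in the MLSH parameters, and I would spend it by writing $\tfrac{d}{dc}\log p(c)$ as an explicit function of $\gamma := c/w$ and showing it stays above $-2\sqrt{2/\pi}/w$ throughout $\gamma \in [0, 0.99]$. The $0.99$ cutoff keeps us safely away from the region near $\gamma = 1$ where the support $[-1/\gamma, 1/\gamma]$ of the integrand shrinks past the bulk of the Gaussian and $p(c)$ drops off faster than the clean exponential.

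For the upper bound $p(c) \le e^{-c/(2\sqrt{\pi}\,w)} = p^{\alpha c}$, I would apply the pointwise inequality $(1-t)_+ \le e^{-t}$ inside the expectation to get
\[
p(c) \;\le\; \E_Z\!\left[e^{-c|Z|/w}\right] \;=\; 2\,e^{c^2/(2w^2)}\bigl(1 - \Phi(c/w)\bigr),
\]
and then show this quantity is at most $e^{-c/(2\sqrt{\pi}\,w)}$. The constant $1/(2\sqrt{\pi})$ is a factor $4\sqrt{2}$ smaller than the lower-bound rate $2\sqrt{2/\pi}$ — this is precisely where the MLSH parameter $\alpha = 1/(4\sqrt{2})$ comes from. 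For small and moderate $\gamma = c/w$ this reduces to matched Taylor expansions; in the Gaussian-tail regime it follows from the standard Mills-ratio bound $1 - \Phi(u) \le \phi(u)/u$.

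The main obstacle is the upper bound in the large-$c$ regime, since the genuine collision tail is only polynomial ($p(c) \sim w/(c\sqrt{2\pi})$ as $c \to \infty$) while the target bound is exponential in $c$. The generous slack built into $\alpha$ is designed so that, combined with the finite diameter of the discretized space $[\Delta]^d$, the bound goes through at the relevant scales; I would handle it by splitting into cases — small $c$ via Taylor expansion, moderate $c$ via the $(1-t)_+ \le e^{-t}$ trick above, and the tail via the Mills ratio — with the constants tuned so that each regime stays below the exponential envelope. Arranging this case split tightly is the technical heart of the proof.
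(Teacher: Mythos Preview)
Your construction is exactly the paper's (the Datar--Immorlica--Indyk--Mirrokni $2$-stable projection), and your plan for the lower bound --- compare $p(c)$ to the exponential via derivatives on $\gamma := c/w \in [0,0.99]$ --- is equivalent in content to the paper's use of $1 - \sqrt{2/\pi}\,\gamma \ge e^{-2\sqrt{2/\pi}\,\gamma}$ on that range.

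The gap you flagged in the upper bound is real, and your proposed patch does not close it. The true collision probability satisfies $p(\gamma) \sim \sqrt{2/\pi}\,\gamma^{-1}$ as $\gamma \to \infty$, while the target $p^{\alpha c} = e^{-\gamma/(2\sqrt\pi)}$ decays exponentially; already at $\gamma = 20$ one has $p(\gamma) \approx 0.02$ versus $e^{-\gamma/(2\sqrt\pi)} \approx 0.0036$. Your $(1-t)_+ \le e^{-t}$ step gives $p(c) \le 2e^{\gamma^2/2}(1-\Phi(\gamma))$, whose Mills-ratio tail is again $\sim \sqrt{2/\pi}\,\gamma^{-1}$, so no case split built on that bound can sit under the exponential envelope for large $\gamma$. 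Appealing to the finite diameter of $[\Delta]^d$ does not rescue the statement either, because the lemma is asserted ``for any $w > 0$'': taking $w$ a small fraction of the least nonzero distance in $[\Delta]^d$ already forces $\gamma$ into the failure regime. The paper's own proof has precisely the same hole --- its final displayed inequality $e^{-\sqrt{2/\pi}\,\gamma} + e^{-1/(2\gamma^2)}\sqrt{2/\pi}\,\gamma \le e^{-\gamma/(2\sqrt\pi)}$ is false for large $\gamma$ (the left-hand side grows linearly in $\gamma$). So the lemma as stated is not quite correct without an additional lower bound on $w$ relative to the diameter, a restriction the paper's application in fact imposes via the requirement $r = 0.99w \ge \min(M,D_2)$.
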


\subsection{Invertible Bloom Lookup Tables}
\label{sec:iblt}

We briefly review the basic properties of IBLTs;  more details can be found in \cite{eppstein2011straggler,gm11}.  An IBLT is a hash table using $q$ hash functions and $m$ cells to store key-value pairs, where the keys and values are assumed to have a fixed-size representation.  (In cases where there are no associated values, IBLTs can be used to just hold keys.)  A key has $q$ associated hash values, with each hash value indexing a cell of the table.  (We assume these cells are distinct;  for example, one can use a partitioned hash table, with each hash function mapping to $m/q$ cells.)  Each cell maintains a count of the number of key-value pairs hashed to it, an XOR of all of the keys hashed to it, an XOR of the values hashed to it, and an XOR of checksums (e.g. fingerprints), one for each key hashed to it;  adding a key-value pair simply updates the values in the $q$ associated cells.  The checksum, obtained using another hash function, is sufficiently large so as to ensure that with high probability, none of the distinct keys' checksums collide.  Deleting a key from an IBLT is similar to adding it, except that now we decrement the counts instead of incrementing them.

We can find all the elements in an IBLT, or \emph{invert} it, using a peeling process, if $m$ is large enough compared to the number of key-value pairs stored.  Whenever a cell in the table has a count of 1, the XOR of the key values in that cell equals the key hashed to that cell, and similarly for the value, so we can recover and then delete them from the table.  Such deletions may yield more cells with a count of 1, allowing the process to continue until no keys remain in the table.  By viewing the IBLT as a random hypergraph with the cells being $m$ vertices and the keys corresponding to hyperedges of cardinality $q$, we can analyze this peeling process;  all key-value pairs are recovered unless the hypergraph has a nonempty 2-core, the probability of which can be directly bounded.  This gives the following theorem.
\begin{theorem}[Theorem 1 of \cite{gm11}] \label{thm:iblt}
There exists a constant $0 < c < 1$ so that an IBLT with $m$ cells and at most $cm$ keys will successfully extract all key-value pairs with probability at least $1 - O(1/\poly(m))$, and the process takes $O(m)$ time.
\end{theorem}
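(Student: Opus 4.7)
The plan is to reduce the analysis to the theory of random $q$-uniform hypergraphs. Identify the $m$ cells of the IBLT with $m$ vertices, and each of the at most $cm$ keys with a hyperedge consisting of the (assumed distinct) $q$ cells that key hashes to. Because the $q$ hash functions are idealized as uniformly random, the resulting hypergraph $H$ is, up to the minor conditioning on distinct cells per key, a uniformly random $q$-uniform hypergraph on $m$ vertices with at most $cm$ edges. Under this identification, a cell with count $1$ corresponds exactly to a degree-$1$ vertex of $H$, and the step that extracts the unique key stored there and decrements its $q-1$ other cells is precisely the hypergraph peeling operation of deleting a hyperedge containing a degree-$1$ vertex. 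Hence the IBLT inverts completely if and only if iterated peeling empties $H$, which in turn is equivalent to $H$ having an empty $2$-core.

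It therefore suffices to show that, for a sufficiently small constant $c > 0$ depending on $q \ge 3$, a uniformly random $q$-uniform hypergraph on $m$ vertices with at most $cm$ edges has an empty $2$-core with probability $1 - O(1/\poly(m))$. The route I would take is a first-moment bound over potential cores: any $2$-core on $s$ vertices has total degree $\ge 2s$, so it must contain at least $\lceil 2s/q\rceil$ hyperedges entirely inside the $s$-set. The expected number of such configurations is at most roughly $\binom{m}{s}\binom{cm}{2s/q}(s/m)^{2s}$. Bounding the binomials by $(em/s)^s$ and $(ecmq/(2s))^{2s/q}$, the bracket simplifies (for $s = \alpha m$) to a quantity of the form $[e^{1+2/q}\alpha^{1-2/q}(cq/2)^{2/q}]^s$; choosing $c$ sufficiently small as a function of $q$ makes this base strictly less than $1$ uniformly over $\alpha \in (0,1]$, and for small $s$ a direct calculation gives the expectation $O(m^{s(2/q-1)}) = O(m^{-s/3})$. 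Summing over $s$ yields the desired $O(1/\poly(m))$ failure probability. An equivalent route uses a subcritical branching process to show that the local neighborhood of any fixed vertex dies out, then a union bound over all $m$ vertices.

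For the running time, maintain a queue of cells whose current count is $1$; initializing it is $O(m)$. Each peeling step pops a cell, reads the unique key stored there in $O(1)$ time, and decrements the counts of the $q - 1$ other cells that key hashes to, enqueuing any cell whose count just dropped to $1$. Since each of the at most $cm$ keys is peeled at most once and each cell is touched $O(q) = O(1)$ times over the whole process, the total work is $O(m)$. The checksums play no role in the peeling itself; they are used only to detect (with high probability, via a separate union bound over at most $cm$ stored keys) situations where a cell's count is $1$ but the XORed key is spurious, an issue that does not arise for a well-formed IBLT of this kind.

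The main obstacle is making the $2$-core first-moment bound hold uniformly over all $s \in \{1,\ldots,m\}$. The constant-size regime $s = O(1)$ is routine once $q \ge 3$ is assumed, but the intermediate regime $s = \Theta(m)$ is where the constant $c$ gets pinned down and where the argument would fail for $q = 2$ at any constant edge density. Controlling the Stirling estimates tightly enough to cover every $s$ simultaneously is the technical heart of the proof; this is also where one sees that $q = 3$ (or slightly larger) suffices and that sharper thresholds $c_q^\ast$ can be extracted from a more refined differential-equation-method analysis if desired.
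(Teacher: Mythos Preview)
Your approach is correct and is precisely the one the paper sketches immediately before stating the theorem: view the IBLT as a random $q$-uniform hypergraph on $m$ vertices with at most $cm$ edges, observe that inversion succeeds iff the $2$-core is empty, and bound the probability of a nonempty $2$-core. The paper does not give its own proof---it simply cites \cite{gm11}---so your write-up is an elaboration of the same idea, and your first-moment calculation and $O(m)$ queue-based peeling argument are the standard ones used in that reference.
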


We can apply IBLTs to standard set reconciliation when Alice and Bob have an upper bound $d$ on the size of their set difference.  Bob constructs an $O(d)$ cell IBLT by adding each of his set elements to it.  (The elements can be treated as keys, no values are needed.)  He then sends it to Alice who deletes each of her set elements from it.  Note that after this process the only elements in the IBLT are from the set difference, as elements in both sets are added but then deleted.  Now cells with a count of 1 or $-1$ may hold a single element, but such a cell may also hold multiple elements;  a cell with a count of 1 may hold two elements from Bob and one from Alice.  The checksum can be used to double-check that a cell with a count of 1 corresponds to a single key.  Alice can therefore extract all the keys from the IBLT using a variation of the peeling process described above, and can reconcile the sets after finding this difference.  We will sometimes refer to this process of recovering a set difference from an IBLT as ``decoding'' it.

For some of our results we require a variation of the IBLT that we call a Robust Invertible Bloom Lookup Table (RIBLT).  The RIBLT differs from an IBLT in several ways (throughout, $n$ will refer to an upperbound on the number of key-value pairs inserted or deleted from the table):
\begin{enumerate}
\item The peeling occurs in a specific ``breadth-first,'' first-come first-served order.  By this we mean that if a cell (vertex) $u$ has a single key (hyperedge) remaining earlier in the process than another cell $v$, then $u$ must be peeled before $v$.
\item RIBLTs are sparser than IBLTs, but still require only a number of cells linear in the number of keys.  In particular, we require that $c < 1 / (q(q-1))$, which results in an underlying hypergraph that is all trees and unicyclic components with high probability. See \cite{eppstein2014wear} for a discussion of and definitions for trees an unicyclic components in the hypergraph.
\item Rather than each cell maintaining an XOR of all its keys and key checksums, it maintains a sum for each.  When we add an key-value pair to the table, we add the binary representation of the key to the key sum in each cell, and similarly for the checksum of the key.  This may require more space per cell to avoid overflow.  If the original universe of keys is $U$, and thus their binary representations take $O(\log |U|)$ bits, then we now need $O(\log(|U|n))$ bits to store each cell's key sum.
\item Similarly, rather than each cell maintaining an XOR of all its values, it maintains a sum.  We restrict ourselves to values from a universe of the form $[\Delta]^d$.  Now the sum of values stored in each cell will actually store a binary representation of a point from $\{-n\Delta,\ldots,n\Delta\}^d$.  To update a cell's value sum, we map the binary representation back to a point from this space, add or subtract the new value from that point, then re-encode the resulting point into binary.  This requires $O(d \log (n \Delta))$ bit per cell.
\item 
\newcommand{\cellcount}{\mathcal{C}}
\newcommand{\cellkey}{\mathcal{K}}
\newcommand{\cellvalue}{\mathcal{V}}
\newcommand{\cellchecksum}{\mathcal{S}}
\newcommand{\checksum}{\mathrm{checksum}}
These changes allows us to decode an RIBLT even when there are duplicate keys in the table.  Now rather than peeling a key from a cell only when a single key is mapped to that cell (and thus the count is $\pm 1$), we also peel when the multiset of keys mapped to a cell are all equal.  Let $\cellcount$ be the cell's count field, $\cellkey$ be its key sum field, $\cellvalue$ be its value sum field, $\cellchecksum$ its sum of checksums field, and $\checksum()$ be our checksum function.  We recognize that cell's contents correspond to copies of the same key when $\cellkey$ and $\cellvalue$ are divisible by $\cellcount$, and $\checksum(\cellkey/\cellcount)=\cellchecksum/\cellcount$.  If this occurs, then with high probability there are $\cellcount$ copies of the same key $\cellkey/\cellcount$ added to that cell.

To peel such a cell, as before we subtract (or add) its cell contents from each cell $\cellkey/\cellcount$ hashes to.  We then extracted $\cellcount$ key-value pairs, where each pair's key is $\cellkey/\cellcount$ and each pair's value is independently determined by the following procedure.  First we take $\cellvalue$ and interpret it as a point in $\R^d$.  We then divide each entry by $\cellcount$ and shift the result into $[0, \Delta]$ by changing entries less than $0$ to 0 and entries greater than $\Delta$ to $\Delta$.  We then take each entry not falling in $[\Delta]$ (those that aren't integers) and randomly round them up or down to the nearest integer, with probability of rounding equal to the fractional remainder. This guarantees that even when multiple pairs are added with the same key but differing values, the extracted pairs all have values from our desired $[\Delta]^d$ space.
\end{enumerate}

These modifications allow us to perform a more detailed analysis of the table's underlying hypergraph than \autoref{thm:iblt} above provides, as when we have noisy values our inversion process may accumulate the discrepancy between values that are ``close'' but not equal as we peel elements away in the IBLT.  We discuss this fully in \autoref{sec:emd}, where we utilize results from \cite{jiang2014parallel} that were used to analyze a parallel version of this peeling process.

\section{Earth Mover's Distance}
\label{sec:emd}

First we formally define the Earth Mover's Distance model.  
\begin{definition}[EMD model] Suppose Alice and Bob have sets of points, $S_A$ and $S_B$ respectively, from a metric space of the form $([\Delta]^d,\ell_q)$, and $|S_A| = |S_B| = n$.  The goal of the \emph{Earth Mover's Distance model} of robust set reconciliation is for Bob to find a point set $S_B' \subset U$, $|S_B'| = n$, such that the earth mover's distance $\EMD(S_A,S_B')$ is minimized while adhering to a given upper bound on communication.
\end{definition}

\begin{definition}[$\EMD$]
Given point sets $X = \{x_1,\ldots,x_n\}$ and $Y = \{y_1,\ldots,y_n\}$ from a metric space $(U,f)$,
\begin{equation*}\EMD(X, Y) = \min_{\mathrm{bijection} \; \pi : [n] \to [n]} \sum_{i=1}^n f(x_i, y_{\pi(i)}).\end{equation*}
\end{definition}

Our protocol will ultimately relate $\EMD(S_A,S_B')$ to $\EMD_k(S_A,S_B)$, which is the minimum achievable earth mover's distance between $S_A$ and $S_B$ after excluding $k$ points from each.
\begin{definition}[$\EMD_k$]
Given point sets $X = \{x_1,\ldots,x_n\}$ and $Y = \{y_1,\ldots,y_n\}$ from a metric space $(U,f)$,
\begin{equation*}\EMD_k(X, Y) =\min_{T \subset [n], |T|=n-k} \left(\min_{\mathrm{injection} \; \pi : T \to [n]} \sum_{i\in T} f(x_i, y_{\pi(i)})\right).\end{equation*}
\begin{equation*}\EMD_k(X, Y) =\min_{X'\subseteq X, Y'\subseteq Y,|X'| = |Y'|= n-k}\EMD(X',Y').\end{equation*}
\end{definition}

The basic idea behind our protocol for the Earth Mover's Distance model is that we use an MLSH family $\mathcal{H}$ to hash Alice and Bob's points at various different resolutions.  We achieve finer resolutions by concatenating more and more hash functions from $\mathcal{H}$, thus partition the $[\Delta]^d$ into progressively smaller regions.  For each of these resolutions, Alice sends Bob an RIBLT consisting of (key, value) pairs where the key is the hash of one of her points and the value is the point itself.  Bob deletes his hashed points from the RIBLTs, and then finds the highest resolution RIBLT which is decodable, and uses the decoded points extracted from it to form his $S_B'$.  Since non-equal points can have the same key (their hash value), they won't fully ``cancel'' when decoding the RIBLT, and thus the decoded points will have some error.  Much of our technical work is bounding this error.

In what follows we assume we have parameters $D_1$ and $D_2$ such that $D_1 \leq \EMD_k(S_A, S_B) \leq D_2$ and $\max_{a \in S_A, b \in S_B} f(a, b) < M.$  In the case where $q = 1$ and we have no prior knowledge about $S_A$ and $S_B$, we can simply use $D_1 = 1, D_2 = n \cdot d \cdot \Delta$ and $M = d \cdot \Delta$.  (Note that if $\EMD_k(S_A,S_B) = 0$, this problem can be solved exactly with a standard set reconciliation protocol, so it sensible to assume that $D_1 \geq 1$.)  We also require an MLSH family for $([\Delta]^d,\ell_q)$ with parameters $(r, p, \alpha)$ such that $r \geq \min(M, D_2)$ and $p \geq e^{-k/(24D_2)}$.\footnote{Note that given $p$ and $D_2$ we must choose an MLSH family with $p$ large enough to meet this condition.  All of our example MLSH families allow for arbitrarily large $p$ values, and in general it is easy to increase $p$ by adding some set of constant functions to the MLSH family.  It may be unintuitive that we would want to increase $p$ in this way, but it is necessary to avoid over-partitioning the space while still allowing us to use enough independent functions from our MLSH family that the probability of different pairs of points colliding is sufficiently independent for our analysis.}  The full protocol appears in Algorithm \ref{alg:emd}.

\begin{algorithm}[h]
\caption{EMD Protocol}
\label{alg:emd}
\begin{itemize}
\item Alice creates $t = \log_2\left(D_2 / D_1 \right)+1$ RIBLTs $T_1,\ldots,T_t$, each with $q \geq 3$ hash functions and $m = 4 q^2 k$ cells.
\item Alice draws $s=\frac{k}{8 D_1 \ln(1/p)}$ hash functions $g_1,\ldots,g_s$ from $\mathcal{H}$. She draws $h$ from a $2$-wise independent class of hash functions with range $\{0,1\}^{\Theta(\log n)}$.
\item For each $i \in \{1,\ldots,t\}$ and $a \in S_A$, Alice forms a key-value pair and inserts it into $T_i$.  The key is $\text{key}_i(a)=h\left(g_1(x),\ldots,g_{2^{i-1} s D_1 / D_2}(a)\right)$, and the value is $a$.
\item Alice sends $T_1,\ldots,T_t$ to Bob.
\item For each $i \in \{1,\ldots,t\}$ and $b \in S_B$, Bob deletes the pair $(\text{key}_i(b),b)$ from $T_i$.  (Note that he knows $g_1,\ldots,g_s$ and $h$ due to public coins.)
\item Bob finds $i^*$, the largest $i$ such that $T_i$ successfully decodes to at most $4 k$ key-value pairs ($2 k$ pairs per party).  Let $X_B$ be the values that $T_{i*}$ decodes from his side, and $X_A$ the values it decodes from Alice's side.  If no $T_i$ successfully decodes Bob reports failure.
\item Bob finds $Y_B$, the subset of $S_B$ matched in the min cost matching between $X_B$ and $S_B$.  He then outputs $S_B' = (S_B \setminus Y_B) \cup X_A$.
\end{itemize}
\end{algorithm}

\begin{theorem} \label{thm:emd}
Algorithm \ref{alg:emd} uses
$O\left(k d\log \left(\Delta n\right) \log\left(D_2 / D_1 \right)\right)$
bits of communication and
$$O\left(t n k / (D_1 \log(1/p)) + dn \log(D_2/D_1) +d n k + n k^2\right)$$
time.  If $\EMD_k(S_A,S_B) \leq D_2$, it reports failure with probability at most $1/8$.  If $\EMD_k(S_A,S_B) \geq D_1$ and it does not report failure, then 
$\EMD(S_A,S_B') \leq O(\alpha^{-1} \log n) \cdot \EMD_k(S_A, S_B)$ with probability at least $3/4$.
Here $t$ is an upper bound on the time to evaluate functions from $\mathcal{H}$.
\end{theorem}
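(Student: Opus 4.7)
The plan is to verify the resource bounds directly, then show that $i^*$ exists with the claimed probability, and finally to analyze the approximation quality by tracking how noise propagates through the RIBLT peeling.

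First I verify communication and time. Each of the $t = O(\log(D_2/D_1))$ RIBLTs has $m = 4q^2 k = O(k)$ cells with dominant per-cell cost $O(d\log(n\Delta))$ bits (the value sum, a vector in $\{-n\Delta,\ldots,n\Delta\}^d$), giving the stated communication. The four terms in the running time correspond to evaluating $\sum_i j_i = O(s) = O(k/(D_1\log(1/p)))$ MLSH functions per point, performing $O(tn)$ RIBLT insertions/deletions with $O(d)$ value work each, peeling with $O(d)$-bit value updates over $O(k)$ entries per RIBLT, and the final bipartite min-cost matching between $X_B$ and $S_B$.

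Next I show that $i^*$ exists with probability at least $7/8$. Fix an optimal $\EMD_k$ matching $\pi^*$ with excluded sets $A_{exc},B_{exc}$ of size $k$, let $D := \EMD_k(S_A,S_B)$, and at resolution $i$ partition the matched pairs into $M_{same}^{(i)}$ and $M_{diff}^{(i)}$ according to whether the two points receive the same key in $T_i$. The number of uncancelled entries in $T_i$ is exactly $2k + 2|M_{diff}^{(i)}|$, and the MLSH lower bound gives
\begin{equation*}
\E\bigl[|M_{diff}^{(i)}|\bigr]
\;\leq\; \sum_{(a,b)} \bigl(1 - p^{j_i f(a,b)}\bigr)
\;\leq\; j_i D \ln(1/p)
\;=\; 2^{i-1}(D/D_2)\cdot(k/8).
\end{equation*}
At $i=1$ this is at most $k/16$, so Markov's inequality together with \autoref{thm:iblt} applied to $T_1$ (whose hypergraph density is $\leq 1/q^2$) yields that $T_1$ decodes to at most $4k$ pairs with probability at least $7/8$, hence $i^*$ is defined.

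The main technical step is to bound $E_A + E_B$, where $E_A$ (resp.~$E_B$) is the total $f$-distance between the values extracted in $X_A$ (resp.~$X_B$) and their true sources $A^{true} := A_{exc} \cup A_{diff}$ (resp.~$B^{true} := B_{exc} \cup B_{diff}$) at resolution $i^*$. The key observation is that each pair $(a,b) \in M_{same}^{(i^*)}$ injects a value-sum perturbation of magnitude $f(a,b)$ into each of the $q$ cells its shared key hashes to, but leaves the count, key sum, and checksum sum untouched; the breadth-first peeling order and the set of extracted keys therefore coincide with what they would be in the noiseless RIBLT, and only the extracted values are perturbed. Combining the MLSH upper bound $\Pr[\text{same key at } i^*] \le p^{j_{i^*}\alpha f(a,b)}$ with the decoding constraint $j_{i^*} D \ln(1/p) = O(k)$ shows that a same-key pair has effective distance at most $O(1/(\alpha j_{i^*}\ln(1/p))) = O(D/(\alpha k))$, beyond which its collision probability is negligible. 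The RIBLT sparseness condition $c < 1/(q(q-1))$ then forces the underlying hypergraph to decompose into trees and unicyclic components of depth $O(\log n)$ with high probability. Adapting the branching-process analysis of \cite{jiang2014parallel}, each extracted value is affected by at most $O(\log n)$ same-key pairs from its peeling subtree, each contributing at most $O(D/(\alpha k))$, so per-extraction error is $O(\alpha^{-1}(\log n)D/k)$ and summing over the $O(k)$ extractions gives $E_A + E_B = O(\alpha^{-1}\log n)\cdot D$ with sufficient probability. The hard part is this cascade argument: one must rule out multiplicative amplification of errors along peeling paths, which is why the sparser trees-and-unicycles structure enforced by the RIBLT is essential rather than the denser structure sufficient for \autoref{thm:iblt}.

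To finish, I construct an explicit matching $S_A \to S_B'$. Let $Q = B^{true}\setminus Y_B$ and $R = Y_B\setminus B^{true} \subseteq B_{same}$ (so $|Q|=|R|$). Send each $a \in A_{same}$ with $\pi^*(a) \notin R$ to $\pi^*(a) \in B_{same}\cap(S_B\setminus Y_B) \subseteq S_B'$; reroute the remaining $|R|$ points of $A_{same}$ into $Q$; and send each $a \in A^{true}$ to its noisy counterpart in $X_A$. The non-rerouted $A_{same}$ edges contribute at most $D$, the $A^{true}\to X_A$ edges cost at most $E_A$, and the rerouted edges contribute at most (part of) $D$ plus $\mathrm{MCMatch}(R,Q)$. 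Applying the triangle inequality for matchings to $X_B \leftrightarrow Y_B$ (cost $\le E_B$, since $Y_B$ is min-cost) and $X_B \leftrightarrow B^{true}$ (cost $\le E_B$ by definition of $E_B$) gives $\mathrm{MCMatch}(R,Q) \le \mathrm{MCMatch}(Y_B,B^{true}) \le 2E_B$. Thus $\EMD(S_A,S_B') \le D + O(E_A + E_B) = O(\alpha^{-1}\log n)\cdot D$, as claimed.
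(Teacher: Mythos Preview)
Your plan has the right skeleton, but the approximation analysis misidentifies where the $O(\alpha^{-1}\log n)$ factor comes from, and this creates a real gap.

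You model the RIBLT at level $i^*$ as a ``noiseless'' table on $A^{\mathrm{true}}=A_{\mathrm{exc}}\cup A_{\mathrm{diff}}$ and $B^{\mathrm{true}}$, perturbed only by value residues from the $M_{\mathrm{same}}$ pairs (whose distances sum to at most $D$). But points in $A^{\mathrm{true}}$ and $B^{\mathrm{true}}$ can themselves share a key at level $i^*$: for instance two excluded points $a\in A_{\mathrm{exc}}$ and $b\in B_{\mathrm{exc}}$ may collide, and such a pair also cancels in the count, key, and checksum fields while leaving a residue $a-b$ whose magnitude can be as large as the diameter of the space. You never bound the total of these extra residues, and once such collisions occur your $E_A,E_B$ are not even well-defined, since then $|X_A|\ne|A^{\mathrm{true}}|$. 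The paper handles exactly this by matching \emph{all} Alice--Bob pairs within each bucket (not just those coming from $\pi^*$), calling the leftovers $Z_A,Z_B$, and proving that the total cost $\mu$ of this maximal in-bucket matching is $O(\alpha^{-1}\log n)\cdot\EMD_k$. That lemma --- a Jensen-plus-MLSH-tail argument over far--far, far--close, and close--close pairings --- is where the logarithmic factor actually lives, and your proposal has no analogue of it.

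The peeling amplification, by contrast, contributes only an $O(1)$ factor: the paper shows that a single random initial error reaches $O(1)$ extractions in expectation via a branching-process coupling, so the total extracted error is $O(\mu)$, not $O(\log n)$ times anything. Your assertion that ``each extracted value is affected by at most $O(\log n)$ same-key pairs from its peeling subtree'' is not justified and generally false: subtree sizes in the sparse regime are $\log^{O(1)} n$, and a single cell can receive residues from up to $n-k$ of the $M_{\mathrm{same}}$ pairs, so neither depth nor subtree size controls that count. A secondary issue: the ``decoding constraint $j_{i^*}D\ln(1/p)=O(k)$'' does not follow from $T_{i^*}$ decoding to at most $4k$ pairs --- that only bounds the random variable $|M_{\mathrm{diff}}^{(i^*)}|$, not its governing parameter. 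The paper instead shows that the specific level $i'=\log_2(2D_2/\EMD_k)$ decodes with probability at least $7/8$, forcing $i^*\ge i'$, and the in-bucket-matching bound is proved for all $i\ge i'$.
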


Before proving this theorem, let us discuss its implications for some settings.  Suppose our metric space is $(\{0,1\}^d,f_H)$, and we have no assumptions on $D_1$, $D_2$, and $M$.  Applying the MLSH family of \autoref{lem:hammlsh} to \autoref{thm:emd} yields the following.\footnote{In order to meet the restriction that $p \geq e^{-k/(24D_2)}$ we choose $w = 48nd/k$ when applying \autoref{lem:hammlsh}.  This is already factored into the stated bounds of the corollary.}
\begin{corollary} \label{cor:ham}
There is a protocol for the Earth Mover's Distance model on $(\{0,1\}^d,f_H)$ using
$O\left(k d \log n \log(dn)\right)$
bits of communication, $O\left(dn^2+nk^2\right)$ time, and successfully computes $S_B'$ such that $\EMD(S_A,S_B') \leq O(\log n) \cdot \EMD_k(S_B, S_A).$  with probability at least $5/8$.
\end{corollary}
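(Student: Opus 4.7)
The plan is to apply Theorem~\ref{thm:emd} directly by plugging in the Hamming MLSH family from Lemma~\ref{lem:hammlsh}, choosing its free parameter $w$ so that the two preconditions of the theorem are satisfied, and then simplifying the resulting expressions.

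Since no structural information about $S_A$ and $S_B$ is assumed, I take the trivial bounds $D_1 = 1$, $D_2 = nd$ (the maximum Hamming distance is $d$, so $\EMD_k(S_A,S_B) \le nd$), and $M = d$. Following the footnote accompanying the corollary, set $w = 48 n d / k$ in Lemma~\ref{lem:hammlsh}; this produces an MLSH family with $r = 0.79 w = \Theta(nd/k)$, $p = e^{-2/w} = e^{-k/(24 nd)}$, and $\alpha = 1/2$. The requirement $p \ge e^{-k/(24 D_2)}$ is then met with equality, and $r \ge d = \min(M, D_2)$ holds whenever $k$ is at most a constant times $n$, which is automatic because $k$ cannot exceed the set size $n$.

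Next I would substitute these values into the bounds of Theorem~\ref{thm:emd}. With $\Delta = O(1)$, the communication $O(k d \log(\Delta n)\log(D_2/D_1))$ becomes $O(k d \log n \log(dn))$ as claimed. For the running time, the Hamming MLSH can be evaluated in $t = O(1)$ time, and $1/\log(1/p) = 24 n d / k$, so the first term equals $n k \cdot 24 n d / k = O(n^2 d)$. The remaining contributions $dn \log(D_2/D_1) + dnk + nk^2$ are bounded by $O(dn^2 + nk^2)$ once we use $k \le n$, yielding the stated total of $O(dn^2 + nk^2)$. The approximation factor is $O(\alpha^{-1} \log n) = O(\log n)$ since $\alpha = 1/2$.

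For the success probability, Theorem~\ref{thm:emd} guarantees failure probability at most $1/8$ and, conditional on no failure, delivers the approximation guarantee with probability at least $3/4$. A union bound gives an overall success probability of at least $3/4 - 1/8 = 5/8$, matching the corollary. The bulk of the proof is therefore bookkeeping; the only subtle point is calibrating $w$ so that the chosen MLSH simultaneously satisfies $r \ge d$ and matches the required lower bound on $p$ — which is precisely the role of the constant $48$ in the choice $w = 48 n d / k$.
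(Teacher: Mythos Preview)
Your proposal is correct and follows exactly the approach the paper uses: instantiate Theorem~\ref{thm:emd} with the Hamming MLSH of Lemma~\ref{lem:hammlsh}, choosing $w = 48nd/k$ (as the paper's footnote indicates) so that $p = e^{-k/(24D_2)}$ and $r \ge d$, and then simplify. Your bookkeeping on the communication, time, approximation factor, and success probability is accurate and in fact spells out details the paper leaves to the reader.
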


Now suppose we are working in $([\Delta]^d,\ell_2)$.  In such a case we can divide the range $[D_1,D_2]$ into $I=O(\log(D_2/D_1))$ intervals $[D_1^{(1)}, D_2^{(1)}], [D_1^{(2)}, D_2^{(2)}], \ldots, [D_1^{(I)}, D_2^{(I)}]$ such that $D_1^{(1)} = D_1$, $D_2^{(I)} = D_2$, and for all $j$, $D_2^{(j)}/D_1^{(j)} = O(1)$ and $D_1^{(j+1)}=D_2^{(j)}$.  We run Algorithm \ref{alg:emd} in parallel for each of these intervals, and have Bob use the output of version for the smallest index interval which did not report failure.  For the $j$th interval, we use the MLSH family of \autoref{lem:pstablemlsh} (with $w = \Theta(\min(M,D^{(j)}_2) + D_2^{(j)}/k)$) yields the following bounds.
\begin{corollary}
There is a protocol for the Earth Mover's Distance model on $([\Delta]^d,\ell_2)$ using
$O\left(k d\log(n\Delta) \log(D_2/D_1)\right)$
bits of communication, $O\left((dnk + nk^2)\log(D_2/D_1)\right)$ time, and successfully computes $S_B'$ such that $\EMD(S_A,S_B') \leq O(\log n) \cdot \EMD_k(S_A, S_B)$  with probability at least $5/8$.
\end{corollary}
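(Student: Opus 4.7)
The plan is to lift \autoref{thm:emd} from a single scale to the whole range $[D_1,D_2]$ by running Algorithm \ref{alg:emd} in parallel on each of the $I = O(\log(D_2/D_1))$ geometric subintervals $[D_1^{(j)},D_2^{(j)}]$ with $D_2^{(j)}/D_1^{(j)} = O(1)$. First I would specify the parameters for each run: instantiate \autoref{lem:pstablemlsh} with $w^{(j)} = \Theta(\min(M, D_2^{(j)}) + D_2^{(j)}/k)$, so that $r = 0.99\, w^{(j)} \geq \min(M, D_2^{(j)})$ (from the first summand) and $p = e^{-2\sqrt{2/\pi}/w^{(j)}} \geq e^{-k/(24 D_2^{(j)})}$ (from the second summand). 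Both preconditions of \autoref{thm:emd} are thus satisfied for each interval, and since $\alpha = 1/(4\sqrt{2})$ is an absolute constant, the resulting approximation factor $O(\alpha^{-1}\log n)$ simplifies to $O(\log n)$.

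Next I would isolate the ``correct'' interval $j^*$ satisfying $D_1^{(j^*)} \leq \EMD_k(S_A,S_B) \leq D_2^{(j^*)}$. Applying \autoref{thm:emd} to this run gives, with probability at least $1 - 1/8 - 1/4 = 5/8$, a non-failing output $S_B'$ with $\EMD(S_A,S_B') \leq O(\log n)\cdot \EMD_k(S_A,S_B)$. Under Bob's rule of selecting the smallest non-failing index, I would argue that any $j \leq j^*$ selected still has $D_1^{(j)} \leq \EMD_k(S_A,S_B)$, so the second (approximation) clause of \autoref{thm:emd} provides the same $O(\log n)$ bound relative to the true $\EMD_k(S_A,S_B)$ rather than relative to $D_2^{(j)}$. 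Thus the good event for $j^*$ dominates the analysis and yields the claimed $5/8$ success probability.

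The resource bounds follow by summing over the $I$ intervals. Per interval, \autoref{thm:emd} gives $O(kd\log(n\Delta)\log(D_2^{(j)}/D_1^{(j)})) = O(kd\log(n\Delta))$ bits of communication, so the parallel runs use $O(kd\log(n\Delta)\log(D_2/D_1))$ bits in total. For time, evaluating a $p$-stable hash takes $t = O(d)$, and the lower bound $w^{(j)} = \Omega(D_2^{(j)}/k)$ forces the $O(tnk/(D_1^{(j)}\log(1/p)))$ term of \autoref{thm:emd} to collapse to $O(dn)$; each interval therefore costs $O(dnk + nk^2)$, and summing yields the stated $O((dnk + nk^2)\log(D_2/D_1))$.

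The main obstacle is carrying the $5/8$ success probability through the parallel composition. The subtle point is that runs with $D_2^{(j)} < \EMD_k(S_A,S_B)$ are \emph{not} governed by the failure-probability clause of \autoref{thm:emd}, so in principle such a run could succeed and be chosen by Bob's rule before $j^*$ is ever consulted. The argument above relies on the fact that the approximation clause of \autoref{thm:emd} requires only $\EMD_k \geq D_1^{(j)}$ and bounds $\EMD(S_A, S_B')$ in terms of the true $\EMD_k$; verifying that this really suffices to absorb any ``wrong-scale'' non-failing runs without additional union-bound losses across the $I$ parallel executions is the technical heart of the corollary.
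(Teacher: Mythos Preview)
Your proposal is correct and follows essentially the same approach as the paper: the paragraph preceding the corollary already specifies the geometric subdivision of $[D_1,D_2]$ into $O(\log(D_2/D_1))$ constant-ratio intervals, the parallel execution of Algorithm~\ref{alg:emd} on each, Bob's rule of taking the smallest non-failing index, and the exact choice $w^{(j)} = \Theta(\min(M,D_2^{(j)}) + D_2^{(j)}/k)$ for the $p$-stable MLSH---you have recovered all of this, together with the correct justification for why each summand in $w^{(j)}$ is needed. Your derivation of the communication and time bounds, including the collapse of the $tnk/(D_1^{(j)}\log(1/p))$ term to $O(dn)$, is likewise what the paper intends. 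The subtlety you flag in your final paragraph (that a run with $D_2^{(j)} < \EMD_k$ might not fail and be selected, and that one must avoid a union bound over the $I$ intervals) is real but is not addressed in the paper beyond the observation you already make: the approximation clause of \autoref{thm:emd} requires only $\EMD_k \geq D_1^{(j)}$, which holds for every $j \leq j^*$.
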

Note that this kind of scaling strategy could be applied in the Hamming distance case too, which would change the running time of \autoref{cor:ham} to $O((dnk + nk^2) \log(nd))$.

We now prove the theorem.  The communication cost of this protocol is immediate.  There are $O(\log (D_2/D_1))$ RIBLTs, each of which has $O(k)$ cells.  Each cells takes $O(d\log (|\Delta|n))$ bits to store the value, and $O(\log n)$ bits to store the key.

The computation bound is similarly straightforward.  Each of the $n$ points is hashed $\frac{k}{8 D_1 \ln(1/p)}$ times, and each item is inserted/deleted from an RIBLT $O(\log(D_2/D_1))$ times.  It takes $O(dn k)$ time for Bob to compute all of the distances between the points in $S_B$ and those in $X_B$, and then $O(nk^2)$ time to use the Hungarian method (\cite{kuhn1955hungarian}) to find the min-cost matching between $X_B$ and $S_B$.\footnote{This assumes the distances fit into a constant number of words so they can be computed on in $O(1)$ time.  If this is not the case the $nk^2$ term in the running time increases by a factor of the number of words it takes to represent a distance.}  The time to attempt decoding of the RIBLTs is dominated by the time spent constructing them.

The proof of the approximation bound comes in several steps.  In order to bound $\EMD(S_A,S_B')$, we find a matching between the points of $S_A$ and $S_B'$, and use the cost of that matching as an upper bound.  This matching consists of three pieces.  For each bucket, where a bucket in this context is the set of points hashing to the same value at level $i^*$, we choose a maximum size matching between Alice and Bob's points within the bucket.  The remaining points are those that we wish to approximately extract from $T_{i^*}$.  The total cost of the matching is then bounded by the cost of the matching within each bucket (the \emph{in-bucket-matching}), which we call $\mu$, plus the minimum cost matching between our desired extracted points plus the distance between what we wished to extract and what we actually did ($X_A$ and $X_B$). 

More formally, we identify $Z_A \subset S_A$ and $Z_B \subset S_B$ such that $|Z_A|=|Z_B|=|X_A|=|X_B|$.  $Z_A$ and $Z_B$ are the points excluded from the in-bucket-matching. Using the definition of $\EMD$, the fact that $\EMD$ obeys the triangle inequality, and the definition of $Y_B$, we find
\begin{align*}
&\EMD(S_B',S_A)=\EMD((S_B\setminus Y_B) \cup X_A, S_A)\\
&\leq \EMD(S_B\setminus Y_B, S_A \setminus Z_A) + \EMD(X_A, Z_A)\\
&\leq \EMD(S_B\setminus Z_B, S_A \setminus Z_A) + \EMD(Y_B, Z_B) + \EMD(X_A, Z_A)\\
&\leq \EMD(S_B\setminus Z_B, S_A \setminus Z_A) + \EMD(Y_B, X_B) + \EMD(X_B,Z_B) + \EMD(X_A, Z_A) \\
&\leq \EMD(S_B\setminus Z_B, S_A \setminus Z_A) + 2\cdot \EMD(X_B,Z_B) + \EMD(X_A, Z_A).
\end{align*}
By definition, $\EMD(S_A \setminus Z_A, S_B\setminus Z_B) \leq \mu$.  We prove later in this section that with probability at least $7/8$, $\EMD(X_A,Z_A) + \EMD(X_B, Z_B) \leq O(1) \cdot \mu$.   The challenge in proving this bound is that the difference between each matched pair is an error which is added to various other cell values in the RIBLT during the peeling process.  We argue that in expectation, each error is only added to a constant number of other cells, thus the expected sum of the errors on all of the extracted points is at most a constant times the cost of the in-bucket matching.  Putting these pieces together, we get that with probability at least $7/8$, $\EMD(S_A,S_B') \leq O(1) \cdot \mu$.  What remains is to find an in-bucket-matching such that $\mu = O(\alpha^{-1} \log n) \cdot \EMD_k(S_A,S_B)$ with probability at least $7/8$.

\begin{figure}
	\centering
 		\includegraphics[width=.3\textwidth]{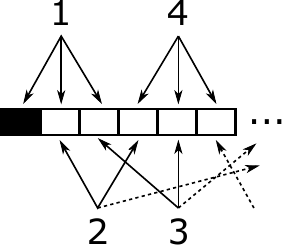}
	\caption{An example of error propagating in an (R)IBLT.  The black cell contains an error in the value, and the numbers correspond to the order in which the keys will be peeled.  The error will be added to each of the depicted cells and all four items will include it in their extracted values.}
\label{fig:error_prop_example}
\end{figure}

\begin{example} \autoref{fig:error_prop_example} shows an example of error propagating in an (R)IBLT.  The first black cell had a point from Bob and a point from Alice with different values but the same key hashed to it.  We consider these two points to be part of $Z_A$ and $Z_B$.  They canceled out all of their entries in the cell except for the difference in their values, which we call the error.  When item 1 is peeled, the point we extract will have its value offset by the error.  The peeling of 1 will add that error to the second and third cells so when items 2 and 3 are peeled their values will also be offset by the error.  The peeling of 2 and 3 will in turn propagate the error to the fourth and fifth cells so item 4 will also be extracted with the error.
\end{example}

When choosing the in-bucket-matching, whenever possible we match points that are part of the $n-k$ closest pairs in the optimal matching between $S_A$ and $S_B$ (the one defining $\EMD_k(S_A,S_B)$).  Matching these pairs costs at most $\EMD_k(S_A,S_B)$.  All that remains is to find matchings among the remaining points within the buckets that has expected cost bounded by $O(\alpha^{-1} \log n) \cdot \EMD_k(S_A,S_B)$.  Informally, we achieve this by upper bounding each points' expected matching cost by the distance from it to the furthest unmatched point in its bucket.  To do this we must first reason about $i^*$.  Going forward, we assume that the $\Theta(\log n)$-bit pairwise independent $h$ did not produce any collisions between differing MLSH vectors.  With high probability, no such collision occurs so checking equality between the hash values is equivalent to checking equality between the MLSH vectors.

We start with a simple lemma, whose proof appears in \autoref{app:emdproofs}.
\begin{restatable}{lemma}{lememdsecond}
The expected number of pairs that hash differently among the $n-k$ closest pairs in the optimal matching between $S_A$ and $S_B$ is at most $\frac{2^{i-4} k}{D_2} \EMD_k(S_A, S_B)$.
\end{restatable}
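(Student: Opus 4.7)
The plan is a direct calculation that hinges on two ingredients: counting how many MLSH functions are concatenated to form the level-$i$ key, and applying the MLSH lower bound on collision probability to each of the $n-k$ closest pairs in the optimal matching.

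First I would unpack the algorithm's parameters. At level $i$ the key is $h$ applied to the tuple $(g_1(a),\ldots,g_N(a))$ with $N = 2^{i-1} s D_1/D_2$, and substituting $s = k/(8 D_1 \ln(1/p))$ gives $N = 2^{i-4} k / (D_2 \ln(1/p))$. Under the standing assumption that $h$ has no spurious collisions between distinct MLSH vectors, two points hash the same at level $i$ if and only if their full $N$-tuples of $g_l$-values agree, so it suffices to bound the probability of disagreement of these tuples.

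Next, fix a pair $(a,b)$ among the $n-k$ closest pairs from the optimal matching defining $\EMD_k(S_A,S_B)$. Since this pair contributes at most $\EMD_k(S_A,S_B) \leq D_2$ to the matching sum, we have $f(a,b) \leq D_2$; combined with the theorem's hypothesis $r \geq \min(M,D_2)$ and the fact that $f(a,b)\leq M$, we get $f(a,b) \leq r$, so the MLSH lower bound applies and yields $\Pr_{g\sim\mathcal{H}}[g(a) = g(b)] \geq p^{f(a,b)}$. By independence of the $g_l$'s, the probability that all $N$ coordinates agree is at least $p^{N f(a,b)}$, so the probability the tuples disagree is at most
\[1 - p^{N f(a,b)} \leq N f(a,b)\ln(1/p),\]
using the elementary inequality $1 - e^{-x} \leq x$ with $x = Nf(a,b)\ln(1/p) \geq 0$.

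Finally, I would sum this bound over the $n-k$ pairs of the optimal matching. By linearity of expectation and the fact that these distances sum to exactly $\EMD_k(S_A,S_B)$, the expected number of differently-hashing pairs is at most
\[N \ln(1/p) \cdot \EMD_k(S_A,S_B) = \frac{2^{i-4}k}{D_2}\EMD_k(S_A,S_B),\]
as claimed. There is no real obstacle: the only mildly delicate point is verifying the MLSH range hypothesis $f(a,b) \leq r$, which is handled cleanly by the theorem's assumptions, after which everything reduces to substitution and the $1-e^{-x} \leq x$ estimate.
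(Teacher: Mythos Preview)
Your proposal is correct and follows essentially the same route as the paper: the paper first isolates the per-pair bound $\Pr[\text{pair at distance }x\text{ hashes differently}]\le \frac{2^{i-4}k}{D_2}x$ via exactly your computation (count of concatenated hashes, MLSH lower bound, and $1-e^{-x}\le x$), then sums over the $n-k$ pairs by linearity of expectation. Your write-up in fact fills in a step the paper leaves implicit, namely the verification that $f(a,b)\le r$ so that the MLSH lower bound is applicable.
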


We use this lemma to bound the level at which Bob can successfully decode the RIBLT.  Once the number of pairs hashing differently is at most $k$, the RIBLT decodes successfully with high probability, so we choose $i' = \log_2\left(\frac{2D_2}{ \EMD_k}\right)$ so that the expected number of differing pairs is at most
\begin{equation*}\frac{2^{i'-4} k}{D_2} \EMD_k(S_A,S_B) = k/8.\end{equation*}
We can then use Markov's inequality to say that with probability at least $7/8$ we can decode $T_{i'}$, thus $i^* \geq i'$.  Now that we have a bound on $i^*$, we can turn to bounding $\mu$.

\begin{restatable}{lemma}{leminbucketmatch} \label{lem:inbucketmatch}
For a level $i \geq i'$, the expected value of $\mu$, the cost of the best in-bucket-matching, is $O(\alpha^{-1}\log n) \cdot \EMD_k(S_A,S_B)$.
\end{restatable}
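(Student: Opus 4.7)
The plan is to exhibit a specific max-size in-bucket matching $M$ at level $i$ and upper-bound its expected cost. Let $\pi^*$ be an optimal matching attaining $\EMD_k(S_A,S_B)$, let $S_A^{cl}\subseteq S_A$ denote the $n-k$ Alice points matched by $\pi^*$, and call each $(a,\pi^*(a))$ a \emph{close pair}. I would construct $M$ in two stages: (i) include $(a,\pi^*(a))$ for every close pair whose two endpoints land in the same bucket, and (ii) in each bucket, extend $M$ to a max-size matching by pairing the leftover Alice points $U_A$ and Bob points $U_B$ arbitrarily. The cost of stage (i) is at most $\sum_{a\in S_A^{cl}} f(a,\pi^*(a))=\EMD_k(S_A,S_B)$, so it remains to bound the expected cost $\mu_{\mathrm{rem}}$ of stage (ii) by $O(\alpha^{-1}\log n)\cdot \EMD_k(S_A,S_B)$.

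The key quantity is the maximum in-bucket distance $D_a:=\max\{f(a,b):b\in S_B,\,\mathrm{bucket}(b)=\mathrm{bucket}(a)\}$ (with $D_a=0$ if no such $b$ exists). Since each stage-(ii) edge $(a,\sigma(a))$ satisfies $f(a,\sigma(a))\le D_a$, we have $\mu_{\mathrm{rem}}\le \sum_{a\in U_A} D_a$. Using independence of $g_1,\ldots,g_{k_i}$ together with the MLSH upper bound, a union bound gives
\[\Pr[D_a>t]\le \sum_{b:\,f(a,b)>t} p^{\alpha f(a,b)k_i}\le n\cdot e^{-\alpha t k_i\ln(1/p)},\]
and integrating $\min(1,\cdot)$ yields $\E[D_a]=O(\log n/(\alpha k_i\ln(1/p)))$. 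For any $i\ge i'$, the protocol's choice $k_{i'}\ln(1/p)=k/(8\EMD_k(S_A,S_B))$ forces $k_i\ln(1/p)\ge k/(8\EMD_k(S_A,S_B))$, so $\E[D_a]=O(\EMD_k(S_A,S_B)\log n/(\alpha k))$.

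To finish, I would split $U_A$ into (a) the at most $k$ outliers in $S_A\setminus S_A^{cl}$, which always belong to $U_A$, and (b) close-paired points $a\in S_A^{cl}$ whose partner $\pi^*(a)$ lands in a different bucket. Summing $\E[D_a]$ over group (a) gives $k\cdot O(\EMD_k(S_A,S_B)\log n/(\alpha k))=O(\alpha^{-1}\log n)\cdot \EMD_k(S_A,S_B)$. For group (b), since $f(a,\pi^*(a))\le \EMD_k(S_A,S_B)\le D_2\le r$, the MLSH lower bound yields $\Pr[\mathrm{bucket}(a)\neq\mathrm{bucket}(\pi^*(a))]\le 1-p^{f(a,\pi^*(a))k_i}\le f(a,\pi^*(a))\,k_i\ln(1/p)$. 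Combining with the $D_a$ estimate produces $\E[\mathbf{1}_{\mathrm{disagree}}\cdot D_a]=O(f(a,\pi^*(a))\log n/\alpha)$, and summing over close pairs gives $\sum_a O(f(a,\pi^*(a))\log n/\alpha)=O(\alpha^{-1}\log n)\cdot \EMD_k(S_A,S_B)$, as required.

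The main obstacle is the factorization in group (b): both $\mathbf{1}_{\mathrm{disagree}}$ and $D_a$ involve the shared hash randomness at $a$, so one cannot naively write $\E[\mathbf{1}_{\mathrm{disagree}}\cdot D_a]$ as $\Pr[\mathrm{disagree}]\cdot \E[D_a]$. The resolution is to work conditional on the tuple $(g_j(a))_{j=1}^{k_i}$: the MLSH collision-probability bounds invoked for both $\Pr[\mathrm{disagree}]$ and $\Pr[D_a>t]$ are pairwise, and hold uniformly in the value of $g_j(a)$, so after conditioning the disagreement event (determined by $(g_j(\pi^*(a)))_j$) and $D_a$ (determined by $(g_j(b))_{b\neq \pi^*(a)}$) are analyzed via separate applications of the MLSH bound and can be multiplied up to constants. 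Verifying that neither estimate degrades under this conditioning is the one technical wrinkle of the proof.
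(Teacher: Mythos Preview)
Your overall strategy---constructing the in-bucket matching by first taking close pairs that share a bucket and then bounding the residual cost via $\sum_{a\in U_A} D_a$ with a tail-integration estimate $\E[D_a]=O(\log n/(\alpha k_i\ln(1/p)))$---is sound and in fact somewhat cleaner than the paper's three-case split (far--far, far--close, close--close) with Jensen arguments. If the correlation issue you flag were handled correctly, your argument would go through, and it would avoid the need for the close-to-close case entirely since your $D_a$ ranges over all Bob points in the bucket rather than only unmatched ones.

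However, your proposed resolution of that correlation---condition on $(g_j(a))_j$ and claim the two events become independent with the MLSH bounds intact---does not work. Two things fail. First, the MLSH definition only bounds the \emph{marginal} collision probability $\Pr[g(x)=g(y)]$; it says nothing about $\Pr[g(y)=v\mid g(x)=v]$ for a specific $v$, and simple hash families show this conditional probability can exceed the marginal. Second, even granting pointwise bounds, after conditioning on $(g_j(a))_j$ the values $g_j(\pi^*(a))$ and $g_j(b)$ for other $b$ are still functions of the \emph{same} random function $g_j$, and nothing in the MLSH framework makes them conditionally independent. So the factorization $\E[\mathbf{1}_{\mathrm{disagree}}\cdot D_a]\lesssim \Pr[\mathrm{disagree}]\cdot\E[D_a]$ is not justified by your argument.

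The paper resolves this with a different device: separate lemmas (\autoref{lem:condition_on_one} and \autoref{lem:condition_on_two}) showing that if $(X_j,Y_j)_{j\le k_i}$ are i.i.d.\ \emph{pairs} with $\Pr[X_j]=p_{\mathrm{col}}$, then $\Pr[\cap_j X_j\mid \cup_j Y_j]\le p_{\mathrm{col}}^{\,k_i-1}$ (and $p_{\mathrm{col}}^{\,k_i-2}$ when conditioning on two mismatches). This exploits independence \emph{across} hash indices $j$ rather than across spatial points, and makes no assumption about the joint law of $(X_j,Y_j)$ for a fixed $j$. Plugging this into your framework gives $\Pr[D_a>t\mid\mathrm{disagree}]\le n\cdot p^{\alpha t(k_i-1)}$, which (using the protocol's guarantee that $k_i\ge 3$) preserves your $\E[D_a]$ estimate up to constants and recovers the desired bound.
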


To prove this, whenever a pair from the optimal matching (which makes up $\EMD_k(S_A,S_B)$) appear in the same bucket, we match them to each other, which contributes a total cost of at most $\EMD_k(S_A,S_B)$.  For the remaining points, we upper bound their matching cost by the maximum distance from them to every point from the other party in their bucket that is not paired with its optimal match.  In order to bound this last part, we exploit the fact that, because we are using sufficiently many MLSHs, conditioning on two points not falling into the same bucket has little impact on of whether one of those points falls into the same bucket as some specific other point, such as its optimal match.  The full proof is in \autoref{app:emdproofs}.

In the event that every matched pair in the in-bucket-matching is in fact the same point, then they would exactly cancel out and $T_{i^*}$ would be equivalent to if we only added the points from $Z_A$ and $Z_B$.  If $Z_A$ and $Z_B$ also have no duplicate keys, then the RIBLT peeling procedure would be identical to the standard IBLT peeling procedure and we would recover $Z_A$ and $Z_B$ with no error.  However, since in general the points will not be equal, when their keys cancel out, their values will leave behind some error, and when the RIBLT undergoes the peeling procedure, this error may be added to various other cells of the RIBLT.  Additionally, RIBLTs introduce error when extracting key-value pairs that have the same keys but different values, as their values are averaged (and then randomly rounded).

Let $Q_A$ and $Q_B$ be what we would recover from $T_i^*$ if the points in $Z_A$ and $Z_B$ all had different keys, and thus no averaging or rounding need to occur in their extraction.  For now we will bound the error without this averaging and rounding, then come back to it.  

We argue that with constant probability, the average number of cells a given error is added to is $O(1)$.  Since the sum of these errors is $\mu$, this implies that $\EMD(Q_A,Z_A)+\EMD(Q_B,Z_B) \leq O(1) \cdot \mu$. Note that the RIBLT does not exactly extract the value in $Z_A$ or $Z_B$ plus error, even in the case of $Q_A$ and $Q_B$ because it floors/ceilings the values back into $[0,\Delta]^d$, but this only decreases $\EMD(Q_A,Z_A)+\EMD(Q_B,Z_B)$ since $Z_A,Z_B \in [\Delta]^d$.

As discussed in \autoref{sec:iblt}, we can view the (R)IBLT peeling procedure as the process of peeling vertices of degree one from a random hypergraph.  In particular, this random hypergraph is $G_{m, cm}^q$, an $m$ vertex hypergraph with $cm$ $q$-regular hyperedges drawn uniformly at random from the $\binom{m}{q}$ possible $q$-regular hyperedges.  We model our problem as having a single random vertex initially have an ``error'', and then whenever we peel a vertex $v$, we add its error count $\EC_v$ to every adjacent vertex.  We then argue that the expected final sum of the $\EC_v$ values is $O(1)$, implying that the error only contributed to an expected constant number of extracted points as desired.  In what follows, we assume $q = O(1)$.  Since Algorithm \ref{alg:emd} required that $T_{i^*}$ decodes to at most $4k$ points and has $4q^2k$ cells, we have that $c < 1 / (q(q-1))$.  Thus, the following lemma gives us that with constant probability, $\EMD(Q_A,Z_A)+\EMD(Q_B,Z_B) \leq O(1) \cdot \mu$.

\begin{restatable}{lemma}{thmpeel} \label{thm:peel}
For $c < 1 / (q(q-1))$, after performing breadth first peeling of $G_{m,cm}^q$ we have with probability at least 7/8, $\sum_{v=1}^m {\EC_v} = O(1).$
\end{restatable}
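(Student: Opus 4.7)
The plan is to show that when $c < 1/(q(q-1))$ the random hypergraph $G_{m,cm}^q$ is subcritical, so the component $C(v_0)$ containing the initial error vertex $v_0$ has expected constant size, and the total accumulated error is bounded by a small multiple of that size. Throughout I would condition on the high-probability event (following from \cite{eppstein2014wear}) that every connected component of $G_{m,cm}^q$ is either a hypertree or a unicyclic component.

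First, error never escapes $C(v_0)$: each propagation step moves error from a peeled vertex $u$ along $u$'s unique remaining hyperedge to the other $q-1$ vertices in that hyperedge, so every vertex with $\EC_v > 0$ lies in $C(v_0)$. Next, I would interpret $\EC_v$ as a count of propagation paths from $v_0$ to $v$: sequences of peeled vertices $v_0 = w_0, w_1, \ldots, w_k = v$ in which each $w_{i+1}$ belongs to the terminal hyperedge used when $w_i$ was peeled. Because a peeled hyperedge is removed, the hyperedges along any single path are pairwise distinct, so each propagation path corresponds to a hyperpath in $C(v_0)$. In a hypertree any two vertices are joined by a unique hyperpath, giving $\EC_v \leq 1$, and in a unicyclic component there are at most two such hyperpaths, giving $\EC_v \leq 2$. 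This yields the deterministic bound $\sum_v \EC_v \leq 2|C(v_0)|$ on the structural event.

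Second, I would bound $\mathbb{E}[|C(v_0)|]$ by coupling BFS exploration of $C(v_0)$ with a subcritical Galton-Watson branching process. In $G_{m,cm}^q$, a vertex has expected degree $cq$, and each incident hyperedge contributes at most $q-1$ previously unexplored vertices, so the expected branching factor is at most $cq(q-1) < 1$. The standard geometric-series bound then gives $\mathbb{E}[|C(v_0)|] \leq 1/(1 - cq(q-1)) = O(1)$. Applying Markov's inequality and combining with the structural event yields $\sum_v \EC_v = O(1)$ with probability at least $7/8$.

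The main obstacle is making the ``propagation paths correspond to hyperpaths'' identification fully rigorous, particularly in the unicyclic case where one must verify that no combination of peeling orders can produce more than two distinct hyperpaths between $v_0$ and $v$ despite the presence of a cycle. A secondary point is that vertex degrees in $G_{m,cm}^q$ are weakly dependent, since each hyperedge contributes $q$ correlated degree-increments, so the coupling with the Galton-Watson process requires a slight over-count or a Poisson approximation in the spirit of \cite{jiang2014parallel} to cleanly dominate the BFS exploration.
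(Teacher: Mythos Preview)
Your approach is correct and genuinely more elementary than the paper's. Both routes rely on the high-probability event that every component of the hypergraph is a tree or unicyclic, but you then use the deterministic bound $\sum_v \EC_v \le 2|C(v_0)|$ and finish with the standard subcritical estimate $\E[|C(v_0)|]\le 1/(1-cq(q-1))$ plus Markov. The paper instead bounds each $\E[\EC_v]$ by $\frac{1}{m}\E[V_{v,L_v}]$, where $L_v$ is the BFS round in which $v$ is peeled, and then controls $\E[V_{v,L_v}]$ through a fairly involved analysis of the idealized Poisson branching process: doubly-exponential decay of the survival probability $\Pr[\survive_{v,j}]$ (imported from \cite{jiang2014parallel}) combined with a separate lemma bounding the conditioned neighborhood size $\E[V_{v,j}\mid\survive_{v,j-1}]=O((q-1)^j)$. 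Since the paper's inequality $\E[\EC_v]\le\frac{1}{m}\E[V_{v,L_v}]$ already uses the tree assumption (and hence implicitly $\EC_v\le 1$), your short-cut sacrifices nothing; it simply replaces $V_{v,L_v}$ by the cruder $|C(v)|$, which is still $O(1)$ in expectation.

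One refinement to your stated argument: the justification ``at most two hyperpaths'' is imprecise, because distinct-hyperedge Berge paths between two vertices of a hypertree need \emph{not} be unique (three hyperedges through a common vertex already give multiple such paths). A clean way to make your bound rigorous is to form the directed graph $D$ whose arcs go from each hyperedge's peeling vertex to its other $q-1$ vertices; propagation paths are exactly simple directed paths in $D$ (the intermediate vertices are peeled at strictly increasing times, hence distinct, and each is the unique peeler of its outgoing hyperedge). The underlying undirected multigraph of $D$ is connected with $|C(v_0)|-1$ edges in the tree case and $|C(v_0)|$ edges in the unicyclic case (using $q\ge 3$ so that unicyclic components have empty $2$-core and are fully peeled), hence is itself a tree or unicyclic graph, giving at most one or two simple paths between any pair as you claim. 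Your secondary point about weak dependence in $G_{m,cm}^q$ is routine and is handled in the paper by passing to $G_c^q$ via monotonicity.
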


The structure of our proof of this lemma is similar that of \cite{jiang2014parallel}, in that we relate the peeling process to an idealized branching process.  We argue that the lemma holds in this idealized branching process via careful analysis, and then argue that the branching process is sufficiently close to our peeling process that that lemma still holds there.  The full details of our proof are presented in \autoref{app:emdproofs}.

Now with our bound on $\EMD(Q_A,Z_A)+\EMD(Q_B,Z_B)$, we complete our proof of \autoref{thm:emd} with the following lemma.
\begin{restatable}{lemma}{lemaveragingerror} \label{lem:averagingerror}
With probability at least $3/4$, $$\EMD(Z_A,X_A)+\EMD(Z_B,X_B) = O(\alpha^{-1}\log n) \cdot \EMD_k(S_A,S_B).$$
\end{restatable}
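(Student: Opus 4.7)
The plan is to apply the triangle inequality $\EMD(Z_A,X_A) \leq \EMD(Z_A,Q_A) + \EMD(Q_A,X_A)$, and the same on the $B$ side, then handle the two pieces separately. Lemma~\ref{thm:peel} (together with the discussion preceding it) gives $\EMD(Z_A,Q_A)+\EMD(Z_B,Q_B) = O(\mu)$ with probability at least $7/8$, and Markov on Lemma~\ref{lem:inbucketmatch} gives $\mu = O(\alpha^{-1}\log n)\cdot\EMD_k$ with probability at least $7/8$, so the remaining task is to show $\EMD(Q_A,X_A)+\EMD(Q_B,X_B) = O(\alpha^{-1}\log n)\cdot\EMD_k$ with constant probability and then union-bound the handful of events.

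I treat the $A$ side; $B$ is symmetric. For each bucket $k$ at level $i^*$, let $v_1,\ldots,v_{c_k}$ be the $Q_A$ values in that bucket and write $v_i = z_i + e_i$ with $z_i \in Z_A^{(k)}$ and $e_i$ the peeling error; the RIBLT extraction produces $c_k$ copies of $\tilde{v}_k$, the coordinatewise randomly rounded, $[0,\Delta]$-clipped mean $\bar{v}_k$. Matching each $v_i$ to one such copy gives $\EMD(Q_A,X_A) \leq \sum_k \sum_i \|v_i - \tilde{v}_k\|$. A key first step is to absorb the rounding: because every RIBLT cell sum is integer-valued, each $v_i^{(\ell)}\in\Z$, and a direct case analysis (no integer falls strictly between $\lfloor\bar{v}_k^{(\ell)}\rfloor$ and $\lceil\bar{v}_k^{(\ell)}\rceil$, and clipping to $[0,\Delta]$ only helps) gives $\E_R[|v_i^{(\ell)} - \tilde{v}_k^{(\ell)}|] \leq |v_i^{(\ell)} - \bar{v}_k^{(\ell)}|$. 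In the $\ell_1$ case this yields $\E_R[\EMD(Q_A,X_A)] \leq \sum_k \sum_i \|v_i - \bar{v}_k\|_1$; for general $\ell_q$ a residual rounding term bounded by the $\ell_q$-diameter of each bucket appears and is absorbed by the argument below. Writing $\|v_i - \bar{v}_k\| \leq \|z_i - \bar{z}_k\| + \|e_i - \bar{e}_k\|$, summing, and using $\sum_k\sum_i\|e_i-\bar{e}_k\| \leq 2\EMD(Q_A,Z_A) = O(\mu)$ reduces the problem to bounding $\sum_k \sum_{i\in Z_A^{(k)}} \|z_i-\bar{z}_k\|$.

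The main obstacle is this remaining sum, which by Jensen's inequality satisfies $\sum_i\|z_i-\bar{z}_k\| \leq \tfrac{1}{c_k}\sum_{i,j\in Z_A^{(k)}}\|z_i-z_j\|$. I control it by a pairwise MLSH tail argument on bucket diameters. For any $z,z' \in S_A$ at distance $D$ and any level $i \geq i' = \log_2(2D_2/\EMD_k)$, the probability they collide at level $i$ is at most $p^{\alpha D m_i} \leq e^{-\alpha D k/(8\EMD_k)}$, using $m_i \geq m_{i'} = k/(8\EMD_k\ln(1/p))$. Choosing $D_0 = C\alpha^{-1}\log n\cdot \EMD_k/k$ with a sufficiently large constant $C$ makes this probability at most $1/n^{10}$, and union-bounding over the $\leq n^2$ pairs in $S_A$ shows that with probability at least $1 - 1/n^8$ no two $S_A$-points at distance more than $D_0$ share a bucket at level $i^*$. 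Consequently every same-bucket pair of $Z_A$-points has distance at most $D_0$, and
\[
\sum_k \tfrac{1}{c_k}\sum_{i\neq j\in Z_A^{(k)}}\|z_i-z_j\| \leq \sum_k \tfrac{c_k(c_k-1)}{c_k}\,D_0 \leq |Z_A|\cdot D_0 \leq 2k\cdot D_0 = O(\alpha^{-1}\log n)\cdot \EMD_k.
\]
Applying the symmetric argument on the $B$ side, a final Markov to pass from the expected-rounding bound to a high-probability one, and a union bound over the constantly many events together yield the lemma at probability at least $3/4$.
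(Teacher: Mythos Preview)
Your argument is correct and takes a genuinely different route from the paper's.

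Both proofs share the same overall decomposition: split $\EMD(Z_A,X_A)$ via $Q_A$, absorb the peeling-error piece $\EMD(Z_A,Q_A)=O(\mu)$ by \autoref{thm:peel} and \autoref{lem:inbucketmatch}, and reduce $\EMD(Q_A,X_A)$ to a sum of within-bucket distances among the $Z_A$ (or $Q_A$) points. The divergence is in how that last sum is controlled. The paper reruns the machinery of \autoref{lem:inbucketmatch}: for each $z\in Z_A$ it bounds the expected distance to the \emph{furthest} same-bucket $Z_A$ point using the same $\phi_j\ln(1/\phi_j)$/Jensen trick, obtaining an expectation of $O(\alpha^{-1}\log n)\cdot\EMD_k$ and then applying Markov. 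You instead exploit the MLSH upper tail directly: at level $i'$ two points at distance $D$ collide with probability at most $e^{-\alpha Dk/(8\EMD_k)}$, so with $D_0=\Theta(\alpha^{-1}\log n\cdot\EMD_k/k)$ a union bound over all $\binom{n}{2}$ pairs in $S_A$ gives a uniform bucket-diameter bound with probability $1-1/\poly(n)$, and multiplying by $|Z_A|\le 2k$ recovers the same total. Your route is shorter, avoids re-invoking the somewhat intricate case analysis of \autoref{lem:inbucketmatch}, and yields a high-probability statement directly; the paper's route keeps everything in expectation and meshes more uniformly with the rest of the EMD analysis.

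Two small points worth tightening. First, when you pass from ``diameter bound at any level $i\ge i'$'' to ``diameter bound at the (random) level $i^*$'', you are implicitly using that collision at level $i^*$ implies collision at level $i'$ because the MLSH vectors are nested prefixes; this deserves one sentence, since otherwise conditioning on $i^*$ could in principle bias the hash randomness. Second, your treatment of the $\ell_q$ rounding residual (``bounded by the $\ell_q$-diameter of each bucket'') is stated but not argued; the per-coordinate maxima can be realized by different pairs, so bounding the $\ell_q$ norm of the rounding vector by the bucket diameter is not immediate. The paper sidesteps this by reducing to the $\{0,1\}^d$ case and showing $\E\|x_1-r\|_q\le B=\max_i\|x_1-x_i\|_q$ directly; you could either mimic that reduction or restrict your claim to $\ell_1$, where your coordinatewise identity $\E_R|v_i^{(\ell)}-\tilde v_k^{(\ell)}|=|v_i^{(\ell)}-\bar v_k^{(\ell)}|$ is clean.
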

The proof of this (which appears in \autoref{app:emdproofs}) follows similarly to that of \autoref{lem:inbucketmatch}.  We bound the distance between a point in $Q_A$ and the rounded average of the other points in $Q_A$ falling into its bucket, by the maximum distance between $Q_A$ and those other points from $Q_A$ in its bucket.

\section{Gap Guarantee}
\label{sec:gap}

\begin{definition}[Gap Guarantee model]
The \emph{Gap Guarantee model} of robust set reconciliation is defined for a metric space $(U,f)$ and two distance parameters $0<r_1<r_2$ as follows.  Alice and Bob have sets of points $S_A, S_B \subset U$ respectively. $|S_A| \leq n$ and $|S_B| \leq n$.  There exist subsets $C_A \subset S_A$ and $C_B \subset S_B$ such that $|C_A| \geq n - k$, $|C_B| \geq n - k$, 
$$\forall a \in C_A, \min_{b\in S_B} f(a, b) \leq r_1,$$
and
$$\forall b \in C_B, \min_{a\in S_A} f(a, b) \leq r_1.$$
The goal of the model is to minimize communication while allowing Bob to compute a set $S_B' = S_B \cup T_A$ where $T_A \subset S_A$ such that
$\forall a \in S_A, \exists b \in S_B' \text{ s.t. } f(a,b) \leq r_2.$
\end{definition}

Going forward, we refer to $a \in S_A$ and $b \in S_B$ as \emph{close} if $f(a,b) \leq r_1$ and \emph{far} if $f(a,b) \geq r_2$.  We also describe $C_A$ and $C_B$ as Alice and Bob's close points, and $T_A$ as Alice's far points.

\subsection{Our Protocol}
\label{sec:genericalg}
Our scheme hinges upon the application of a good locality sensitive hash function for our data.  Let $\mathcal{H}$ be an LSH family for our metric space with parameters $(r_1,r_2,p_1,p_2)$.  We assume that $p_2 \geq 1/2$.  The scheme operates as follows.

Each party constructs a \emph{key} for each of their elements.  A key is a vector of $h = \Theta(\log n)$ hashes.  Each of these hashes is $O(\log n)$ bits and is the evaluation of a pairwise independent hash function on a tuple of $m = \log_{p_2}(1/2)$ LSH values.
The point of the keys is that two far points have very different keys, and two close points have keys that match in all or almost all of their entries.

More concretely, we start by sampling $hm$ functions from $\mathcal{H}$.  To construct an element $x$'s key, we evaluate all of these $hm$ functions on it.  We partition these evaluations into $h$ batches of $m$.  We apply a pairwise independent hash function to each batch, and then our key is the vector of these hashes.  We then interpret each key as a set of (hash, vector index) pairs.  Alice and Bob then engage in a multisets of sets reconciliation protocol (\cite{mitzenmacher2017reconciling}) so that Alice recovers the multiset of Bob's keys.

Alice then compares their keys.  If one of her keys differs in sufficiently many of its entries  from every one of Bob's keys (the exact number depends on the parameters of the LSH and is detailed in \autoref{app:gapguarantee}), then she transmits every one of her elements that matches that key to Bob.  This protocol yields the following bounds.

\begin{restatable}{theorem}{thmgapguarantee}
\label{thm:gapguarantee}
Given a locality sensitive hash function with $\rho \leq 1 - \epsilon$ for some constant $\epsilon > 0$, there exists a protocol for the Gap Guarantee model using 4 rounds of 
\begin{equation*}O\bigg((k + \rho n) \log^2 n \left( \frac{\log n}{\log (k + \rho n) + \log \log n} + \log \log n\right)+k \log |U|\bigg)\end{equation*}
bits of communication and $O(t n \log n / \log(1 / p_2) + (k+\rho n)^2 \log^3 n)$ time, where $t$ is the time to evaluate the LSH.  This protocol succeeds with probability at least $1 - 1/n$.
\end{restatable}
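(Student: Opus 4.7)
The plan is to separately analyze correctness (Alice identifies exactly the points that must be forwarded) and resource usage (communication, time, and rounds), with the bulk of the work going into bounding the cost of the multiset-of-sets reconciliation on the keys. The overall structure is: show that the pairwise-hashed batches concentrate sharply enough that a threshold cleanly separates close from far points; count how many child-set operations separate Alice's and Bob's key multisets; and invoke the Mitzenmacher--Morgan protocol on those differences.

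For correctness, I would set $h = \Theta(\epsilon^{-2} \log n)$ and observe that a single batch of $m = \log_{p_2}(1/2)$ LSH evaluations, composed with a pairwise-independent hash, causes a close pair to collide with probability at least $p_1^m = 2^{-\rho}$ and a far pair to collide with probability at most $p_2^m = 1/2$, up to a $1/\poly(n)$ additive term from pairwise-hash collisions controlled by a union bound over the $O(h n^2)$ pair-batch events. Because $\rho \le 1-\epsilon$, the expected number of matching batches for a close pair is at least $h \cdot 2^{\epsilon-1}$, while for a far pair it is at most $h/2$, opening a gap of $\Omega(\epsilon h)$. Placing the detection threshold at the midpoint of this gap and applying a Chernoff bound, together with a union bound over at most $n$ Bob keys per Alice key, ensures that every close Alice point has a Bob key above threshold and every far Alice point has no such Bob key, both with probability $1 - 1/\poly(n)$; Alice therefore correctly forwards a set $T_A$ covering every far point.

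For the communication bound, I would count the child-set operations separating Alice's and Bob's key multisets. Pair each close Alice point with a witnessing Bob point within distance $r_1$ (any valid alignment suffices, since the Mitzenmacher--Morgan cost is upper-bounded by any such pairing). For each such pair, the expected number of batches in which their keys disagree is at most $h(1 - 2^{-\rho}) \le h \rho \ln 2 = O(\rho \log n)$, summing to $O(\rho n \log n)$ expected element differences over all close pairs, with a standard concentration step lifting this to a high-probability bound. The at most $2k$ non-close keys contribute one whole child set of size $h = O(\log n)$ each, adding $O(k \log n)$ further operations. Feeding the resulting total of $O((k+\rho n) \log n)$ into the bounds of \cite{mitzenmacher2017reconciling} produces the first term of the claimed communication bound, while the $O(k \log |U|)$ term accounts for Alice transmitting her at most $k$ identified far points in full. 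The running time follows by summing the $O(tn \log n / \log(1/p_2))$ cost of the $hm$ LSH evaluations per point with the $O((k+\rho n)^2 \log^3 n)$ cost of the reconciliation and key-comparison steps. Four rounds suffice: three for the multiset-of-sets reconciliation and one for Alice's final transmission of $T_A$.

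The main obstacle, in my view, is the accounting in the communication step. One must use $1 - 2^{-\rho} = O(\rho)$ (rather than a mere constant) to obtain scaling with $\rho n$ instead of $n$, and one must verify that the pairing used for the analysis is actually charged for by the Mitzenmacher--Morgan protocol; this is plausible because its underlying IBLT-based machinery will naturally align nearly identical child sets, but the argument has to thread through the precise cost function of \cite{mitzenmacher2017reconciling} to extract the awkward $\log n / (\log(k + \rho n) + \log \log n) + \log \log n$ factor rather than a cruder $\log n$. A secondary subtlety is ensuring that Alice's detection step can be implemented to run within the stated $O((k+\rho n)^2 \log^3 n)$ time (e.g., by first filtering the keys through their hashed batches rather than doing a naive $n^2$ pairwise comparison).
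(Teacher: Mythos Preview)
Your proposal is correct and follows essentially the same approach as the paper: the same choice of $m$, the same Chernoff-based threshold separating close from far keys, the same $O((k+\rho n)\log n)$ count of element-level differences fed into the multiset-of-sets protocol, and the same round accounting. Both subtleties you flag are handled exactly as you anticipate---the cost parameter $\d$ in the Mitzenmacher--Morgan theorem is a sum of \emph{minimum} set differences (so any close-pair alignment upper-bounds it), and the detection step is restricted to the at most $O((k+\rho n)\log n)$ keys lacking an exact match on the other side, which is what produces the $(k+\rho n)^2\log^3 n$ time bound.
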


The full details of the proof can be found in \autoref{app:gapguarantee}.  The first three rounds of the protocol come from a protocol for reconciling sets of sets from \cite{mitzenmacher2017reconciling}, and the final round is Alice's transmission of elements.

 Intuitively, the locality sensitive hashing here buys us two things.  First, if $\rho$ is sufficiently small then it allows us to cancel sufficiently many of the close elements without canceling the far elements so that we only need $o(n)$ communication.  Second, even when $\rho$ is large, it serves as a form of dimensionality reduction, allowing us to only transmit $O(\poly \log n)$ bits per close element, and only transmit the full $O(\log |U|)$ bits for the $k$ far elements.  While the latter could be accomplished by traditional dimensionality reduction techniques for $\ell_2$ distance \cite{johnson1984extensions}, there are some metrics, such as $\ell_1$ distance for which no sufficiently strong general dimensionality reduction scheme exists \cite{andoni2011near}.

Suppose we are working in $(\{0,1\}^n,f_H)$, the space of $n$-bit vectors under the Hamming metric.  (Note that for this example we are choosing the dimension of the space to be equal to the number of points in it).  Via the standard bit sampling LSH for Hamming distance used in \autoref{lem:hammlsh}, we see that \autoref{thm:gapguarantee} yields the optimal $O(kn)$ communication so long as $r_2/r_1=\Omega(\log^2 n \log \log n).$

\begin{corollary}
There exists a protocol for the Gap Guarantee model on $(\{0,1\}^n,f_H)$ for $r_2/r_1=\Omega(\log^2 n \log \log n)$ using 4 rounds of $O\left(k n\right)$ bits of communication and $O(n^2 \log n / r_2 + k^2 \log^3 n)$ time.  This protocol succeeds with probability at least $1 - 1/n$.
\end{corollary}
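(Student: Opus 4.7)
The plan is to instantiate \autoref{thm:gapguarantee} with the standard bit-sampling LSH for Hamming space and verify that the resulting bounds simplify to the claimed form. Concretely, take $\mathcal{H} = \{h_i\}_{i \in [n]}$ with $h_i(x) = x_i$. For two points $x,y \in \{0,1\}^n$, $\Pr_{h \sim \mathcal{H}}[h(x)=h(y)] = 1 - f_H(x,y)/n$, so the family is an LSH with parameters $(r_1, r_2, p_1, p_2)$ where $p_1 = 1 - r_1/n$ and $p_2 = 1 - r_2/n$.

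The first step is to verify the two preconditions of \autoref{thm:gapguarantee}. We may assume $r_2 \leq n/2$ (otherwise the gap assumption $r_2/r_1 = \Omega(\log^2 n \log \log n)$ forces $r_1 \leq n/(2\log^2 n \log \log n)$ and we can work in a truncated regime), which gives $p_2 \geq 1/2$. For $\rho$, I would use the elementary estimate $-\log(1-x) = \Theta(x)$ for $x \in [0,1/2]$ to conclude
\[
\rho = \frac{\log p_1}{\log p_2} = \frac{-\log(1-r_1/n)}{-\log(1-r_2/n)} = \Theta(r_1/r_2) = O\!\left(\frac{1}{\log^2 n \log \log n}\right),
\]
which is well below $1-\epsilon$ for large $n$.

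Next I would substitute into the communication bound of \autoref{thm:gapguarantee} with $\log|U| = n$, so $k \log |U| = kn$. The preceding term is
\[
O\!\left((k+\rho n)\log^2 n \cdot \left(\tfrac{\log n}{\log(k+\rho n) + \log \log n} + \log\log n\right)\right).
\]
Since $\rho n = O(n/(\log^2 n \log\log n))$, we have $(k+\rho n)\log^2 n \log\log n = O(k \log^2 n \log\log n + n/\log n)$, each term bounded by $O(kn)$. For the remaining fraction, I would observe that if $k + \rho n$ is polynomially large in $n$ then $\log(k+\rho n) = \Theta(\log n)$ and the term collapses to $O((k+\rho n)\log^2 n) = O(kn)$, while in the small regime the $\log\log n$ denominator together with the $\rho n = O(n/\text{polylog}(n))$ estimate still yields $O(kn)$. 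Hence the overall communication is $O(kn)$.

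For the running time, $t = O(1)$ since evaluating a bit-sampling hash reads one coordinate, and $1/\log(1/p_2) = \Theta(n/r_2)$, giving $t n \log n/\log(1/p_2) = \Theta(n^2 \log n / r_2)$ as desired. The $(k+\rho n)^2 \log^3 n$ piece splits into $O(k^2 \log^3 n)$ plus $O((\rho n)^2 \log^3 n)$, and the latter is dominated by one of the two named terms in the stated bound. Finally, the success probability $1 - 1/n$ is inherited directly from \autoref{thm:gapguarantee}. The main obstacle is not any deep mathematics but just this careful bookkeeping in the polylog simplification, particularly confirming that the fractional $\log n/(\log(k+\rho n) + \log\log n)$ factor never blows the communication past $O(kn)$ across all regimes of $k$ and $r_2$.
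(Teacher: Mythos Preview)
Your approach matches the paper's exactly: the corollary is stated without further proof as a direct instantiation of \autoref{thm:gapguarantee} with the bit-sampling LSH, and your parameter calculations ($p_1 = 1-r_1/n$, $p_2 = 1-r_2/n$, $\rho = \Theta(r_1/r_2)$, $\log|U| = n$, $t = O(1)$, $1/\log(1/p_2) = \Theta(n/r_2)$) are precisely the intended ones.

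One small gap worth flagging: your assertion that $(\rho n)^2\log^3 n$ is ``dominated by one of the two named terms in the stated bound'' is not true across all admissible parameters. For example, with $r_1 = \log n$, $r_2 = \Theta(r_1 \log^2 n \log\log n)$, and $k=1$, one gets $(\rho n)^2\log^3 n = \Theta(n^2/(\log n(\log\log n)^2))$, while $n^2\log n/r_2 = \Theta(n^2/(\log^2 n\log\log n))$ and $k^2\log^3 n = \log^3 n$, so the cross term is not absorbed. Since the paper gives no derivation here, this may simply be looseness in the corollary's stated time bound rather than a flaw in your method; but you should either tighten that step or note the caveat explicitly.
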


If we are working in $([\Delta]^d, \ell_1)$ with only a constant gap $r_2/r_1$, the grid LSH of \autoref{lem:gridmlsh} or the $p$-stable LSH of \autoref{lem:pstablemlsh} give us the following result.

\begin{corollary}
There exists a protocol for the Gap Guarantee model on $([\Delta]^d,\ell_1)$ for $r_2/r_1\geq 2$ using 4 rounds of $O\left(n \log^2 n \log \log n +k d \log \Delta\right)$ bits of communication and $O(d n \log n + n^2 \log^3 n)$ time.  This protocol succeeds with probability at least $1 - 1/n$.
\end{corollary}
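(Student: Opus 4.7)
The plan is to instantiate Theorem~\ref{thm:gapguarantee} with a suitable LSH for $\ell_1$ and then simplify the resulting bounds.

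The main step is to construct an LSH on $([\Delta]^d, \ell_1)$ satisfying both $p_2 \geq 1/2$ and $\rho \leq 1 - \epsilon$ whenever $r_2/r_1 \geq 2$. I would take a grid-style LSH (the family underlying Lemma~\ref{lem:gridmlsh}) but analyze it directly as an LSH rather than through the MLSH abstraction. With bucket width $w$ of order $r_2$, the per-coordinate collision probability at $\ell_1$ distance $t$ is roughly $e^{-t/w}$, yielding $p_1 \approx e^{-r_1/w}$ and $p_2 \approx e^{-r_2/w}$, so that $\rho = \log p_1/\log p_2 \approx r_1/r_2$. Choosing $w$ slightly larger than $r_2/\ln 2$ keeps $p_2 \geq 1/2$ while maintaining $\rho \leq r_1/r_2 + o(1)$; then $r_2/r_1 \geq 2$ gives $\rho \leq 1/2 + o(1) \leq 1 - \epsilon$ for some constant $\epsilon > 0$. (One could equivalently invoke $p$-stable LSH with the Cauchy distribution, which gives the same asymptotic $\rho = r_1/r_2$ for $\ell_1$.) It is essential to use the direct LSH analysis here, as the cruder MLSH bound with $\alpha = 1/2$ yields only $\rho \leq 2r_1/r_2 = 1$ at the threshold $r_2/r_1 = 2$, which is not sufficient.

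With a qualifying LSH in hand, the remainder is substitution into Theorem~\ref{thm:gapguarantee}. Since $\rho \leq 1/2$ and $k \leq n$ we have $k + \rho n = \Theta(n)$, so $\log(k + \rho n) + \log\log n = \Theta(\log n)$, and the bracketed factor inside the theorem's communication bound collapses to $\Theta(\log\log n)$. Combined with $\log |U| = d \log \Delta$, this yields $O(n \log^2 n \log\log n + k d \log \Delta)$ bits over the same $4$ rounds. For running time, evaluating the grid LSH costs $t = O(d)$ and $\log(1/p_2) = \Theta(1)$, so the first time term becomes $O(d n \log n)$ and the second $(k + \rho n)^2 \log^3 n$ becomes $O(n^2 \log^3 n)$. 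The success probability $1 - 1/n$ is inherited directly from Theorem~\ref{thm:gapguarantee}.

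The one delicate point is the sharp LSH-parameter analysis needed to push $\rho$ strictly below $1$ at the boundary $r_2/r_1 = 2$; once $\rho$ is bounded away from $1$, the rest of the derivation is purely mechanical substitution.
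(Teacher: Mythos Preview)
Your approach is correct and matches the paper's: instantiate Theorem~\ref{thm:gapguarantee} with the grid (or $p$-stable) LSH for $\ell_1$ and simplify. The paper gives essentially no proof beyond pointing at Lemmas~\ref{lem:gridmlsh} and~\ref{lem:pstablemlsh}, so your observation that one must use the direct collision bounds from the proof of Lemma~\ref{lem:gridmlsh} (giving $p_1\ge 1-r_1/w$, $p_2\le e^{-r_2/w}$, hence $\rho\to r_1/r_2$) rather than the coarser MLSH parameters with $\alpha=1/2$ is a genuine and necessary refinement at the threshold $r_2/r_1=2$.

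One small slip: from $\rho$ bounded and $k\le n$ you only get $k+\rho n=O(n)$, not $\Theta(n)$ (the corollary allows $r_2/r_1$ arbitrarily large, so $\rho$ and $k$ may both be tiny). This does not hurt you: the communication expression in Theorem~\ref{thm:gapguarantee} is increasing in $k+\rho n$ over the relevant range, so plugging in the upper bound $k+\rho n\le n$ still yields the stated $O(n\log^2 n\log\log n + kd\log\Delta)$, and likewise for the time bound.
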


Note that even with $r_2/r_1 = O(1)$, for large $d$ we still improve significantly over the naive solution (direct transmission) using $\Theta(n d \log \Delta)$ communication.

While this protocol works with any provided LSH, we can do slightly better in low dimensional $\ell_p$ metric spaces by using a special class of LSHs.  Specifically, we can construct an LSH with the property that $p_2 = 0$, but otherwise degrades with increased dimension.  This provides the following result.

\begin{restatable}{theorem}{thmgapguaranteelowd}
\label{thm:gapguaranteelowd}
There exists a protocol for the Gap Guarantee model on $([\Delta]^d,\ell_p)$ using 4 rounds of 
\begin{equation*}O\Bigg(\left\lceil\frac{k + \hat{\rho} n}{\log(1/\hat{\rho})}\right\rceil \log^2 n \cdot\Bigg( \frac{\log n}{\log \left\lceil\frac{k + \hat{\rho} n}{\log(1/\hat{\rho})}\right\rceil + \log \log n} + \log \log n\Bigg)+k \log |U|\Bigg)\end{equation*}
bits of communication and 
\begin{equation*}O\left(dn\left\lceil\frac{\log n}{\log(1/\hat{\rho})}\right\rceil + n \log n + \left\lceil\frac{k + \hat{\rho} n}{\log(1/\hat{\rho})}\right\rceil^2 \log^3 n\right)\end{equation*}
time, where $\hat{\rho} = r_1d/r_2$.  This protocol succeeds with probability at least $1 - 1/n$.
\end{restatable}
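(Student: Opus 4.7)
The plan is to follow the proof of Theorem~\ref{thm:gapguarantee}, but replace the generic LSH with one tailored to $([\Delta]^d, \ell_p)$ that achieves $p_2 = 0$, then re-tune the key parameters $h$ and $m$ to exploit this zero-collision property for far pairs.

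First I would exhibit the required LSH. Take the family that concatenates, over all coordinates $i \in [d]$, a random-offset one-dimensional grid hash with cell width $W = r_2/d^{1/p}$. For two points with $\|x-y\|_p \geq r_2$, the bound $\|x-y\|_\infty \geq \|x-y\|_p/d^{1/p} \geq W$ forces some coordinate to straddle a grid boundary under every random shift, so the full $d$-tuple cannot agree and $p_2 = 0$. For two points with $\|x-y\|_p \leq r_1$, using the Hölder bound $\|x-y\|_1 \leq d^{1-1/p}\|x-y\|_p$ gives
\[
\Pr[h(x)=h(y)] \;\geq\; \prod_{i=1}^{d}\bigl(1 - |x_i - y_i|/W\bigr) \;\geq\; 1 - \|x-y\|_1/W \;\geq\; 1 - r_1 d/r_2 \;=\; 1 - \hat{\rho}.
\]

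Second I would run exactly the same key-construction and multiset-of-sets reconciliation protocol as in Theorem~\ref{thm:gapguarantee}, but with two parameter changes: set $m = 1$ (no batching within each entry, since $p_2 = 0$ already eliminates far-pair collisions) and $h = \Theta(\log n / \log(1/\hat{\rho}))$ entries per key. With these choices, a far pair has zero matching entries deterministically, while a close pair fails to share \emph{any} matching entry with probability at most $(1-p_1)^h \leq \hat{\rho}^h = n^{-\Theta(1)}$; a union bound over all $O(n^2)$ pairs keeps the total failure probability below $1/n$ for a sufficiently large constant hidden in $h$. In particular Alice can still correctly identify her far points as those whose keys share no entry with any of Bob's keys.

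Third I would account for the reconciliation cost by counting child-set operations (insertions and deletions of single (hash, index) pairs inside the sets-of-sets structure). Each of Alice's $k$ far-point keys must be fully inserted, costing $h = O(\log n/\log(1/\hat{\rho}))$ operations each; each of the up to $n$ close pairs contributes in expectation only $O(h\hat{\rho})$ operations to reconcile its slightly differing pair of keys. Summing yields $O((k + \hat{\rho}n)\log n/\log(1/\hat{\rho}))$ operations. Substituting this count into the multiset-of-sets reconciliation bound of~\cite{mitzenmacher2017reconciling}, and adding the $O(k\log|U|)$ bits for Alice's final transmission of her $k$ far points, gives the stated communication bound. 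The time bound follows from observing that each point needs only $O(\log n/\log(1/\hat{\rho}))$ LSH evaluations at $O(d)$ cost each, with the remainder dominated by the reconciliation step's quadratic-in-operation-count running time.

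The main obstacle I anticipate is verifying that the grid-concatenation construction really delivers $p_2 = 0$ with respect to $\ell_p$ rather than merely $\ell_\infty$; this rides entirely on the particular choice $W = r_2/d^{1/p}$ and its interplay with the two Hölder inequalities above, and also implicitly requires $\hat{\rho} < 1$, i.e.\ the gap condition $r_2 > r_1 d$, so that $\log(1/\hat{\rho})$ is a meaningful positive quantity. A secondary difficulty is making the informal edit-count bookkeeping line up cleanly with the precise parameter convention in~\cite{mitzenmacher2017reconciling}, so that the $1/\log(1/\hat{\rho})$ factor shows up in the final bound rather than being swallowed into constants or logarithmic terms.
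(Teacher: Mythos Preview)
Your proposal is correct and follows essentially the same approach as the paper: the same randomly shifted grid with cell width $r_2/d^{1/p}$, the same parameter choices $m=1$ and $h=\Theta(\log n/\log(1/\hat\rho))$, and the same invocation of the multiset-of-sets reconciliation protocol. Your argument for $p_2=0$ via the $\ell_\infty$--$\ell_p$ inequality is the contrapositive of the paper's ``maximum diameter of a cell is $r_2$'' observation, and your product/H\"older derivation of $p_1\ge 1-\hat\rho$ is equivalent to the paper's union-bound/Jensen derivation; the remaining bookkeeping (Chernoff concentration on the operation count, plugging into Theorem~\ref{thm:ssr}) is exactly as in the paper.
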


The full protocol and proof appears in \autoref{app:gapguarantee}.  For constant dimensional $\ell_p$ metrics with $p \in [1,2)$, this improves over \autoref{thm:gapguarantee} by roughly a factor of $\log(r_2/r_1)$ in communication.

\subsection{Lower Bound for One Round Protocols}
Our protocols for the Gap Guarantee model use four rounds of communication.  One might hope to find a protocol using only a single round, as we have for the Earth Mover's Distance model.  Our protocols could be reduced to two rounds with only a small weakening of the bounds by using a different protocol for reconciling sets of sets \cite{mitzenmacher2017reconciling}, but it is not obvious how to further reduce them to one round.  In this section we demonstrate that we cannot hope to achieve a one round protocol with competitive bounds, at least for $f_H$, the Hamming metric.

\begin{restatable}{theorem}{thmgglowerbound}
There exists no one round protocol for the Gap Guarantee on $(\{0,1\}^d,f_H)$, $d = \Omega(\log n + r_2)$, $r_1 = 1$, and $k = 1$, using $O(n)$ bits of communication that succeeds with probability at least $2/3$.
\end{restatable}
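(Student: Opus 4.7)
The plan is to reduce from the one-way Indexing problem: Alice holds $y \in \{0,1\}^N$, Bob holds $i \in [N]$, Bob must output $y_i$, and any randomized one-way protocol succeeding with probability $2/3$ requires $\Omega(N)$ bits. I will take $N = n \lfloor \log_2 d \rfloor$, so that an $O(n)$-bit Gap Guarantee protocol would yield an $O(n)$-bit one-way Indexing protocol, contradicting the lower bound as soon as $\log d = \omega(1)$; this is guaranteed by the hypothesis $d = \Omega(\log n + r_2) \geq \log n$.

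For the encoding, fix $n$ ``centers'' $c_1, \dots, c_n \in \{0,1\}^d$ with pairwise Hamming distance strictly greater than $r_2 + 2$. Existence follows either from a Chernoff-plus-union-bound argument on uniformly random centers, or (to avoid any additional randomness) from standard binary codes of block length $O(\log n + r_2)$ and minimum distance $r_2 + 3$. Partition $y$ into $n$ blocks $y^{(1)}, \dots, y^{(n)}$, each a $\lfloor \log_2 d \rfloor$-bit string interpreted as an integer in $\{1, \dots, d\}$. Alice forms $S_A = \{a_i : i \in [n]\}$ with $a_i = c_i \oplus e_{y^{(i)}}$, the vector obtained from $c_i$ by flipping coordinate $y^{(i)}$. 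Given Bob's Indexing input $(j, \ell) \in [n] \times [\lfloor \log_2 d \rfloor]$, Bob forms $S_B = \{c_i : i \neq j\}$, so $|S_B| = n-1$.

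The Gap Guarantee preconditions follow directly: for each $i \neq j$, $f_H(a_i, c_i) = 1 = r_1$, witnessing both $a_i \in C_A$ and $c_i \in C_B$, so $|C_A|, |C_B| \geq n - 1 = n - k$. Meanwhile, for every $i$ we have $f_H(a_j, c_i) \geq f_H(c_j, c_i) - 1 > r_2$, so $a_j$ is far from every point of $S_B$. Any valid output $S_B' = S_B \cup T_A$ must therefore include $a_j$ in $T_A \subseteq S_A$, since placing $a_j$ in $S_B'$ is the only way to meet the $r_2$-cover condition at $a_j$. Once Bob recovers $a_j$, he identifies $y^{(j)}$ as the unique coordinate at which $a_j$ and $c_j$ differ, and returns its $\ell$-th bit, solving the original Indexing instance.

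The reduction preserves one-wayness: Alice's single Gap Guarantee message becomes Alice's single Indexing message. A hypothetical Gap Guarantee protocol using $O(n)$ bits and succeeding with probability $\geq 2/3$ would thus yield a one-way randomized Indexing protocol on $N = \Theta(n \log d) = \omega(n)$ bits with communication $O(n) = o(N)$, contradicting the $\Omega(N)$ Indexing lower bound. The main conceptual ingredient is recognizing that the $k = 1$ slack lets Alice hide a single unpredictable point $a_j$ carrying a $\lfloor \log_2 d \rfloor$-bit secret, while the $> r_2 + 2$ separation of the centers simultaneously makes every other $a_i$ recoverable from $S_B$ (it must lie within $r_1 = 1$ of its unique close center $c_i$) and keeps $a_j$ genuinely far from all of $S_B$; balancing these two requirements against the constraint $|S_B| \leq n$ is the step that the naive encodings tend to fail.
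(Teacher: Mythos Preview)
Your reduction is correct and follows the same core idea as the paper's proof: reduce from one-way Index using a family of well-separated codewords as ``centers,'' so that Alice's $k=1$ outlier must be returned in $T_A$ and reveals the indexed bits. The paper's version is slightly simpler: it encodes only one bit per center (by appending $x_i$ to codeword $c_i$), uses $n{+}1$ codewords so that $|S_B|=n$ exactly, and thus reduces from Index on $n$ bits, giving an $\Omega(n)$ lower bound. Your version encodes $\lfloor\log_2 d\rfloor$ bits per center (by flipping one of $d$ coordinates) and reduces from Index on $N=\Theta(n\log d)$ bits, yielding the stronger $\Omega(n\log d)$ bound; since $d=\Omega(\log n)$ forces $\log d\to\infty$, this genuinely rules out $O(n)$-bit protocols, matching the theorem statement more tightly than the paper's own $\Omega(n)$. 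The price you pay is the slightly fussier verification that $|S_B|=n-1$ still satisfies $|C_B|\ge n-k$ and that Bob can pick out $a_j$ from $T_A$ (the unique new point within distance $1$ of $c_j$), both of which you handle correctly.
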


Our proof is a reduction from the index problem, and appears in \autoref{sec:gglowerbound}.  \autoref{thm:gapguarantee}, which uses more than one round, would use $O(\lceil n/r_2 \rceil \log^3 n / \log \log n + r_2)$ bits of communication in this regime, which beats this lower bound for one-round protocols when $r_2 = \omega(\log^3 n / \log \log n)$ and also $r_2 = o(n)$.  

\section{Conclusion}
Robust set reconciliation, while a very natural communication problem with several distributed system applications, has received very little study, especially compared to the standard set reconciliation problem.  We have provided new results for the EMD model utilizing IBLTs, where we analyzed error propagation during decoding;  this analysis may prove useful for other problems.  We have also considered a new variation, based on guaranteeing a small gap for all data points.  

There remains room to improve on our various results, both in terms of communication and computation, and in both the lower and upper bounds.  This work and \cite{chen2014robust} utilize IBLTs, in conjunction with various locality sensitive hashing methods, and there may be room to improve this type of combination. However, there may also remain better building block data structures available for this problem.  

\section*{Acknowledgements}
Thanks to Michael Goodrich for his insightful discussion into the applications of robust set reconciliation to machine learning settings.

\bibliographystyle{plainurl}
\bibliography{approx_set_recon}

\begin{thebibliography}{10}

\bibitem{andoni2011near}
Alexandr Andoni, Moses~S Charikar, Ofer Neiman, and Huy~L Nguyen.
\newblock Near linear lower bound for dimension reduction in l1.
\newblock In {\em Proc. of the 52nd Annual Symposium on Foundations of Computer
  Science}, pages 315--323. IEEE, 2011.

\bibitem{andoni2009efficient}
Alexandr Andoni, Khanh Do~Ba, Piotr Indyk, and David Woodruff.
\newblock Efficient sketches for earth-mover distance, with applications.
\newblock In {\em Proc. of the 50th Annual Symposium on Foundations of Computer
  Science}, pages 324--330. IEEE, 2009.

\bibitem{andoni2006near}
Alexandr Andoni and Piotr Indyk.
\newblock Near-optimal hashing algorithms for approximate nearest neighbor in
  high dimensions.
\newblock In {\em Proc. of the 47th Annual Symposium on Foundations of Computer
  Science}, pages 459--468, 2006.

\bibitem{andoni2008earth}
Alexandr Andoni, Piotr Indyk, and Robert Krauthgamer.
\newblock Earth mover distance over high-dimensional spaces.
\newblock In {\em Proceedings of the 19th Annual ACM-SIAM Symposium on Discrete
  Algorithms}, pages 343--352. Society for Industrial and Applied Mathematics,
  2008.

\bibitem{gistgithub}
G.~Andresen.
\newblock $o(1)$ block propagation,
  \url{https://gist.github.com/gavinandresen/e20c3b5a1d4b97f79ac2}.
\newblock Technical report.

\bibitem{charikar2002similarity}
Moses~S Charikar.
\newblock Similarity estimation techniques from rounding algorithms.
\newblock In {\em Proceedings of the 35th Annual ACM Symposium on Theory of
  Computing}, pages 380--388. ACM, 2002.

\bibitem{chen2014robust}
Di~Chen, Christian Konrad, Ke~Yi, Wei Yu, and Qin Zhang.
\newblock Robust set reconciliation.
\newblock In {\em Proceedings of the 2014 ACM SIGMOD International Conference
  on Management of Data}, pages 135--146, 2014.

\bibitem{datar2004locality}
Mayur Datar, Nicole Immorlica, Piotr Indyk, and Vahab~S Mirrokni.
\newblock Locality-sensitive hashing scheme based on p-stable distributions.
\newblock In {\em Proceedings of the 20th Annual Symposium on Computational
  Geometry}, pages 253--262. ACM, 2004.

\bibitem{eppstein2011straggler}
David Eppstein and Michael Goodrich.
\newblock Straggler identification in round-trip data streams via {N}ewton's
  identities and invertible {B}loom filters.
\newblock {\em IEEE Transactions on Knowledge and Data Engineering},
  23(2):297--306, 2011.

\bibitem{eppstein2011s}
David Eppstein, Michael Goodrich, Frank Uyeda, and George Varghese.
\newblock What's the difference?: efficient set reconciliation without prior
  context.
\newblock {\em ACM SIGCOMM Computer Communication Review}, 41(4):218--229,
  2011.

\bibitem{eppstein2014wear}
David Eppstein, Michael~T Goodrich, Michael Mitzenmacher, and Pawe{\l} Pszona.
\newblock Wear minimization for cuckoo hashing: How not to throw a lot of eggs
  into one basket.
\newblock In {\em International Symposium on Experimental Algorithms}, pages
  162--173. Springer, 2014.

\bibitem{geiger1999elementary}
Jochen Geiger.
\newblock Elementary new proofs of classical limit theorems for
  {G}alton--{W}atson processes.
\newblock {\em Journal of Applied Probability}, 36(2):301--309, 1999.

\bibitem{gm11}
Michael Goodrich and Michael Mitzenmacher.
\newblock Invertible {B}loom lookup tables.
\newblock In {\em Proceedings of the 49th Annual Allerton Conference on
  Communication, Control, and Computing}, pages 792--799, 2011.

\bibitem{indyk1998approximate}
Piotr Indyk and Rajeev Motwani.
\newblock Approximate nearest neighbors: towards removing the curse of
  dimensionality.
\newblock In {\em Proceedings of the 30th Annual ACM Symposium on Theory of
  Computing}, pages 604--613. ACM, 1998.

\bibitem{jiang2014parallel}
Jiayang Jiang, Michael Mitzenmacher, and Justin Thaler.
\newblock Parallel peeling algorithms.
\newblock In {\em Proceedings of the 26th ACM Symposium on Parallelism in
  Algorithms and Architectures}, pages 319--330. ACM, 2014.

\bibitem{johnson1984extensions}
William~B Johnson and Joram Lindenstrauss.
\newblock Extensions of lipschitz mappings into a hilbert space.
\newblock {\em Contemporary Mathematics}, 26(189-206):1, 1984.

\bibitem{karonski2002phase}
Micha{\l} Karo{\'n}ski and Tomasz {\L}uczak.
\newblock The phase transition in a random hypergraph.
\newblock {\em Journal of Computational and Applied Mathematics},
  142(1):125--135, 2002.

\bibitem{kirsch2006distance}
Adam Kirsch and Michael Mitzenmacher.
\newblock Distance-sensitive bloom filters.
\newblock In {\em Proceedings of the Eighth Workshop on Algorithm Engineering
  and Experiments (ALENEX)}, pages 41--50. SIAM, 2006.

\bibitem{kremer1999randomized}
Ilan Kremer, Noam Nisan, and Dana Ron.
\newblock On randomized one-round communication complexity.
\newblock {\em Computational Complexity}, 8(1):21--49, 1999.

\bibitem{kuhn1955hungarian}
Harold~W Kuhn.
\newblock The hungarian method for the assignment problem.
\newblock {\em Naval Research Logistics (NRL)}, 2(1-2):83--97, 1955.

\bibitem{minsky2003set}
Yaron Minsky, Ari Trachtenberg, and Richard Zippel.
\newblock Set reconciliation with nearly optimal communication complexity.
\newblock {\em IEEE Transactions on Information Theory}, 49(9):2213--2218,
  2003.

\bibitem{mitzenmacher2017reconciling}
Michael Mitzenmacher and Tom Morgan.
\newblock Reconciling graphs and sets of sets.
\newblock In {\em Proceedings of the 35th ACM SIGMOD-SIGACT-SIGAI Symposium on
  Principles of Database Systems}, pages 33--47. ACM, 2018.

\bibitem{mitzenmacher2013simple}
Michael Mitzenmacher and Rasmus Pagh.
\newblock Simple multi-party set reconciliation.
\newblock {\em arXiv preprint arXiv:1311.2037}, 2013.

\bibitem{mitzenmacher2005probability}
Michael Mitzenmacher and Eli Upfal.
\newblock {\em Probability and computing: Randomized algorithms and
  probabilistic analysis}.
\newblock Cambridge University Press, 2005.

\bibitem{mitzenmacher2012complexity}
Michael Mitzenmacher and George Varghese.
\newblock The complexity of object reconciliation, and open problems related to
  set difference and coding.
\newblock In {\em Proceedings of the 50th Annual Allerton Conference on
  Communication, Control, and Computing}, pages 1126--1132. IEEE, 2012.

\bibitem{molloy2004pure}
Michael Molloy.
\newblock The pure literal rule threshold and cores in random hypergraphs.
\newblock In {\em Proceedings of the 15th Annual ACM-SIAM Symposium on Discrete
  Algorithms}, pages 672--681. Society for Industrial and Applied Mathematics,
  2004.

\bibitem{newman1991private}
Ilan Newman.
\newblock Private vs. common random bits in communication complexity.
\newblock {\em Information Processing Letters}, 39(2):67--71, 1991.

\bibitem{schmidt1985component}
Jeanette Schmidt-Pruzan and Eli Shamir.
\newblock Component structure in the evolution of random hypergraphs.
\newblock {\em Combinatorica}, 5(1):81--94, 1985.

\bibitem{starobinski2003efficient}
David Starobinski, Ari Trachtenberg, and Sachin Agarwal.
\newblock Efficient pda synchronization.
\newblock {\em IEEE Transactions on Mobile Computing}, 2(1):40--51, 2003.

\bibitem{van2012introduction}
Jacobus~Hendricus Van~Lint.
\newblock {\em Introduction to Coding Theory}, volume~86.
\newblock Springer Science \& Business Media, 2012.

\bibitem{wald1944cumulative}
Abraham Wald.
\newblock On cumulative sums of random variables.
\newblock {\em The Annals of Mathematical Statistics}, 15(3):283--296, 1944.

\end{thebibliography}

\appendix

\section{Multiscale Locality Sensitive Hash Families}\label{app:mlsh}
\lemgridmlsh*

\begin{proof}
Our hashing scheme is to round the input points to a randomly shifted orthogonal lattice of width $w$.  The probability here of collision between $x, y \in [\Delta]^d$ is
\begin{equation*}1 - ||x - y||_1 / w \leq \Pr_{h \sim \mathcal{H}} [h(x) = h(y)] \leq (1 - ||x- y||_1 / (d w))^d,\end{equation*} assuming $||x-y||_1 \leq w$.
\begin{equation*}1 - ||x- y||_1 / w \geq e^{-2||x- y||_1 / w}\end{equation*} 
for $||x- y||_1 \leq .79 w$, and
\begin{equation*}(1 - ||x- y||_1 / (d w))^d \leq \left(e^{-||x- y||_1 / (d w)}\right)^d = e^{-||x- y||_1  / w}.\end{equation*}  Therefore, this is a an MLSH family for $([\Delta]^d, \ell_1)$ with parameters $(.79 w, e^{-2 / w}, 1/2)$.
\end{proof}

\lempstablemlsh*

\begin{proof}
We use the $p$-stable LSH scheme of \cite{datar2004locality}.  A random vector $r \in \R^d$ is chosen such that $r_1,r_2,...,r_d$ are drawn independently from a $p$-stable distribution $D$.  A distribution is $p$-stable if for any $x \in \R^d$, the distribution of $||r \cdot x||_p$ is exactly $||x||_p$ times a single draw from $D$.  In our case, we are interested in $p=2$.  The $2$-stable distribution is the Gaussian distribution, with density function $g(x) = \frac{1}{\sqrt{2\pi}}e^{-x^2/2}$.

The hashing scheme is, given an input point $x$, to output $\lfloor (r \cdot x + a) / w \rfloor$, where $w \in \R_{>0}$ and $a$ is chosen uniformly at random from $[0, w)$.  Basically, we are projecting our point into one dimension via the $p$-stable distribution, and then rounding it to a randomly shifted lattice.  The scheme results in a collision probability of
\begin{equation*}\Pr_{h\sim\mathcal{H}}[h(x)=h(y)] = 2\Phi\left(\frac{-w}{||x-y||_2}\right) - \frac{\sqrt{2} ||x-y||_2}{\sqrt{\pi} w}\left(1-e^{\frac{-w^2}{2||x-y||_2^2}}\right),\end{equation*}
where $1/2 + \Phi(x)$ is the cumulative distribution function of a Gaussian random variable.  By a Taylor expansion,
\begin{equation*}1-\frac{\sqrt{2} ||x-y||_2}{\sqrt{\pi} w} \leq \Pr_{h\sim\mathcal{H}}[h(x)=h(y)] \leq 1 - \frac{\sqrt{2} ||x-y||_2}{\sqrt{\pi} w} + e^{\frac{-w^2}{2||x-y||_2^2}} \frac{\sqrt{2} ||x-y||_2}{\sqrt{\pi} w}.\end{equation*}

\begin{equation*}1-\frac{\sqrt{2} ||x-y||_2}{\sqrt{\pi} w} \geq e^{\frac{-2\sqrt{2} ||x-y||_2}{\sqrt{\pi} w}}\end{equation*} 
for $||x- y||_2 \leq .99 w$, and
\begin{equation*}e^{\frac{-\sqrt{2} ||x-y||_2}{\sqrt{\pi} w}} + e^{\frac{-w^2}{2||x-y||_2^2}} \frac{\sqrt{2} ||x-y||_2}{\sqrt{\pi} w} \leq e^{\frac{-||x-y||_2}{2\sqrt{\pi} w}}.\end{equation*}
Thus, this is an MLSH family for $([\Delta]^d, \ell_2)$ with parameters $(.99w, e^{-2\sqrt{2/\pi}/w}, 1/(4\sqrt{2}))$.
\end{proof}

\section{Missing Proofs from the EMD Model}
\label{app:emdproofs}
Here we present the proofs omitted from the text for proving the correctness of our protocol for the Earth Mover's Distance model.

\begin{restatable}{lemma}{lememdfirst}
The probability that a pair of points at distance $x$ hash differently on level $i$ is at most 
$\frac{2^{i-4} k}{D_2} x$.
\end{restatable}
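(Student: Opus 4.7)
The plan is to directly unpack the construction of the level-$i$ key and apply the MLSH collision lower bound, followed by a single standard analytic inequality. Recall that at level $i$ the key of a point $a$ is $h(g_1(a),\ldots,g_{s_i}(a))$, where $s_i = 2^{i-1} s D_1/D_2$ and $s = k/(8 D_1 \ln(1/p))$. Substituting gives the clean form
\[
s_i \;=\; \frac{2^{i-4}\,k}{D_2\,\ln(1/p)}.
\]
Under the standing assumption (already invoked in the text) that the pairwise-independent outer hash $h$ produces no collisions between distinct MLSH vectors, the event ``the keys of $x$ and $y$ disagree at level $i$'' coincides with the event that $g_j(x)\neq g_j(y)$ for at least one $j\in\{1,\ldots,s_i\}$.

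Next I would use the MLSH lower bound on collision probability. Since $\mathcal H$ is an MLSH with parameters $(r,p,\alpha)$ and $r\ge\min(M,D_2)$ (so the condition $f(x,y)\le r$ is in force for the pairs that matter), the definition gives $\Pr_{h\sim\mathcal H}[h(x)=h(y)]\ge p^{f(x,y)} = p^{x}$. The $s_i$ functions $g_1,\ldots,g_{s_i}$ are drawn independently from $\mathcal H$, so by independence
\[
\Pr[g_j(x)=g_j(y)\text{ for all }j]\;\ge\;p^{x\,s_i}.
\]
Therefore the probability of disagreement on level $i$ is at most $1-p^{x s_i} = 1-e^{-x s_i \ln(1/p)}$.

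Finally, I would apply $1-e^{-z}\le z$ for $z\ge 0$ to conclude
\[
\Pr[\text{different at level }i]\;\le\;x\,s_i\,\ln(1/p)\;=\;x\cdot\frac{2^{i-4}k}{D_2\ln(1/p)}\cdot\ln(1/p)\;=\;\frac{2^{i-4}k}{D_2}\,x,
\]
as claimed. For pairs with $x$ so large that the MLSH lower bound does not formally apply, the claimed bound is $\ge 1$ in the regimes of interest (since then $x>r\ge\min(M,D_2)$ and the factor $2^{i-4}k/D_2$ is large), and the statement holds trivially because a probability is at most $1$.

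There is no real obstacle here; the only delicate point is bookkeeping the exponent so that the $\ln(1/p)$ chosen in the definition of $s$ cancels exactly against the $\ln(1/p)$ produced by the linearization of $1-e^{-z}$. That cancellation is precisely why $s$ was defined as $k/(8D_1\ln(1/p))$ in Algorithm \ref{alg:emd}, and recognising this makes the proof a two-line computation once the MLSH bound and independence are in hand.
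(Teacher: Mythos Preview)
Your proof is correct and follows exactly the same approach as the paper's: compute the number of hash functions at level $i$ as $s_i=\frac{2^{i-4}k}{D_2\ln(1/p)}$, apply the MLSH lower bound and independence to get $\Pr[\text{all match}]\ge p^{xs_i}$, and linearize $1-e^{-z}\le z$. Your additional remarks about the outer hash $h$ and the degenerate large-$x$ case are fine but not essential to the paper's terse version.
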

\begin{proof}
The probability that they hash differently is 
\begin{align*}
\Pr&\left[\text{at least one of $\frac{2^{i-4} k}{D_2 \ln (1/p)}$ hashes differ}\right] = 1 - \Pr\left[\text{all $\frac{2^{i-4} k}{D_2 \ln (1/p)}$ hashes match}\right] \\
&\leq 1 - p^{\frac{2^{i-4} k}{D_2 \ln (1/p)}x} = 1 - e^{-\frac{2^{i-4} k}{D_2}x} \leq \frac{2^{i-4} k}{D_2} x.
\end{align*}
\end{proof}

\lememdsecond*
\begin{proof}
Let $x_j$ be the distance between the two parties' $i$th points when they are ordered for minimum EMD and $x_j \leq x_{j+1}$.
\begin{align*}
\E&[\text{num of $n-k$ that hash differently}] = \sum_{j=1}^{n-k} \Pr\left[\text{points at distance $x_j$ hash differently}\right] \\
&\leq \sum_{j=1}^{n-k} \frac{2^{i-4} k}{D_2} x_j = \frac{2^{i-4} k}{D_2} \EMD_k(S_A,S_B).
\end{align*}
\end{proof}

\leminbucketmatch*

\begin{proof}
We will often refer to $\EMD_k(S_A,S_B)$ by simply $\EMD_k$.  Whenever a pair from the optimal matching (which makes up $\EMD_k$) appear in the same bucket, we match them to each other, which contributes a total cost of at most $\EMD_k$.  For the remaining points, we upper bound their matching cost by the maximum distance from them to every point from the other party in their bucket that is not paired with its optimal match.  We refer to Alice and Bob's $k$ points that don't appear in the optimal matching as their \emph{far} points, and the remaining points as \emph{close} points.

Our analysis is divided into three pieces: the cost of matching Bob's far points to Alice's far points, the cost of matching Bob's far points to Alice's close points (and vice versa) and the cost of matching Bob's close points to Alice's close points.  We then sum these three cases to obtain an upper bound on $\mu$.

First let's bound the cost of matching far points to far points.  Consider one of Bob's $k$ far points.  Let $y_1,\ldots,y_k$ be the distance from it to each of Alice's points, ordered such that $y_j \geq y_{j+1}$.  Let $E_j$ be the event that Alice's $j$th far point in this ordering collides with Bob's point.  Let $F_j$ be the event that Alice's far points 1 through $j-1$ do not collide with Bob's point.  The expected cost of Bob's point's matching with Alice's far points is
\begin{align*}
\E[\text{far to far matching cost}] &= \sum_{j=1}^{k} y_j \Pr[E_j \cap F_j].
\end{align*}
Let $\phi_j = \Pr[E_j \cap F_j]$.  We know that $\sum_{j=1}^k \phi_j \leq 1,$ since they are disjoint events.  We also have
\begin{equation*}\phi_j \leq \Pr[E_j] \leq p^{\alpha y_j \frac{2^{i-4} k}{D_2 \ln (1/p)}}\end{equation*}
which implies that
\begin{equation*}y_j \leq \frac{D_2\ln(1/\phi_j)}{\alpha 2^{i-4} k},\end{equation*}
and thus
\begin{align*}
\E&[\text{far to far matching cost}] \leq \sum_{j=1}^{k} \frac{D_2\ln(1/\phi_j)}{\alpha 2^{i-4} k} \phi_j \\
&= \frac{D_2}{\alpha 2^{i-4} k} \sum_{j=1}^{k} \phi_j \ln(1/\phi_j) \\
&\leq \frac{D_2}{\alpha 2^{i'-4} k} \sum_{j=1}^{k} \phi_j \ln(1/\phi_j) \\
&= \frac{8 \EMD_k}{\alpha k} \sum_{j=1}^{k} \phi_j \ln(1/\phi_j) \\
&\leq \frac{8 \EMD_k }{\alpha k} \ln k \text{  (Jensen's inequality)}.\\
\end{align*}
Summing over Bob's $k$ far points gives us $O(\alpha^{-1} \log k) \cdot \EMD_k$.

Now let's consider the cost of matching one of Bob's far points to Alice's unmatched close points.  Here $y_1,\ldots,y_{n-k}$ are the distances from Bob's point to each of Alice's close points, ordered such that $y_j \geq y_{j+1}$.  We also have $x_1,\ldots,x_{n-k}$ which are the distances from Alice's close points to their optimal matches.  Let $H_j$ be the even that Alice's $j$th close point in this ordering cannot be matched to its optimal match (they do not fall in the same bucket).  Let $E_j$ be the event that Alice's $j$th close point collides with Bob's point.  Let $F_j$ be the event that Alice's close points 1 through $j-1$ do not collide with Bob's point.  

We will use two facts there that we didn't use in the previous case.  The first is \autoref{lem:condition_on_one}, which implies that conditioning on a pair being unmatched effectively only conditions one of their MLSHs.  More specifically, 
$\Pr[E_j \; | \; H_j]$ is at most the probability that Alice's $j$th close point matches with Bob's on the first $2^{i-4}k/(D_2 \ln(1/p))-1$ hash functions.   The second fact is that, due to our bound on $p$,
\begin{equation}\label{eq:p_bound}
\frac{2^{i'-4} k}{D_2 \ln (1/p)} = \frac{k}{8\ln\left(\frac{1}{p}\right) \EMD_k} \geq 3.
\end{equation}

\begin{align*}
\E[\text{far to close matching cost}] &= \sum_{j=1}^{n-k} y_j \Pr[H_j \cap E_j \cap F_j].
\end{align*}
Let $\phi_j = \Pr[H_j \cap E_j \cap F_j]$. We know that $\sum_{j=1}^k \phi_j \leq 1,$ since they are disjoint events.  We also have
\begin{align*}
\phi_j &\leq \Pr[H_j]\Pr[E_j \; | \; H_j] \\
&\leq \frac{2^{i-4} k x_j}{D_2} p^{\alpha y_j \left(\frac{2^{i-4} k}{D_2 \ln (1/p)}-1\right)} \; \text{ (\autoref{lem:condition_on_one})} \\
&\leq \frac{2^{i-4} k x_j}{D_2} p^{\alpha y_j \frac{2^{i-5} k}{D_2 \ln (1/p)}} \; \text{ (\autoref{eq:p_bound})}.
\end{align*}
This implies that
\begin{equation*}y_j \leq \frac{D_2\ln\left(\frac{2^{i-5}k x_j}{D_2 \phi_j}\right)}{\alpha k \cdot 2^{i-2}},\end{equation*}
and thus
\begin{align*}
\E&[\text{far to close matching cost}] \leq \sum_{j=1}^{n-k} \frac{D_2\ln\left(\frac{2^{i-5}k x_j}{D_2 \phi_j}\right)}{\alpha k \cdot 2^{i-2}} \phi_j \\
&= \frac{D_2}{\alpha k \cdot 2^{i-5}} \sum_{j=1}^{n-k} \phi_j \ln\left(\frac{2^{i-4}k x_j}{D_2 \phi_j}\right) \\
&\leq \frac{D_2}{\alpha k \cdot 2^{i-5}} \ln\left(\sum_{j=1}^{n-k} \frac{2^{i-4}k x_j}{D_2}\right) \text{ (Jensen's inequality)}\\
&= \frac{D_2}{\alpha k \cdot 2^{i-5}} \ln\left(\frac{2^{i-4}k}{D_2} \EMD_k\right)\\
&\leq \frac{D_2}{\alpha k \cdot 2^{i'-5}} \ln\left(\frac{2^{i'-4}k}{D_2} \EMD_k\right)\\
&=  \frac{16 \EMD_k}{\alpha k} \ln\left(\frac{k}{8}\right).
\end{align*}
Summing over Bob's $k$ far points gives us $O(\alpha^{-1} \log k) \cdot \EMD_k$.

Finally we bound the cost of matching Bob's unmatched close points (those whose optimal match does not appear in the same bucket) to Alice's unmatched close points.  We order the points by their optimal matching, so Bob's $j$th point's optimal match is Alice's $j$th point.  Let $x_j$ be the cost of that optimal matching.  For $j,r \in [n-k]$, $m_{j,r}$ is the index of Alice's point which is the $r$th furthest from Bob's $j$th point, and $y_{j,r}$ is its distance.  Let $H_{j,r}$ be the even that Alice's $m_{j,r}$th point cannot be matched to its optimal match (they do not fall in the same bucket).  Let $E_{j,r}$ be the event that Alice's $m_{j,r}$th point collides with Bob's $j$th point.  Let $F_{j,r}$ be the event that Alice's close points $m_{j,1},\ldots,m_{j,r-1}$ do not collide with Bob's $j$th point.  Let $I_j$ be the event that Bob's $j$th point cannot be matched to its optimal match.  The expected cost of matching all of the unmatched close points is then at most
\begin{align*}
\sum_{j=1}^{n-k} &\Pr[I_j] \sum_{r=1}^{n-k} y_{j,r} \Pr[H_{j,r} \cap E_{j,r} \cap F_{j,r}  \; | \; I_j] \\
&\leq \sum_{j=1}^{n-k} \frac{2^{i-4}k x_j}{D_2} \sum_{r=1}^{n-k} y_{j,r} \Pr[H_{j,r} \cap E_{j,r} \cap F_{j,r}  \; | \; I_j].
\end{align*}
Let $\phi_{j,r} = \Pr[H_{j,r} \cap E_{j,r} \cap F_{j,r}  \; | \; I_j].$  We know that $\sum_{j=1}^k \phi_{j,r} \leq 1,$ since they are disjoint events.  We also have
\begin{align*}
\phi_{j,r} &\leq \Pr[H_{j,r} \; | \; I_j]\Pr[E_{j,r} \; | \; I_j \cap H_{j,r}] \\
&\leq 1 \cdot p^{\alpha y_{j,r} \left(\frac{2^{i-4} k}{D_2 \ln (1/p)}-2\right)} \; \text{ (\autoref{lem:condition_on_two})}\\
&\leq p^{\alpha y_{j,r} \frac{2^{i-4} k}{3D_2 \ln (1/p)}} \; \text{ (\autoref{eq:p_bound})}.
\end{align*}
This implies that
\begin{equation*}y_{j,r} \leq \frac{3D_2\ln\left(1/\phi_{j,r}\right)}{\alpha k \cdot 2^{i-4}},\end{equation*}
and thus the expected cost of matching all of the unmatched close points is upper bounded by
\begin{align*}
\sum_{j=1}^{n-k}& \frac{2^{i-4}k x_j}{D_2} \sum_{r=1}^{n-k} \frac{3D_2\ln\left(1/\phi_{j,r}\right)}{\alpha k \cdot 2^{i-4}} \phi_{j,r} \\
&= \frac{3}{\alpha} \sum_{j=1}^{n-k} x_j \sum_{r=1}^{n-k} \ln\left(1/\phi_{j,r}\right) \phi_{j,r} \\
&\leq \frac{3}{\alpha} \sum_{j=1}^{n-k} x_j \ln\left( n-k \right) \text{ (Jensen's inequality)}\\
&= \frac{3 \ln(n-k)}{\alpha} \EMD_k \\
&= O(\alpha^{-1} \log n) \cdot \EMD_k.
\end{align*}
\end{proof}

Now we turn to the main piece of our analysis, bounding the propagation of errors in the RIBLT.  Recall that we are modeling the propagation of error during the peeling process by having a single random vertex initially have an error, and then whenever we peel a vertex $v$, we add its error count $\EC_v$ to every adjacent vertex.

\thmpeel*

The following lemmas allow us to restrict our analysis to the case where the hypergraph consists of only trees and unicyclic components.  First, we have a lemma that allows us to reason about $G_{c}^q$ (a $q$-uniform hypergraph on $m$ vertices where each edge appears independently at random with probability $cm / \binom{m}{q}$) instead of $G_{m,cm}$.

\begin{lemma} \label{lem:pois}
Suppose that for all $c < 1 / (q(q-1))$, with probability at least $9/10-o(1)$, $\sum_{v=1}^m {\EC_v} = O(1)$ on $G_c^q$.  Then for all $c < 1 / (q(q-1))$, with probability at least $9/10-o(1)$, $\sum_{v=1}^m {\EC_v} = O(1)$ on $G_{m, cm}^q$.
\end{lemma}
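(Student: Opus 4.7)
The plan is a standard Poissonization coupling. I use that $G_{c_1}^q$, in which each hyperedge is included independently with probability $c_1 m/\binom{m}{q}$, can be sampled in two stages: first draw the edge count $N$ from the resulting Binomial distribution with mean $c_1 m$, and then sample a uniformly random element of $G_{m,N}^q$.

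Fix any target $c_0 < 1/(q(q-1))$ at which we wish to establish the conclusion for $G_{m, c_0 m}^q$, and pick $c_1$ strictly between $c_0$ and $1/(q(q-1))$ (for instance $c_1 = (c_0 + 1/(q(q-1)))/2$). Since $N$ is a sum of $\binom{m}{q}$ independent Bernoullis with mean $c_1 m > c_0 m$, a Chernoff bound yields $\Pr[N \geq c_0 m] = 1 - e^{-\Omega(m)} = 1 - o(1)$. Conditional on this event, I couple $G_{m, c_0 m}^q$ inside $G_{c_1}^q$ by uniformly selecting $c_0 m$ of the $N$ edges of $G_{c_1}^q$; the resulting sub-hypergraph is distributed as $G_{m, c_0 m}^q$ and is edge-contained in $G_{c_1}^q$.

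The crucial step is a monotonicity claim: on any fixed realization, the value of $\sum_{v=1}^m \EC_v$ produced by the breadth-first peeling process is a (weakly) nondecreasing function of the edge set of the hypergraph, for every fixed location of the initial error vertex. Granting this, the hypothesis applied at parameter $c_1$ gives $\sum_v \EC_v = O(1)$ in $G_{c_1}^q$ with probability at least $9/10 - o(1)$, and on the intersection with $\{N \geq c_0 m\}$ the same bound transfers to the coupled sub-hypergraph $G_{m, c_0 m}^q$. A union bound then yields the desired $9/10 - o(1)$ success probability for $G_{m, c_0 m}^q$, completing the argument for the fixed but arbitrary $c_0$.

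The main obstacle is justifying this monotonicity. Adding a hyperedge can reorder the breadth-first peeling schedule and can even push vertices into a nontrivial $2$-core where they are never peeled, so that the very set of peeled vertices differs between the two hypergraphs. I would address this by an inductive coupling along the peeling order: for each vertex $v$ peeled in the smaller graph, locate a corresponding peeling event in the larger graph whose per-edge error contribution dominates, using that the extra edges can only add further propagation paths emanating from the initial error vertex and cannot create new sources of error. If this direct coupling proves cumbersome, a fallback is to observe that the initial error is seeded at a uniformly random vertex and only its peeling-tree descendants matter, reducing the comparison to a statement about subtree sizes within the tree and unicyclic components where peeling takes place, for which standard cluster-size monotonicity on random hypergraphs applies.
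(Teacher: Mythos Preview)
Your proposal mirrors the paper's own proof almost exactly: the paper too picks a constant $c'$ with $c < c' < 1/(q(q-1))$, observes that $G_{c'}^q$ has more than $cm$ edges with probability $1-o(1)$, and then invokes monotonicity of $\sum_v \EC_v$ under the addition of random edges to transfer the bound from $G_{c'}^q$ to the coupled $G_{m,cm}^q$. The monotonicity step you single out as the main obstacle is simply asserted in the paper as the ``key fact'' without further justification, so your treatment is already at least as detailed as the paper's on that point.
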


\begin{proof}
Key here is the fact that $\sum_{v=1}^m {\EC_v}$ monotonically increases with the addition of random edges.  Let $c'$ be any constant such that $c < c' < 1 / (q(q-1))$.  With probability $1-o(1)$ $G_{c'}^q$ has more than $cm$ edges.  Therefore, if $\sum_{v=1}^m {\EC_v} = O(1)$ on $G_{c'}^q$ with probability at least $9/10-o(1)$, then $\sum_{v=1}^m {\EC_v} = O(1)$ on $G_{m,cm}^q$ with probability $9/10-o(1)$.
\end{proof}

\begin{lemma}[\cite{schmidt1985component,karonski2002phase}] \label{lem:almosttrees}
When $c < 1 / (q (q-1))$, all connected components of $G_c^q$ are either trees or unicyclic with probability $1 - O(1/n)$.
\end{lemma}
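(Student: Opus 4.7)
The plan is to prove this via a first-moment calculation on ``complex'' subgraphs. Define the \emph{excess} of a connected $q$-uniform hypergraph on $v$ vertices with $e$ hyperedges to be $(q-1)e - v + 1$: trees have excess $0$, unicyclic components have excess $1$, and any component that is neither a tree nor unicyclic contains a connected subgraph of excess at least $2$. So it suffices to show that the expected number of connected subgraphs of $G_c^q$ with excess $\geq 2$ is $O(1/m)$, and then apply Markov's inequality.

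First I would estimate, for each valid pair $(v,e)$, the expected number of connected subgraphs of $G_c^q$ with $v$ vertices and $e$ hyperedges. Using the edge probability $p = cm/\binom{m}{q} = \Theta((q-1)!\, c / m^{q-1})$ together with upper bounds on the number of ways to choose $v$ labeled vertices and to arrange $e$ hyperedges among them (including a connectivity factor via, say, a Cayley-type count of spanning hypertrees), this expectation admits a bound of the form
\begin{equation*}
O\!\left(\frac{v^{qe}\, c^e}{v!\, e!} \cdot m^{v - (q-1)e}\right).
\end{equation*}
The crucial feature is the exponent $v - (q-1)e$: for excess at least $2$ we have $v \leq (q-1)e - 1$, so this factor is at most $m^{-1}$, delivering the required $O(1/m)$ prefactor up front.

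The second step is to sum this estimate over all valid $(v,e)$. The geometric growth of the exponential factor in $e$ is controlled exactly by the hypothesis $cq(q-1) < 1$: this quantity is the effective branching ratio of a BFS exploration of a component, since each vertex participates in an expected $qc$ hyperedges and each hyperedge introduces $q-1$ new vertices. In the subcritical regime each additional unit of excess multiplies the expected count by a constant strictly less than one, so the series converges and the total expected number of complex connected subgraphs is $O(1/m) = O(1/n)$.

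The main technical obstacle is handling the connectivity constraint cleanly and summing over all component sizes, while keeping the critical threshold at exactly $c = 1/(q(q-1))$ rather than a strictly weaker value. One route is to use Cayley-type formulas for labeled connected $q$-uniform hypertrees and carefully track how excess enters the combinatorial count on $v$ vertices with $e$ edges. An alternative is to bound instead the count of \emph{minimal} complex subgraphs (for example, pairs of hypercycles sharing a path), noting that any complex component must contain such a witness and that these witnesses have bounded size, which avoids any summation over arbitrarily large $v$. Either route reduces to checking that the resulting series converges precisely under $cq(q-1) < 1$, which is the content of the subcriticality threshold, after which Markov's inequality yields the stated $1-O(1/n)$ bound.
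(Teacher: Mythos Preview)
The paper does not prove this lemma; it is stated with citations to \cite{schmidt1985component,karonski2002phase} and used as a black box. So there is no in-paper proof to compare against, and your first-moment approach is indeed the standard route taken in that literature.

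That said, two points in your sketch deserve correction. First, the displayed bound
\[
O\!\left(\frac{v^{qe}\, c^e}{v!\, e!}\cdot m^{v-(q-1)e}\right)
\]
is what one gets by counting \emph{all} $q$-uniform hypergraphs on $v$ labelled vertices with $e$ edges, not just connected ones. If you sum this over $e$ at fixed excess, Stirling gives a geometric ratio of order $(q-1)\,c\,e^{q}$, not $cq(q-1)$, so the series only converges for $c<1/((q-1)e^{q})$, which is strictly smaller than the claimed threshold $1/(q(q-1))$. To recover the correct threshold you really do need the connectivity constraint to cut the count down---this is precisely where the Cayley/Wright-type enumeration you allude to does the work, and it is not optional. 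Your prose acknowledges this obstacle, but the displayed formula does not yet reflect it.

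Second, in your alternative route, minimal complex witnesses (two hypercycles sharing a vertex, a theta structure, etc.) are \emph{not} of bounded size: the cycles and connecting paths can have arbitrary length. What makes that route work is not boundedness but that such a witness on $\ell$ edges has only $O(\ell)$ labelled embeddings up to the choice of its vertices, so the expected count contributes a term of order $(cq(q-1))^{\ell}/m$; summing the geometric series in $\ell$ converges exactly when $cq(q-1)<1$. So the approach is fine, but the reason you gave for why it works is not.
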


First we prove the theorem in the case that the breadth first search tree around each vertex $v$ in the graph is generated according to an idealized branching process.  Each vertex in the tree has an i.i.d. number of child edges drawn according to $\pois(cq)$, which each in turn connect to $q-1$ child vertices.  We also assume that each component is a tree, so the breadth first search tree is exactly $v$'s connected component, and then relax this assumption to include unicyclic components later.

Using this model we can show that the probability that a vertex's error propagates out to a given a radius shrinks doubly exponentially in that radius, while the number of vertices in that radius is only singly exponential, thus the expected number of vertices the error propagates to is constant.  
Later we will use the results of \cite{jiang2014parallel} to argue that this idealized branching process is sufficiently close to the true distribution and thus our results still hold.  

Recall that we are performing the peeling in a breadth first fashion.  Let $\PR_v$ be the round (starting from 1) in which vertex $v$ was peeled, and let $\survive_{v,j}$ be the event that vertex $v$ has not been peeled after $j$ rounds.  Initially only one random vertex has an error, and all the rest are $0$, so each vertex has an expected starting error of $1/m$.  Let $V_{v,r}$ be the number of vertices within radius $r$ of $v$.  Because we assume $v$'s component is a tree,
\begin{align*}
\E[\EC_v] &\leq \E[\text{number of errors in vertices within radius } \PR_v \text{ of } v] \\
&= \frac{1}{m} \E[V_{v,\PR_v}] \\
&= \frac{1}{m} \sum_{j=1}^\infty \Pr[\PR_v = j] \E[V_{v,j} \; | \; \PR_v = j] \\
&\leq \frac{1}{m} \sum_{j=1}^\infty \Pr[\survive_{v,j-1}] \E[V_{v,j} \; | \; \PR_v = j]\\
&\leq \frac{1}{m} \sum_{j=1}^\infty \Pr[\survive_{v,j-1}] \E[V_{v,j} \; | \; \survive_{v,j-1}].
\end{align*}

The idealized branching process allows us to bound $\PR[\survive_{v,j-1}]$ with the following procedure.  Consider the neighborhood of a vertex $v$ of distance $t$.  For $j=1$ up to $j = t-1$, we delete all vertices at distance $t-j$ from $v$ which have 0 child edges (those edges branching out from our process).  Then, at round $t$, we delete $v$ if after all of that it has degree at most $1$.  Let $\lambda_t$ be the probability that vertex $v$ is not deleted after this $t$ round procedure.  Although this procedure forces a certain ordering on the deletion of vertices, it is still the case that $\Pr[\survive_{v,t}] = \lambda_t$.  This is because although in the real peeling process $v$ might be deleted before $t$ rounds, if it is not deleted after $t$ rounds of this ordering, then it also will not be deleted after $t$ rounds of the real process.

Let $\rho_j$ be the probability that a vertex $u$, which is a distance $t-j$ from $v$, is not deleted after $j$ rounds.  Here $p_0 = 1$, and we find that
\[
\rho_j = \Pr[\pois(\rho_{j-1}^{q-1} cq) \geq 1].
\]
This then yields that
\[
\lambda_j = \Pr[\pois(\rho_{j-1}^{q-1} cq) \geq 2].
\]

These equations come from the fact that each node $u$ has $\pois(cq)$ child edges, each of which survives the previous round with probability $\rho_{j-1}$, independent of each other child edge.  By the splitting property of Poisson distributions \cite{mitzenmacher2005probability}, the number of surviving child edges of $u$ is thus distributed as $\pois(\rho_{j-1}^{q-1} cq)$.

For sufficiently small $c$ (in our case $c < 1/(q(q-1))$ suffices), this procedure is guaranteed to delete $v$ for sufficiently large $t$.  In other words, $\lim_{t \rightarrow \infty} \lambda_t = 0$.  \cite{jiang2014parallel} uses this fact to argue that for some constant $I > 0$ and $0 < \tau < 1$,
\[
\lambda_{I+t} \leq \tau^{2(q-1)^t},
\]
giving us a strong bound on $\Pr[\survive_{v,j-1}]$.  All that remains is to bound 
$\E[V_{v,j} \; | \; \survive_{v,j-1}].$
We use the same branching process to reason about the number of vertices at some distance $j$ from $v$.

First lets look at the expectation $\E[V_{v,j}].$ without any conditioning.  Each vertex has $\pois(cq)$ child edges, meaning
\begin{equation*}\E[V_{v,1}] = cq(q-1)+1,\end{equation*}
and an iterative application of Wald's equation \cite{wald1944cumulative} yields
\begin{equation*}\E[V_{v,t}] = \sum_{j=0}^t (cq(q-1))^j.\end{equation*}

Conditioning on $\survive_{v,j-1}$ increases the values of each of these Poisson distributions.  For $j = 2$, it means that $v$ has at least two child edges.  For $j = 3$, it means that $v$ has at least $2$ \emph{saturated} child edges.  We say a child edge is saturated if each of its $(q-1)$ vertices has at least one child edge.  For $j = 4$, the conditioning means that at least $2$ of $v$'s child edges are saturated by saturated child edges (on each of $v$'s child edge's vertices, there is at least one saturated child edge).  In general, conditioning on $j$ means that $v$ has $2$ child edges that are saturated by child edges that are saturated by child edges that are saturated by child edges, etc. $j-2$ times.  We analyze this case in \autoref{lem:conditioned_branching} and show that
\begin{equation*}\E[V_{v,j} \; | \; \survive_{v,j-1}] = O((q-1)^j).\end{equation*}

Finally putting this all together we have,
\begin{align*}
&\E[\EC_v] \leq \frac{1}{m} \sum_{j=1}^\infty \Pr[\survive_{v,j-1}] \E[V_{v,j} \; | \; \survive_{v,j-1}] \\
&\leq \frac{1}{m} \sum_{t=1}^\infty \lambda_{t-1} O((q-1)^t) \\
&= \frac{1}{m} \left(\sum_{t=1}^I \lambda_{t-1} O((q-1)^t) + \sum_{t=1}^\infty \lambda_{I+t-1} O((q-1)^{I+t})\right)\\
&\leq \frac{1}{m} \left(\sum_{t=1}^{I} O((q-1)^t) + \sum_{t=1}^{\infty} \tau^{2(q-1)^t} O((q-1)^{I+t})  \right)\\
&\leq \frac{O(1)}{m}.
\end{align*}

In the last line, $O(1)$ is using that $I, q, \phi$ and $\tau$ are all constants, together with the fact that $O((q-1)^{I+t})$ grows exponentially in $t$ while $\tau^{2(q-1)^t}$ shrinks doubly exponentially, so $\tau^{2(q-1)^t} O((q-1)^{I+t})$ converges to $0$.
Thus $\E[\EC_v] = \frac{O(1)}{m}$ for all $v$, so $\E[\sum_{v=1}^m \EC_v] = O(1)$.  

Now we address the case when a component is unicyclic.  We can bound $\E\left[\sum_{v=1}^m {\EC_v}\right]$ in this case by its value in the tree case with the addition of one additional edge, making a cycle.  Peeling this extra edge can only increase $\E\left[\sum_{v=1}^m {\EC_i}\right]$ by a factor of at most $q = O(1)$, since it only adds a single $\EC_v$ to $q-1$ other vertices, each of which ultimately contributes only $O(1)$ times its value to final sum.

Thus, if the components are all trees or unicyclic and the breadth first search tree from each vertex is generated according to the idealized branching process, then by Markov's inequality, $\sum_{v=1}^m \EC_v = O(1)$ with probability at least $9/10$.  Now we argue that the actual peeling process is sufficiently close to this idealized branching process.  To do this we need a few lemmas from \cite{jiang2014parallel}.

\begin{lemma}[Theorem 1 of \cite{jiang2014parallel}]\label{lem:parallelthm1}
Let $q \geq 3$, and let $c < c_q^*$.  With probability $1 - o(1)$, the breadth-first process of the 2-core in a random hypergraph $G_c^q$ terminates after $\log \log n / \log(q-1) + O(1)$ rounds.
\end{lemma}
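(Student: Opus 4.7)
The plan is to couple BFS peeling on $G_c^q$ with the Galton--Watson branching process already analyzed in this excerpt, and then exploit the double-exponential decay of the per-vertex survival probability to carry out a union bound over all vertices. Recall the recursion $\rho_j = \Pr[\pois(\rho_{j-1}^{q-1} c q) \geq 1]$ for the probability that a vertex has not been peeled after $j$ rounds in the idealized tree model. When $c < c_q^*$, the map $f(x) = 1 - e^{-cq x^{q-1}}$ has $0$ as its only fixed point in $[0,1)$, and for $q \geq 3$ we have $f'(0) = 0$, so $\rho_j \downarrow 0$.

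The next step is to quantify the rate. A Taylor expansion near $0$ gives $f(x) = cq\, x^{q-1}(1 + O(x^{q-1}))$, so once $\rho_j$ drops below some constant $\rho^{*}$ after a constant number of rounds $I$, each subsequent iteration satisfies $\rho_{j+1} \leq 2 c q\, \rho_j^{q-1}$. Iterating $t$ times yields $\rho_{I+t} \leq \tau^{(q-1)^t}$ for some constant $\tau \in (0,1)$, matching the double-exponential decay asserted just above Lemma \ref{lem:parallelthm1}. I would then set $T = I + \lceil \log \log n / \log (q-1) \rceil + C$ for a sufficiently large constant $C$, so that $\rho_T = o(1/n^2)$.

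The third step is to transfer this bound from the branching process to $G_c^q$ by a standard local-coupling argument. Explore the $T$-neighborhood of a fixed vertex $v$ hyperedge-by-hyperedge. Conditioned on the first $T$ generations forming a tree, the exploration is distributed exactly as the idealized branching process, so $v$ survives $T$ rounds of peeling with probability at most $\rho_T$. The expected number of vertices within distance $T$ is $O((cq(q-1))^T) = \poly\log n$, and a first-moment argument bounds the probability that a cycle appears inside this neighborhood by $O(\poly\log(n)/n)$. Adding $\rho_T$ and taking a union bound over all $n$ vertices gives that with probability $1 - o(1)$ every vertex outside the 2-core has been peeled by round $T = \log \log n / \log(q-1) + O(1)$, at which point the BFS peeling terminates.

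The principal obstacle is making this coupling sharp enough to survive the union bound over all $n$ vertices. The branching-process bound $\rho_T$ is only an accurate model on the event that the $T$-neighborhood is both acyclic and of ``typical'' size, and Poisson tails could produce a lone vertex with an atypically large neighborhood whose fate is poorly described by $\rho_T$. I would handle this by separately truncating the exploration at $\poly\log n$ vertices, using a Chernoff bound on the generation sizes of the dominating branching process to show that all $n$ neighborhoods are simultaneously of typical size with probability $1 - o(1)$, and only then invoking the branching-process estimate on this conditioning.
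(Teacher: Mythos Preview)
The paper does not supply its own proof of this lemma: it is quoted verbatim as Theorem~1 of \cite{jiang2014parallel} and used as a black box, alongside Lemmas~3 and~5 of the same reference (stated here as \autoref{lem:parallellem3} and \autoref{lem:parallellem5}). So there is nothing in the paper to compare your argument against directly.

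That said, your sketch is the right one, and it lines up with the machinery the paper already imports from \cite{jiang2014parallel}: the recursion $\rho_j = 1 - e^{-cq\rho_{j-1}^{q-1}}$, the double-exponential bound $\rho_{I+t} \le \tau^{(q-1)^t}$, and the branching-process coupling are exactly the ingredients the paper quotes just above the lemma. Your choice of $T = \log\log n/\log(q-1) + O(1)$ to force $\rho_T = o(1/n^2)$ is the correct calibration. The obstacle you flag---that a naive per-vertex coupling error of $\poly\log(n)/n$ does not survive a union bound over $n$ vertices---is real, and the fix you propose (condition first on the global event that every $T$-neighborhood has at most $\poly\log n$ vertices) is precisely the role played by the event $E_1$ in \autoref{lem:parallellem3} and the total-variation bound in \autoref{lem:parallellem5}. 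One point you leave implicit: even after conditioning on $E_1$, some $T$-neighborhoods may contain a cycle, so the branching-process bound does not apply to them directly; you would still need to argue that such vertices are few (first moment: $O(\poly\log n)$ of them in expectation) and that each is nonetheless peeled within $O(1)$ further rounds once the tree-like bulk around it has emptied.
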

Here $c_q^*$ is the threshold density below which random hypergraphs have empty 2-cores with high probability.  \cite{molloy2004pure} gives the formula for $c_q^*$ as
\begin{equation*}c_q^* = \min_{x > 0} \frac{x}{q(1-e^{-x})^{q-1}}.\end{equation*}
It is important to note that $c_q^* > 1 / (q(q-1))$, so our choice of $c$ satisfies the conditions of the theorem.

Let $E_1$ be the event that, for all vertices $v \in G_c^q$, there are at most $\log^{c_2} n$ vertices within a radius of $c_1 \log \log n$ around $v$.  There exist constants $c_1, c_2 > 0$ depending on $c$ and $q$ such that the following lemma holds.

\begin{lemma}[Lemma 3 of \cite{jiang2014parallel}]\label{lem:parallellem3}
For any event $E$, $\Pr[E] \geq \Pr[E \; | \; E_1] - 1/n$.
\end{lemma}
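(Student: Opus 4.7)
My plan is to reduce the stated inequality to the high-probability claim $\Pr[E_1] \geq 1 - 1/n$ and then establish this tail bound via a branching-process comparison.

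For the reduction, for any event $E$,
\[
\Pr[E] \;\geq\; \Pr[E \cap E_1] \;=\; \Pr[E \mid E_1]\,\Pr[E_1] \;\geq\; \Pr[E \mid E_1] - \Pr[\overline{E_1}],
\]
using $\Pr[E \mid E_1] \leq 1$. Thus it suffices to show $\Pr[\overline{E_1}] \leq 1/n$. Since $E_1$ is a conjunction over the $O(n)$ vertices of $G_c^q$, by a union bound it is enough to prove that for each fixed vertex $v$, the probability that $v$'s radius-$(c_1 \log \log n)$ neighborhood contains more than $\log^{c_2} n$ vertices is at most $1/n^2$.

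Next, I would couple the breadth-first exploration starting at $v$ with an idealized Galton-Watson branching process. In $G_c^q$ the number of hyperedges incident to a given vertex is a Binomial with mean $cq$, which is stochastically dominated by $\pois(cq)$. Each revealed hyperedge contributes at most $q-1$ new vertices to the next BFS layer, so the radius-$r$ neighborhood size is dominated by the total progeny of a Galton-Watson tree in which each vertex has $\pois(cq)$ hyperedge children, each expanding into $q-1$ vertex children. The effective vertex offspring mean is $\lambda = cq(q-1)$; since $c < 1/(q(q-1))$, we have $\lambda < 1$, so the process is strictly subcritical. (Sampling-without-replacement effects in finite $m$ only strengthen this domination.)

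The crux of the argument is an exponential tail on the total progeny $W$ of this subcritical tree. For a Poisson-offspring Galton-Watson process with mean $\lambda < 1$, the moment generating function $\phi(\theta) = \E[e^{\theta W}]$ satisfies the standard functional fixed-point equation $\phi(\theta) = e^{\theta + cq(\phi(\theta)^{q-1} - 1)}$ and is finite in a neighborhood of $\theta = 0$, so a Chernoff bound yields $\Pr[W \geq s] \leq \exp(-\Omega(s))$ for all $s \geq 1$. Setting $s = \log^{c_2} n$ with any $c_2 > 1$ gives $\Pr[W \geq s] \leq \exp(-\Omega(\log^{c_2} n)) \leq 1/n^2$ for sufficiently large $n$, which combined with the union bound and the initial reduction completes the proof. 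The main obstacle is the quantitative subcritical tail estimate; the depth truncation at $c_1 \log \log n$ is harmless because the truncated total progeny is only smaller than the full total progeny, so the same Chernoff bound applies uniformly.
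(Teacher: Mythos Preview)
The paper does not supply its own proof of this statement: it is quoted as Lemma~3 of \cite{jiang2014parallel} and invoked as a black box, so there is nothing in the present paper to compare your argument against.

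Your reduction $\Pr[E] \geq \Pr[E \cap E_1] = \Pr[E \mid E_1]\Pr[E_1] \geq \Pr[E \mid E_1] - \Pr[\overline{E_1}]$ is correct and standard; the entire content of the lemma is the bound $\Pr[\overline{E_1}] \leq 1/n$. Your branching-process route to that bound is sound in outline and adequate for this paper's needs. Two small points. First, a $\mathrm{Binomial}$ with mean $cq$ is \emph{not} stochastically dominated by $\pois(cq)$; the correct coupling gives domination by $\pois(-N\ln(1-p)) = \pois(cq(1+o(1)))$, which is harmless here since $c < 1/(q(q-1))$ strictly and the constants $c_1,c_2$ are at your disposal. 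Second, your argument relies on subcriticality $cq(q-1) < 1$, which is exactly the regime used in this paper; the source \cite{jiang2014parallel} works more generally below the $2$-core threshold $c_q^*$, where the idealized process is supercritical and one must bound the depth-$(c_1\log\log n)$ ball directly rather than via the full total progeny. Your closing remark that the truncation ``only makes things smaller'' would not rescue the argument in that broader regime, but for the application in this paper your subcritical approach suffices.
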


\begin{lemma}[Lemma 5 of \cite{jiang2014parallel}]\label{lem:parallellem5}
Let $X_1(v)$ denote the random variable describing the tree of depth $i = O(\log \log n)$ rooted at $v$ in the idealized branching process.  Let $X_2(v)$ denote the random variable describing the BFS tree of depth $i$ rooted at $v$ in $G_c^q$, conditioned on the event $E_1$ occurring.  The total variation distance between $X_1(v)$ and $X_2(v)$ is at most $\poly \log (n) / n$.
\end{lemma}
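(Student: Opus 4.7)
The plan is to couple the BFS exploration of $X_2(v)$ in $G_c^q$ with the idealized branching-process construction of $X_1(v)$ so that, under the event $E_1$, the two trees coincide with probability at least $1 - \poly\log(n)/n$. Since total variation distance is at most the probability that a coupling fails, this yields the stated bound.

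The coupling reveals both trees in BFS order, one vertex at a time. When a vertex $u$ is visited in $X_2(v)$, we expose only those hyperedges of $G_c^q$ incident to $u$ that have not yet been revealed; by the independence of edges in $G_c^q$, and using that the edges revealed so far involve only the already-visited vertex set $S$, the number of such new edges is $\mathrm{Binomial}\!\left(N, cq/\binom{m-1}{q-1}\right)$ for some $N = \binom{m-1}{q-1} - O(\poly\log n)$ under $E_1$, and each such edge has its other $q-1$ endpoints distributed uniformly over the $(q-1)$-subsets of the not-yet-visited vertices. In the idealized process, $u$ instead spawns $\pois(cq)$ child edges, each contributing $q-1$ entirely fresh vertices.

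There are two sources of coupling failure per visited vertex. First, the Binomial count must be matched to $\pois(cq)$; by Le~Cam's theorem the total variation distance between $\mathrm{Binomial}(N, cq/N)$ and $\pois(cq)$ is $O(1/m)$, and the shift in $N$ by $O(\poly\log n)$ vertices contributes only an additional $O(\poly\log(n)/m)$. Second, even when the counts agree, a freshly sampled $(q-1)$-subset in $X_2(v)$ might collide with the already-visited set $S$; under $E_1$ we have $|S| \le \log^{c_2} n$, and with $q - 1 = O(1)$ endpoints per edge this collision probability is at most $O(\poly\log(n)/m)$ per edge. Under $E_1$ there are at most $\log^{c_2} n$ visited vertices and $\poly\log(n)$ new edges in total, so a union bound gives an overall failure probability of $\poly\log(n)/m = \poly\log(n)/n$.

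The main subtlety is that $E_1$ is defined in terms of the full graph $G_c^q$, not just the BFS-revealed portion, so the conditional law $\Pr[\cdot \mid E_1]$ differs from the unconditional law under which the coupling is most naturally analyzed. I would handle this by invoking \autoref{lem:parallellem3}: the coupling-failure event $F$ satisfies $\Pr[F \mid E_1] \le \Pr[F] + 1/n$, which is absorbed into the claimed $\poly\log(n)/n$ bound. The harder technical step is verifying the Le~Cam-type approximation with the correct dependence on the running size of $S$; once that is in hand, the rest of the argument is a routine union bound.
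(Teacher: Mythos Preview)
The paper does not prove this lemma at all; it is quoted as Lemma~5 of \cite{jiang2014parallel} and used as a black box in the proof of \autoref{thm:peel}. There is therefore no ``paper's own proof'' to compare against. Your sketch is the standard coupling argument for showing that local neighborhoods in a sparse random hypergraph are close in distribution to a Galton--Watson tree, and it is essentially what the original reference does.

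One comment on the subtlety you raise at the end. You propose to run the coupling unconditionally and then transfer the bound on the failure event $F$ to the conditional measure via \autoref{lem:parallellem3}. But your per-step estimates---the $O(\poly\log(n)/m)$ collision probability and the $\poly\log n$ cap on the number of visited vertices and edges---already presuppose $E_1$; without it the BFS tree could be far larger and the union bound would not close. The cleaner route is to observe that $X_2(v)$ is \emph{by definition} the law conditioned on $E_1$, so the total variation distance is bounded directly by $\Pr[\text{coupling fails}\mid E_1]$, and on $E_1$ all of your size bounds hold deterministically. The appeal to \autoref{lem:parallellem3} is then unnecessary for this lemma (it is used elsewhere in the paper to remove the conditioning on $E_1$ from the final statement about $\sum_v \EC_v$, not here). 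This is a minor reorganization; the substance of your argument is correct.
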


\begin{proof}[Proof of \autoref{thm:peel}]
By \autoref{lem:parallelthm1}, the peeling process completes in $O(\log \log n)$ rounds with probability $1 - o(1)$ in $G_c^q$.  This, together with our analysis of the idealized branching process, \autoref{lem:almosttrees}, and \autoref{lem:parallellem5} implies that, conditioned on $E_1$, with probability at least $9/10 - o(1) - O(1/n) - \poly \log (n) / n$, $\sum_{v=1}^m \EC_v = O(1)$ on $G_c^q$.  Putting this together with \autoref{lem:parallellem3} and \autoref{lem:pois} yields that with probability at least
\begin{equation*}9/10 - o(1) - O(1/n) - \poly \log (n) / n - 1 /n,\end{equation*} $\sum_{v=1}^m \EC_v = O(1)$ on $G_{m,cm}^q$.  For sufficiently large $n$, we have the lemma.
\end{proof}

Finally we prove the final piece to bound $\EMD(Z_A,X_A)+\EMD(Z_B,X_B)$ by reasoning about the averaging and rounding that occurs in the peeling process of RIBLTs.

\lemaveragingerror*
\begin{proof}
We will prove this for $\EMD(Z_A,X_A)$, and the argument for $\EMD(Z_B,X_B)$ is identical.

By an argument identical to that of \autoref{lem:inbucketmatch}, the expected sum over each point in $Z_A$ of the distance from that point to the furthest other point in $Z_A$ falling in the same bucket is bounded by $O(\alpha^{-1} \log n) \cdot \EMD_k(S_A,S_B)$.  By \autoref{thm:peel}, with probability at least $7/8$, the sum of these maximum distances in $Q_A$ grows by $O(\mu)$, so by \autoref{lem:inbucketmatch} the sum is still $O(\alpha^{-1} \log n) \cdot \EMD_k(S_A,S_B)$.

Now we argue that the averaging and rounding the occurs when the RIBLT extracts multiple points with the same key doesn't have too large an impact.  We will prove the case when $[\Delta]^d = \{0,1\}^d$, which intuitively can be though of as the ``hard'' case since the rounding has maximum impact here.  The general case then follows.  Let $x_1,\ldots,x_m\in \{0,1\}^d$ be our points that that hash to the same bucket, which we wish to extract.  Let $r$ be the randomized rounding of $\sum_{i=1}^m x_i / m$.  For $j \in [d]$, let $p_j = \sum_{i=1}^m x_{i,j} / m$.  $r_j = 1$ with probability $p_j$.  Let $B = \max_{i=1}^m ||x_1 - x_i||_q$.  Without loss of generality, let $x_1 = \{0\}^d$.  
\begin{align*}
\E[||x_1-r||_q] &\leq \E[||x_1-r||^q_q]^{1/q} \text{ (Jensen's inequality)} \\
&= \left(\sum_{j=1}^d p_j\right)^{1/q} \\
&\leq \left(\frac{m-1}{m}B^q\right)^{1/q} < B.
\end{align*}
The second to last inequality here used the fact that $$n \sum_{j=1}^d p_j = \sum_{i=1}^m\sum_{j=1}^d x_{i,j} \leq (m-1) B^q.$$

Thus, since the expected sum of the maximum distances is $O(\alpha^{-1} \log n) \cdot \EMD_k(S_A,S_B)$, the expected sum of distances between points and the averages of their colliding points (and thus $\EMD(Q_A,X_A)$) is $O(\alpha^{-1} \log n) \cdot \EMD_k(S_A,S_B)$.  Then by Markov's inequality and the triangle inequality, we have the lemma.
\end{proof}

\section{Conditional Probability Lemmas}
Let $X_1,\ldots,X_n$ and $Y_1,\ldots,Y_n$ be $\{0,1\}$ random variables.  We will sometimes abuse notation slightly and use $X_i$ to refer to the event that $X_i=1$. The pairs $(X_1,Y_1),\ldots,(X_n,Y_n)$ are i.i.d. and $\Pr[X_i = 1] = p$ and $\Pr[Y_i = 1] = q$.  We assume $n \geq 2$, and $0 < p,q < 1$.

\begin{lemma}
\label{lem:condition_on_one}
$\Pr\left[\cap_i X_i \; | \; \cup_i Y_i\right] \leq p^{n-1}.$
\end{lemma}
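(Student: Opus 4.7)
The plan is to reduce the inequality to a closed-form expression and finish with a short convexity check. Let $t := \Pr[X_1 = 1, Y_1 = 1]$, so $0 \le t \le \min(p,q)$. Using that the pairs $(X_i, Y_i)$ are i.i.d.\ across $i$,
\[
\Pr\!\left[\bigcap_i X_i \cap \bigcap_i \overline{Y_i}\right] = (p - t)^n, \qquad \Pr\!\left[\bigcap_i \overline{Y_i}\right] = (1-q)^n,
\]
and therefore
\[
\Pr\!\left[\bigcap_i X_i \,\Bigg|\, \bigcup_i Y_i\right] = \frac{p^n - (p-t)^n}{1 - (1-q)^n}.
\]

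The numerator is increasing in $t$ while the denominator does not depend on $t$, so it suffices to verify the bound at $t = \min(p,q)$. When $p \le q$ (so $t = p$), the ratio equals $p^n / (1 - (1-q)^n)$, and the target inequality reduces to $p \le 1 - (1-q)^n$; this holds because $p \le q \le 1 - (1-q)^n$, where the final step uses the elementary estimate $(1-q)^n \le 1 - q$ valid for $n \ge 1$ and $q \in [0,1]$.

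In the remaining (and main) case $q \le p$, so $t = q$, dividing the target inequality by $p^{n-1}$ and setting $u := q/p \in [0,1]$ turns it into
\[
(1 - up)^n \le (1 - p) + p(1 - u)^n.
\]
I would prove this by considering $G(p) := (1-up)^n - p(1-u)^n - (1-p)$ as a function of $p \in [0,1]$ for fixed $u$. Direct computation gives $G(0) = G(1) = 0$ and $G''(p) = n(n-1)u^2(1-up)^{n-2} \ge 0$, so $G$ is convex on $[0,1]$ and hence non-positive throughout, which is the desired inequality.

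The main obstacle is choosing the right reparameterization: once the problem is reduced to the single free parameter $t$ and monotonicity pushes us to the extremal joint distribution, the residual analytic claim collapses to the two-endpoint convexity check above. Expressing things through $u = q/p$ is what makes the convexity visible; working directly in $(p,q)$ leaves a messier expression whose sign is not obvious.
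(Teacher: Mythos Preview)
Your proof is correct and takes a genuinely different route from the paper's. The paper argues probabilistically: it splits on the sign of the correlation between $X_1$ and $Y_1$, and in each case compares $\Pr[\cap_i X_i \mid \cup_i Y_i]$ to either $\Pr[\cap_i X_i]$ or $\Pr[\cap_i X_i \mid Y_1]$ via a stochastic-dominance argument on $C_Y=\sum_i Y_i$ (using that conditioning on $\cup_i Y_i$ shifts the distribution of $C_Y$ no further than conditioning on $Y_1$). You instead write the conditional probability in closed form as $(p^n-(p-t)^n)/(1-(1-q)^n)$, push to the extremal joint law $t=\min(p,q)$ by monotonicity, and finish with a one-variable convexity check after the substitution $u=q/p$. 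Your argument is shorter and entirely analytic; the paper's argument is more structural and, importantly, its case-analysis template is what they reuse to prove the companion result with two conditioning events (their \autoref{lem:condition_on_two}), where a closed-form-plus-convexity approach would be messier. So each approach buys something: yours is cleaner here, theirs scales to the next lemma.
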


We use this in \autoref{lem:inbucketmatch} by letting $X_i$ be the event that a given pair of points ($u$ and $v$) have equal values for their $i$th MLSH function.  $Y_i$ is the event that the $v$ and $v$'s optimal matching point do not have equal values for their $i$th MLSH function.  Since each MLSH function is drawn i.i.d., these collision events are i.i.d. but correlations can exist between collision events for a given one of these functions, so this setting of $(X_i,Y_i)$ fulfills our criteria.  Thus $\Pr\left[\cap_i X_i \; | \; \cup_i Y_i\right]$ is the probability that $u$ and $v$ have all equal hash values (they land in the same bucket) given that $v$ does not have all equal hash values with its optimal match (they don't land in the same bucket).

\begin{proof}[Proof of \autoref{lem:condition_on_one}]

Since the $(X_i,Y_i)$ pairs are i.i.d., the effect of conditioning on some function of the $Y_i$s on the probability of $\cap_i X_i$ can be quantified in terms of the conditioning's effect on the distribution of $C_Y = \sum_i Y_i$.  In particular, for any function $f: \{0,1\}^n \rightarrow \{0,1\}$,
\begin{equation} \label{eq:cy}
\Pr[\cap_i X_i \; | \; f(Y_1,\ldots,Y_n)] = \sum_{j=0}^n \Pr[X_1 \; | \; Y_1]^j \Pr[X_1 \; | \; \bar{Y}_1]^{n-j} \Pr[C_Y = j \; | \; f(Y_1,\ldots,Y_n)].
\end{equation}
We will use this to relate $\Pr[\cap_i X_i \; | \; \cup_i Y_i]$ to $\Pr[\cap_i X_i]$ and $\Pr[\cap_i X_i \; | \; Y_1]$.

Let us divide our analysis in two cases, based on whether or not $X_1$ and $Y_1$ are positively correlated.  First, consider the case that $\Pr[X_1 \; | \; Y_1] \leq \Pr[X_1].$ In this case we argue that
\begin{equation*}\Pr[\cap_i X_i \; | \; \cup_i Y_i] \leq \Pr[\cap_i X_i] = p^n.\end{equation*}  
To see this, we compare $\Pr[\cap_i X_i \; | \; \cup_i Y_i]$ and $\Pr[\cap_i X_i]$ via \autoref{eq:cy}.
First note that for all $j \in [n]$, $\Pr[C_Y \geq j] \leq \Pr[C_Y \geq j \; | \; \cup_i Y_i]$.  Therefore, $f(Y_1,\ldots,Y_n)=\cup_i Y_i$ shifts the probability mass of $\Pr[C_Y = j \; | \; f(Y_1,\ldots,Y_n)]$ later in the series than $f(Y_1,\ldots,Y_n)=1$.  This shift decreases the value of the sum because $\Pr[X_1 \; | \; Y_1]^j \Pr[X_1 \; | \; \bar{Y}_1]^{n-j}$ is decreasing in $j$ (since $\Pr[X_1 \; | \; Y_1] \leq \Pr[X_1 \; | \; \bar{Y}_1]$).

Now consider the case that $\Pr[X_1 \; | \; Y_1] > \Pr[X_1].$  Here we show that \begin{equation*}\Pr[\cap_i X_i \; | \; \cup_i Y_i] \leq \Pr[\cap_i X_i \; | \; Y_1] \leq p^{n-1}.\end{equation*}

Now $\Pr[X_1 \; | \; Y_1] > \Pr[X_1 \; | \; \bar{Y}_1]$, so in \autoref{eq:cy}, $\Pr[X_1 \; | \; Y_1]^j \Pr[X_1 \; | \; \bar{Y}_1]^{n-j}$ is increasing in $j$, so it suffices to show that for all $j \in [n]$, $\Pr[C_Y \geq j \; | \; Y_1] \geq \Pr[C_Y \geq j \; | \; \cup_i Y_i]$.  Let $F_Y \in [n] \cup \{\emptyset\}$ be a random variable equal to the first index $i$ for which $Y_i=1$.  $F_Y = \emptyset$ if $Y_i = 0$ for all $i \in [n]$.  We observe that
\begin{align*}
\Pr[C_Y \geq j \; | \; \cup_i Y_i] &= \sum_{i=1}^n \Pr[C_Y \geq j \; | \; (\cup_i Y_i)\cap(F_Y = i)] \Pr[F_Y = i \; | \; \cup_i Y_i] \\
&= \sum_{i=1}^n \Pr[C_Y \geq j \; | \; F_Y = i] \Pr[F_Y = i \; | \; \cup_i Y_i].
\end{align*}
Clearly $\Pr[C_Y \geq j \; | \; F_Y = i] \geq \Pr[C_Y \geq j \; | \; F_Y = i+1]$, thus 
\begin{align*}
\Pr[C_Y \geq j \; | \; Y_1] &= \Pr[C_Y \geq j \; | \; F_Y = 1] \\
&\geq \sum_{i=1}^n \Pr[C_Y \geq j \; | \; F_Y = i] \Pr[F_Y = i \; | \; \cup_i Y_i] \\
&= \Pr[C_Y \geq j \; | \; \cup_i Y_i].
\end{align*}
\end{proof}


Now we prove an analogous lemma for the case when we are conditioning on two points missing their optimal matches.  Let $X_1,\ldots,X_n$, $Y_1,\ldots,Y_n$ and $Z_1,\ldots,Z_n$ be $\{0,1\}$ random variables.  The triples $(X_1,Y_1,Z_1),\ldots,(X_n,Y_n,Z_n)$ are i.i.d., $\Pr[X_i = 1] = p$, $\Pr[Y_i = 1] = q$ and $\Pr[Z_i = 1] = r$.  We assume $n \geq 3$, and $0 < p,q,r < 1$.

\begin{lemma}
\label{lem:condition_on_two}
$\Pr\left[\cap_i X_i \; | \; \left(\cup_i Y_i\right) \cap \left(\cup_i Z_i\right)\right] \leq p^{n-2}.$
\end{lemma}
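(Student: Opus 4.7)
The plan is to extend the two-case proof of Lemma~\ref{lem:condition_on_one} to handle an intersection of two ``union'' conditions. Because the triples $(X_i, Y_i, Z_i)$ are i.i.d., inclusion--exclusion on the complementary events $\cap_i \bar{Y}_i$ and $\cap_i \bar{Z}_i$ yields the explicit formula
\begin{equation*}
\Pr\!\left[\cap_i X_i \; | \; G\right] = \frac{p^n - \Pr[X_1, \bar{Y}_1]^n - \Pr[X_1, \bar{Z}_1]^n + \Pr[X_1, \bar{Y}_1, \bar{Z}_1]^n}{1 - \Pr[\bar{Y}_1]^n - \Pr[\bar{Z}_1]^n + \Pr[\bar{Y}_1, \bar{Z}_1]^n},
\end{equation*}
where $G = (\cup_i Y_i) \cap (\cup_i Z_i)$. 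This reduces the claim to an algebraic inequality in these six probabilities.

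I would prove this inequality by a case split on the correlations between $X_1$ and $Y_1$ and between $X_1$ and $Z_1$, mirroring the two cases in the proof of Lemma~\ref{lem:condition_on_one}. Viewing the conditional expectation as a weighted average of products $\prod_{a,b} p_{ab}^{N_{ab}}$ (with $p_{ab} = \Pr[X_1 \; | \; Y_1 = a, Z_1 = b]$ and $N_{ab}$ the count of indices of type $(a,b)$), the conditioning on $G$ shifts the distribution of $(N_{ab})$ away from configurations with $N_{10}+N_{11}=0$ or $N_{01}+N_{11}=0$. In the ``both negative'' case ($\Pr[X_1 \; | \; Y_1] \leq p$ and $\Pr[X_1 \; | \; Z_1] \leq p$), this shift only decreases the weighted average, so $\Pr[\cap_i X_i \; | \; G] \leq \Pr[\cap_i X_i] = p^n \leq p^{n-2}$. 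In the remaining case (WLOG $\Pr[X_1 \; | \; Y_1] > p$), I would apply a two-step iterated reduction: first condition on $F_Y = \min\{i : Y_i = 1\}$ to gain one factor of $p$ exactly as in the second case of Lemma~\ref{lem:condition_on_one}, then view the remaining $n-1$ indices $i \neq F_Y$ as a collection of i.i.d.\ pairs $(X_i, Z_i)$ under the appropriate conditional distribution and apply Lemma~\ref{lem:condition_on_one} again to the condition $\cup_{i \neq F_Y} Z_i$ to gain the second factor.

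The main obstacle is the sub-case when the $Y$- and $Z$-witnesses coincide ($F_Y = F_Z$): the iterated reduction then only ``uses up'' one index and would naively give the bound $p^{n-1}$ rather than $p^{n-2}$. To resolve this, I would split $G$ into the sub-events $\{F_Y \neq F_Z\}$, on which the two-step argument runs cleanly and yields $p^{n-2}$, and $\{F_Y = F_Z\}$, on which I would exploit that $n \geq 3$ leaves at least one additional index that could have independently witnessed the other condition. This makes $\Pr[F_Y = F_Z \; | \; G]$ small enough that the lost factor of $p$ is absorbed. Weighting the two sub-events by their conditional probabilities under $G$ and combining then yields the overall bound $p^{n-2}$.
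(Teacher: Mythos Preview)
Your high-level plan (case-split on correlations, then iterate the argument of Lemma~\ref{lem:condition_on_one}) is in the right spirit, but as written it has two genuine gaps.

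\textbf{The ``both negative'' case is not closed.} Knowing $\Pr[X_1\mid Y_1]\le p$ and $\Pr[X_1\mid Z_1]\le p$ does \emph{not} imply that conditioning on $G=(\cup_i Y_i)\cap(\cup_i Z_i)$ lowers the weighted average $\E\big[\prod_{a,b}p_{ab}^{N_{ab}}\big]$. The marginal constraints leave $p_{11}=\Pr[X_1\mid Y_1\cap Z_1]$ free, and if $p_{11}>p$ then $G$ can push mass toward indices with $Y_i=Z_i=1$ and \emph{increase} the product. The paper's proof handles exactly this by sub-casing on whether $\Pr[X_1\mid Y_1\cap Z_1]\le p$ (where your argument works and gives $p^n$) or $\Pr[X_1\mid Y_1\cap Z_1]>p$ (where one instead compares to $\Pr[\cap_i X_i\mid Y_1\cap Z_1]\le p^{n-1}$ via the one-condition lemma applied to the i.i.d.\ pair $(X_i,\,Y_i\cap Z_i)$). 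Your two-way split is too coarse here; the paper needs four conditional probabilities in the case analysis, not two.

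\textbf{The iterated reduction breaks the i.i.d.\ structure.} After conditioning on $F_Y=j$, the indices $i<j$ carry the extra condition $\bar Y_i$ while indices $i>j$ do not, so the remaining $(X_i,Z_i)$ pairs are no longer identically distributed and Lemma~\ref{lem:condition_on_one} does not apply directly to $\cup_{i\ne j}Z_i$. Moreover, your handling of the coincidence $F_Y=F_Z$ is not a proof: ``small enough that the lost factor of $p$ is absorbed'' fails in the extreme where $Y_i=Z_i$ almost surely, so $\Pr[F_Y=F_Z\mid G]=1$. (That extreme still satisfies the lemma, but not by your mechanism.) The paper sidesteps all of this by never iterating through $F_Y$: instead, in each sub-case it directly identifies a single ``clamped'' event---one of $Y_1$, $Y_1\cap Z_1$, $Y_1\cap Y_2$, or $Y_1\cap Z_1\cap Y_2\cap Z_2$---that stochastically dominates the effect of $G$ on $\cap_i X_i$ and leaves at least $n-2$ unconditioned i.i.d.\ coordinates. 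That keeps the i.i.d.\ structure intact and yields $p^{n-2}$ immediately.
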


Our application of this lemma in \autoref{lem:inbucketmatch} is very similar to that of \autoref{lem:condition_on_one}.  $X_i$ is the event that a given pair of points ($u$ and $v$) have equal values for their $i$th MLSH function.  $Y_i$ is the event that $v$ and $v$'s optimal matching point do not have equal values for their $i$th MLSH function.  $Z_i$ is the event that $u$ and $u$'s optimal match have unequal values for their $i$th MLSH function.  Since each MLSH function is drawn i.i.d., these collision events are i.i.d. but correlations can exist between collision events for a given one of these functions, so this setting of $(X_i,Y_i,Z_i)$ fulfills our criteria.  Therefore $\Pr\left[\cap_i X_i \; | \; \left(\cup_i Y_i\right) \cap \left(\cup_i Z_i\right)\right]$ is the probability that $u$ and $v$ collide across all hash values given that $u$ and $v$ do not collide with their optimal matches.

\begin{proof}[Proof of \autoref{lem:condition_on_two}]
We argue analogously to in the proof of \autoref{lem:condition_on_one}, relating \\$\Pr\left[\cap_i X_i \; | \; \left(\cup_i Y_i\right) \cap \left(\cup_i Z_i\right)\right]$ to the probability of $\cap_i X_i$ under various other conditionings of $Y_i$s and $Z_i$s.  
We partition the problem into several cases, based on the relative values of $\Pr[X_1 \; | \; Y_1 \cap Z_1]$, $\Pr[X_1 \; | \; Y_1]$, $\Pr[X_1 \; | \; Z_1]$, and $\Pr[X_1]$.

First consider the case where $Y_1$ and $Z_1$, both independently and together, reduce the probability of $X_1$.  Specifically, $\Pr[X_1] \geq \Pr[X_1 \; | \; Y_1]$,  $\Pr[X_1] \geq \Pr[X_1 \; | \; Z_1]$, and $\Pr[X_1] \geq \Pr[X_1 \; | \; Y_1 \cap Z_1]$.  In this case
\begin{equation*}\Pr\left[\cap_i X_i \; | \; \left(\cup_i Y_i\right) \cap \left(\cup_i Z_i\right)\right] \leq \Pr\left[\cap_i X_i\right] \leq p^n,\end{equation*}
because here increasing the number of $Y_i$s and $Z_i$s equal to 1 only decreases the probability of $\cap_i X_i$.

Now we look at the similar case where $\Pr[X_1] \geq \Pr[X_1 \; | \; Y_1]$ and  $\Pr[X_1] \geq \Pr[X_1 \; | \; Z_1]$, but now $\Pr[X_1] \leq \Pr[X_1 \; | \; Y_1 \cap Z_1]$.  In this case we can say
\begin{equation*}\Pr\left[\cap_i X_i \; | \; \left(\cup_i Y_i\right) \cap \left(\cup_i Z_i\right)\right] \leq \Pr\left[\cap_i X_i \; | \; \cup_i \left(Y_i \cap Z_i\right)\right] \leq \Pr\left[\cap_i X_i \; | \; Y_1 \cap Z_1 \right] \leq p^{n-1},\end{equation*}
where the first inequality is immediate from our setting, and the second inequality follows from \autoref{lem:condition_on_one}.

Next consider the case when $X_1$ is positively correlated with $Y_1$, but not with $Z_1$ (even when conditioning on $Y_1$).  That is, $\Pr[X_1 \; | \; Y_1] \geq \Pr[X_1]$, $\Pr[X_1] \geq \Pr[X_1 \; | \; Z_1]$, and $\Pr[X_1 \; | \; Y_1] \geq \Pr[X_1 \; | \; Y_1 \cap Z_1]$.  In this case conditioning on $\cup_i Z_i$ can only decrease the probability of $\cap_i X_i$, so
\begin{equation*}\Pr\left[\cap_i X_i \; | \; \left(\cup_i Y_i\right) \cap \left(\cup_i Z_i\right)\right] \leq \Pr\left[\cap_i X_i \; | \; \cup_i Y_i \right] \leq p^{n-1}.\end{equation*}

Now we examine the same case except when $\Pr[X_1 \; | \; Y_1 \cap Z_1] \geq \Pr[X_1 \; | \; Y_1]$.  That is, conditioning on $Z_1$ alone reduces the chances of $X_1$, but conditioning on $Z_1$ when $Y_1=1$ increases $X_1$'s chances.  We already know from the proof of \autoref{lem:condition_on_one} that $\Pr[\cap_i X_i \; | \; \cup_i Y_i] \leq \Pr[\cap_i X_i \; | \; Y_1]$, and it immediately follows that in this case
\begin{equation*}\Pr[\cap_i X_i \; | \; (\cup_i Y_i) \cap (\cup_i Z_i)] \leq \Pr[\cap_i X_i \; | \; Y_1 \cap Z_1] \leq p^{n-1}.\end{equation*}
The cases where $X_1$ is positively correlated with $Z_1$, but not with $Y_1$ are entirely symmetric.

In our final two cases $\Pr[X_1 \; | \; Y_1] \geq \Pr[X_1]$ and $\Pr[X_1 \; | \; Z_1] \geq \Pr[X_1]$.  First, let $\Pr[X_1 \; | \; Y_1 \cap Z_1] \geq \max(\Pr[X_1 \; | \; Y_1], \Pr[X_1 \; | \; Z_1])$.  In this case, conditioning on $Y_1$ or $Z_1$ individually increases the likelihood of $X_1$, and conditioning on both is better than either individually.  Here we argue that
\begin{align*}
\Pr[\cap_i X_i \; | \; (\cup_i Y_i) \cap (\cup_i Z_i)] &= \Pr\left[\cap_i X_i \; | \; \left(\sum_i Y_i \geq 1\right) \cap \left(\sum_i Z_i \geq 1\right)\right]\\
&\leq \Pr\left[\cap_i X_i \; | \; \sum_i \left(Y_i \cap Z_i\right) \geq 2\right]\\
&\leq \Pr[\cap_i X_i \; | \; Y_1 \cap Z_1 \cap Y_2 \cap Z_2] \\
&\leq p^{n-2}.
\end{align*}
The first inequality follows from our case parameters since conditioning on $Y_i \cap Z_i$ is stronger than conditioning on either individually and conditioning on the sum being at least two is stronger than conditioning on each individually being at least one. The second inequality follows from twice applying the argument of \autoref{lem:condition_on_one}.

Finally, let $\Pr[X_1 \; | \; Y_1 \cap Z_1] \leq \max(\Pr[X_1 \; | \; Y_1], \Pr[X_1 \; | \; Z_1])$.  Without loss of generality, let $\Pr[X_1 \; | \; Y_1] \geq \Pr[X_1 \; | \; Z_1]$.  In this case, conditioning on $Y_1$ or $Z_1$ individually increases the likelihood of $X_1$, but conditioning on both is worse than just conditioning on $Y_1$.  Here we argue that
\begin{equation*}\Pr[\cap_i X_i \; | \; (\cup_i Y_i) \cap (\cup_i Z_i)] \leq \Pr[\cap_i X_i \; | \; Y_1 \cap Y_2] \leq p^{n-2},\end{equation*}
following the logic of the previous case, except now conditioning on $Y_1$ is stronger than $Y_1 \cap Z_1$.
\end{proof}

\section{Poisson Branching Processes}
In this section we analyze $\E[V_{v,j} \; | \; \survive_{v,j-1}]$ in the model of the idealized Poisson branching process.  We build on the work of \cite{geiger1999elementary} on Galton-Watson trees, of which Poisson branching processes are a special case.  They studied the distribution of tree sizes conditioned on the tree surviving to some depth.  Our setting is similar, except that our conditioning is more complicated, relating to the fact that our process generates hypertrees.

Let $p_k=(cq)^k e^{-cq} / k!$ be the probability that a node has $k$ child edges.  Let $Z_n$ be the number of descendant vertices (recalling that each child edge connects to $q-1$ child vertices) the root has $n$ levels below it.  Let $S_n$ be the event that the root has $n$ levels of saturated child edges below it.  Here $S_1$ means that the root has at least one child edge.  $S_2$ mean the root has at least one saturated child edge.  $S_3$ means the root has at least one child edge saturated by by saturated child edges, etc.  Let $R_{n+1}$ be the index of the leftmost child edge of the root whose vertices all have $n$ levels of saturated child edges below them.  The following fact is immediate from the independence of the vertices.
\begin{fact} \label{lem:sat1}
For $n \geq 0$ and $1 \leq j \leq k < \infty$,
\begin{equation*}\Pr[R_{n+1}=j,Z_1=k \; | \; S_{n+1}] = \frac{p_k(1-\Pr[S_n]^{q-1})^{j-1}\Pr[S_n]^{q-1}}{\Pr[S_{n+1}]}.\end{equation*}
\end{fact}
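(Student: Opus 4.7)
The plan is to unpack the definition of conditional probability and then argue that the event $\{R_{n+1}=j\}$ decomposes into independent events at each of the root's child edges. First I would observe that $\{R_{n+1}=j\} \subseteq S_{n+1}$ by the definition of $R_{n+1}$ (if some child edge has all $q-1$ of its vertices surviving $n$ saturation levels, then the root itself survives $n+1$ such levels). Therefore
\begin{equation*}
\Pr[R_{n+1}=j, Z_1=k \mid S_{n+1}] = \frac{\Pr[R_{n+1}=j, Z_1=k]}{\Pr[S_{n+1}]},
\end{equation*}
and the task reduces to computing the numerator.

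Next I would condition on $Z_1=k$, which occurs with probability $p_k$. Given the root has $k$ child edges, the subtrees hanging off the $q-1$ vertices of each child edge are mutually independent copies of the underlying branching process, so whether a particular child edge is ``$n$-saturated'' (i.e. all $q-1$ of its vertices have $n$ further levels of saturated child edges) is a Bernoulli trial with success probability $\Pr[S_n]^{q-1}$, and these $k$ trials are independent across child edges. The event $\{R_{n+1}=j\}$ with $j \leq k$ is exactly the event that the first $j-1$ of these trials fail and the $j$th succeeds, which occurs with probability $(1-\Pr[S_n]^{q-1})^{j-1}\Pr[S_n]^{q-1}$.

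Multiplying by $p_k$ and dividing by $\Pr[S_{n+1}]$ gives the claimed formula. The main conceptual point — and the one subtle place where care is required — is the justification that the $k$ child edges are independent $n$-saturation trials with parameter $\Pr[S_n]^{q-1}$; this rests on (a) the independence of the subtrees of distinct vertices in the branching model, and (b) the observation that an edge's $n$-saturation is a conjunction over its $q-1$ vertices, each independently surviving to depth $n$ with probability $\Pr[S_n]$. Once those facts are spelled out, the rest is a one-line computation, so this is really a bookkeeping lemma rather than a technical hurdle.
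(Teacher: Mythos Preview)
Your proposal is correct and matches the paper's approach exactly: the paper states this fact as ``immediate from the independence of the vertices'' without further proof, and your argument simply spells out that independence (each child edge's $n$-saturation is an independent Bernoulli$(\Pr[S_n]^{q-1})$ trial, so $\{R_{n+1}=j\}$ is a first-success-at-$j$ event) together with the observation that $\{R_{n+1}=j\}\subseteq S_{n+1}$.
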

Let $A_{n+1}$ be the event that the root has at least two child edges, each of whose vertices have $n$ levels of saturation.  Let $R'_{n+1}$ be the index of the second child edge (counting from the left) whose vertices have $n$ levels of saturation.
\begin{fact} \label{lem:sat2}
For $n \geq 0$ and $1 \leq j < m \leq k < \infty$,
\begin{equation*}\Pr[R_{n+1}=j,R'_{n+1}=m,Z_1=k \; | \; A_{n+1}] = \frac{p_k(1-\Pr[S_n]^{q-1})^{m-2}\Pr[S_n]^{2q-2}}{\Pr[A_{n+1}]}.\end{equation*}
\end{fact}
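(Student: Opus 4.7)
\textbf{Proof proposal for Fact \ref{lem:sat2}.}

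The plan is to mimic the reasoning behind Fact \ref{lem:sat1} but tracking the positions of the two leftmost saturated child edges jointly. The key structural input is that, conditional on $Z_1 = k$, the $k$ child edges of the root are independent and identically distributed sub-hypertrees, and moreover the subtrees hanging off each of the $q-1$ endpoint vertices of a fixed child edge are independent. This lets us compute the saturation status of each child edge separately.

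The first step is to observe that a single child edge has $n$ levels of saturated descendants if and only if each of its $q-1$ endpoint vertices independently is the root of a subtree with $n$ levels of saturated child edges, that is, experiences the event $S_n$. By independence, a fixed child edge is saturated with probability $\Pr[S_n]^{q-1}$ and unsaturated with probability $1 - \Pr[S_n]^{q-1}$, and the saturation indicators of the $k$ child edges are mutually independent.

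Next I would decompose the event $\{R_{n+1}=j,\, R'_{n+1}=m,\, Z_1=k\}$. Conditional on $Z_1=k$, this says: the $j-1$ child edges in positions $1,\ldots,j-1$ are unsaturated, edge $j$ is saturated, the $m-j-1$ child edges in positions $j+1,\ldots,m-1$ are unsaturated, edge $m$ is saturated, and the remaining edges in positions $m+1,\ldots,k$ are unconstrained. Using the independence from the previous step, this happens with probability
\begin{equation*}
\bigl(1-\Pr[S_n]^{q-1}\bigr)^{j-1}\;\Pr[S_n]^{q-1}\;\bigl(1-\Pr[S_n]^{q-1}\bigr)^{m-j-1}\;\Pr[S_n]^{q-1} \;=\; \bigl(1-\Pr[S_n]^{q-1}\bigr)^{m-2}\,\Pr[S_n]^{2q-2}.
\end{equation*}
Multiplying by $\Pr[Z_1=k] = p_k$ yields the joint probability $\Pr[R_{n+1}=j,\,R'_{n+1}=m,\,Z_1=k]$. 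Finally, since $A_{n+1}$ is precisely the event that a second saturated child edge exists (equivalently, that $R'_{n+1}$ is defined), the event $\{R_{n+1}=j,\,R'_{n+1}=m,\,Z_1=k\}$ is a subset of $A_{n+1}$, so conditioning on $A_{n+1}$ just divides by $\Pr[A_{n+1}]$, giving the stated formula.

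There is no real obstacle; the only point requiring care is the bookkeeping for the exponent $m-2$, which comes from the $j-1$ unsaturated edges before position $j$ plus the $m-j-1$ unsaturated edges strictly between positions $j$ and $m$, while positions after $m$ are marginalized out and contribute a factor of $1$. The independence structure of the Poisson branching process, together with the splitting property already invoked in the proof of Lemma \ref{thm:peel}, is what makes this clean.
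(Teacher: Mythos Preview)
Your proposal is correct and matches the paper's approach: the paper states this fact without proof, noting only that it (and the companion Fact~\ref{lem:sat1}) ``is immediate from the independence of the vertices,'' and your argument is precisely the unpacking of that remark. The one small comment is that the splitting property of the Poisson is not actually needed here; all you use is that, given $Z_1=k$, the $k$ child edges carry independent subtrees, which is just the definition of the branching process.
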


Using these facts, we can bound our desired quantity since
\begin{equation*}\E[\text{number of vertices within radius } j \text{ of } v \; | \; \survive_{v,j-1}] = \E\left[\sum_{i=0}^{j} Z_i \; | \; A_{j}\right].\end{equation*}

\begin{lemma}
\label{lem:conditioned_branching}
For $n \geq 0$, $c < 1 / (q (q-1))$, $q \geq 3$, and $q = O(1)$,
\begin{equation*}\E\left[\sum_{i=0}^{n+1} Z_i \; | \; A_{n+1}\right] = O((q-1)^n).\end{equation*}
\end{lemma}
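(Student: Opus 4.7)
The plan is to use the decomposition given by the two facts to reduce bounding $\E[\sum_{i=0}^{n+1} Z_i\mid A_{n+1}]$ to bounding the analogous quantity conditioned on $S_n$, which I denote $g(n)=\E[\sum_{i=0}^{n} Z_i \mid S_n]$, and then to analyze $g(n)$ by induction. Along the way I will use the auxiliary quantities $N(n)=\sum_{i=0}^n (cq(q-1))^i$ (the unconditional expectation) and $\bar h(n)$, the expected size of a child-edge subtree conditioned on that edge being \emph{not} saturated. Since $c<1/(q(q-1))$, we have $cq(q-1)<1$, so $N(n)\le N(\infty)=O(1)$; a short Bayes computation shows that $\bar h(n)$ is likewise $O(1)$ (its value is $(q-1)\bigl(N(n)-g(n)\lambda_n^{q-1}\bigr)/(1-\lambda_n^{q-1})$, and the correction term is negligible because $\lambda_n$ shrinks doubly exponentially; see below).

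First I will use Fact~\ref{lem:sat2}. Conditional on $A_{n+1}$, the root has $Z_1=k$ child edges, with the two ``saturated'' ones at positions $R_{n+1}=j$ and $R'_{n+1}=m$, the $m-2$ edges in positions $\{1,\dots,m\}\setminus\{j,m\}$ each conditioned to be unsaturated, and the $k-m$ edges at positions $>m$ unconditioned. Summing contributions and using independence gives
\begin{equation*}
\E\!\left[\sum_{i=0}^{n+1}Z_i\,\Big|\,A_{n+1}\right]
= 1 + 2(q-1)g(n) + \E[R'_{n+1}-2\mid A_{n+1}]\,\bar h(n) + \E[Z_1-R'_{n+1}\mid A_{n+1}]\,(q-1)N(n).
\end{equation*}
The factor $2(q-1)g(n)$ comes from the two saturated edges, each contributing $(q-1)g(n)$ since each of their $q-1$ child vertices satisfies $S_n$. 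Writing $\mu=cq\lambda_n^{q-1}$, the numbers of saturated and unsaturated child edges are independent $\pois(\mu)$ and $\pois(cq(1-\lambda_n^{q-1}))$ respectively, so conditioning on $A_{n+1}$ (which depends only on the saturated count being $\ge 2$) gives $\E[Z_1\mid A_{n+1}] = cq(1-\lambda_n^{q-1}) + \E[\pois(\mu)\mid \pois(\mu)\ge 2]=O(1)$ uniformly in $n$ (since $\mu\le cq=O(1)$). Likewise $\E[R'_{n+1}\mid A_{n+1}]=O(1)$. Combined with $\bar h(n),N(n)=O(1)$, this reduces the lemma to showing $g(n)=O((q-1)^n)$.

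Next I will bound $g(n)$ by a recursion derived from Fact~\ref{lem:sat1}, decomposing conditional on $S_n$ into one saturated child edge, any number of unsaturated edges before it, and unconditioned edges after. Writing $\mu_n=cq\lambda_{n-1}^{q-1}$ and using that the saturated count is $\pois(\mu_n)$ conditioned on being $\ge 1$:
\begin{equation*}
g(n) \le 1 + \frac{\mu_n}{\lambda_n}\,(q-1)g(n-1) + cq\,\bar h(n-1) = (q-1)\frac{\mu_n}{\lambda_n}g(n-1) + O(1),
\end{equation*}
where $\lambda_n=1-e^{-\mu_n}$ and I used $\mu_n/\lambda_n\ge 1$ with the correction absorbed into the $O(1)$.

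The main obstacle is that $\mu_n/\lambda_n = \mu_n/(1-e^{-\mu_n}) > 1$, so iterating naively yields a bound larger than $(q-1)^n$. The resolution is to pass to $\tilde g(n):=g(n)/(q-1)^n$, which satisfies $\tilde g(n)\le (\mu_n/\lambda_n)\tilde g(n-1)+O((q-1)^{-n})$, and to show that $\prod_{k\ge 1}\mu_k/\lambda_k$ converges. Because $c<1/(q(q-1))$, the sequence $\lambda_n$ satisfies $\lambda_n\to 0$; moreover the recurrence $\lambda_n = 1-e^{-cq\lambda_{n-1}^{q-1}}\le cq\lambda_{n-1}^{q-1}$ with $q-1\ge 2$ forces $\lambda_n$ (and hence $\mu_n$) to decay doubly exponentially, so $\sum_n\mu_n<\infty$. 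Using $\mu_n/\lambda_n\le 1+\mu_n$ (from $1-e^{-x}\ge x-x^2/2$ for small $x$, together with uniform bounds to cover any initial finite index range of $n$), the infinite product $\prod_n(1+\mu_n)\le \exp(\sum_n\mu_n)=O(1)$. Combining this with the geometric tail $\sum_k(q-1)^{-k}=O(1)$ yields $\tilde g(n)=O(1)$, hence $g(n)=O((q-1)^n)$, which completes the argument.
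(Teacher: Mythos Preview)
Your argument is correct, but it takes a more involved route than the paper's. The essential difference lies in how you set up the recursion for $g(n)=\E\bigl[\sum_{i=0}^n Z_i\mid S_n\bigr]$. You account for \emph{all} saturated child edges, whose conditional expected number is $\mu_n/\lambda_n>1$; this produces the multiplier $(q-1)\mu_n/\lambda_n$ and forces you to control the infinite product $\prod_n \mu_n/\lambda_n$ via the doubly-exponential decay of $\lambda_n$. The paper instead singles out only the \emph{first} saturated edge (the one at position $R_{n+1}$), crediting it with $(q-1)g(n-1)$, and bounds every other child edge---both the ones before $R_{n+1}$ (which are conditioned to be unsaturated) and the ones after (which are unconditioned)---by the unconditional subtree expectation $(q-1)N(n)=O(1)$, using that conditioning on being unsaturated can only decrease a subtree's expected size. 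This immediately gives the clean recursion $g(n)\le (q-1)\,g(n-1)+O(1)$, with no product to worry about, and the same device handles $A_{n+1}$ via Fact~\ref{lem:sat2}. What your approach buys is an exact decomposition and a sharper constant in principle; what the paper's approach buys is that the whole double-exponential-decay argument (and the separate $\bar h$ analysis) becomes unnecessary.
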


\begin{proof}
The expected number of descendants up to $m$ levels below an unconditioned vertex is
\begin{equation*}\E\left[\sum_{i=0}^m Z_i\right] = \sum_{i=0}^m (cq(q-1))^i.\end{equation*}

We now use \autoref{lem:sat1} to bound this quantity conditioned on $S_{n}$.
\begin{align*}
\E&\left[\sum_{i=0}^{n+1} Z_i \; | \; S_{n+1}\right] \\
&\leq 1+\sum_{k=1}^\infty p_k \sum_{j=1}^k \frac{(1-\Pr[S_n]^{q-1})^{j-1}\Pr[S_n]^{q-1}}{\Pr[S_{n+1}]} (q-1)\left((k-1)\E\left[\sum_{i=0}^n Z_i\right]+\E\left[\sum_{i=0}^{n} Z_i \; | \; S_n\right]\right) \\
&= (q-1)\E\left[\sum_{i=0}^{n} Z_i \; | \; S_n\right]+1+(q-1)\sum_{k=1}^\infty p_k \sum_{j=1}^k \frac{(1-\Pr[S_n]^{q-1})^{j-1}\Pr[S_n]^{q-1}}{\Pr[S_{n+1}]} (k-1)\E\left[\sum_{i=0}^n Z_i\right] \\
&= (q-1)\E\left[\sum_{i=0}^{n} Z_i \; | \; S_n\right]+1+(q-1)\E\left[\sum_{i=0}^n Z_i\right]\sum_{k=1}^\infty (k-1) p_k \sum_{j=1}^k \frac{(1-\Pr[S_n]^{q-1})^{j-1}\Pr[S_n]^{q-1}}{\Pr[S_{n+1}]} \\
&= (q-1)\E\left[\sum_{i=0}^{n} Z_i \; | \; S_n\right]+1+(q-1)\E\left[\sum_{i=0}^n Z_i\right]\sum_{k=1}^\infty (k-1) p_k \sum_{j=1}^k \Pr[R_{n+1}=j \; | \; S_{n+1}, Z_1 = k] \\
&\leq (q-1)\E\left[\sum_{i=0}^{n} Z_i \; | \; S_n\right]+1+(q-1)\E\left[\sum_{i=0}^n Z_i\right]\sum_{k=1}^\infty k(k-1) p_k \\
&= (q-1)\E\left[\sum_{i=0}^{n} Z_i \; | \; S_n\right]+1+(q-1)(cq)^2\sum_{i=0}^n (cq(q-1))^i.
\end{align*}
The first inequality uses the fact that conditioning the expected number of descendants of a child edge conditioned on not all of its vertices having $n$ levels of saturation is at most the expected number of descendants of an unconditioned child edge.

We now solve this recurrence, using the fact that $\E[Z_0 \; | \; S_0] = 1$,
\begin{align*}
\E&\left[\sum_{i=0}^{n+1} Z_i \; | \; S_{n+1}\right] \leq \sum_{j=1}^{n+1} (q-1)^{n+1-j}\left(1+(q-1)(cq)^2\sum_{i=0}^j (cq(q-1))^i\right)+1\\
&= \frac{cq-1+(cq(q-1))^{n+4}+(q-1)^{n+1}\left(1-cq-c^3q^3(q-1)^2(q-2)-(q-1)^2(cq)^{n+4}\right)}{(q-2)(cq-1)(cq(q-1)-1)} \\
&= O((q-1)^n),
\end{align*}
where the final equality uses our bounds on $c$ and $q$.

Now we can use \autoref{lem:sat2} to bound the expectation conditioned on $A_n$.
\begin{align*}
&\E\left[\sum_{i=0}^{n+1} Z_i \; | \; A_{n+1}\right] \\
&\leq 1+\sum_{k=2}^\infty p_k \sum_{j=1}^{k-1} \sum_{m=j+1}^k \frac{(1-\Pr[S_n]^{q-1})^{m-2}\Pr[S_n]^{2q-2}}{\Pr[A_{n+1}]} (q-1)\\
&\qquad \cdot \left((k-2)\E\left[\sum_{i=0}^n Z_i\right]+2\E\left[\sum_{i=0}^{n} Z_i \; | \; S_n\right]\right) \\
&= 2(q-1)\E\left[\sum_{i=0}^{n} Z_i \; | \; S_n\right] + 1\\
&\qquad + (q-1)\E\left[\sum_{i=0}^n Z_i\right]\sum_{k=2}^\infty (k-2)p_k \sum_{j=1}^{k-1} \sum_{m=j+1}^k \frac{(1-\Pr[S_n]^{q-1})^{m-2}\Pr[S_n]^{2q-2}}{\Pr[A_{n+1}]}  \\
&= 2(q-1)\E\left[\sum_{i=0}^{n} Z_i \; | \; S_n\right] + 1\\
&\qquad + (q-1)\E\left[\sum_{i=0}^n Z_i\right]\sum_{k=2}^\infty (k-2)p_k \sum_{j=1}^{k-1} \sum_{m=j+1}^k \Pr[R_{n+1}=j,R'_{n+1}=m \; | \; S_{n+1}, Z_1 = k]  \\
&\leq 2(q-1)\E\left[\sum_{i=0}^{n} Z_i \; | \; S_n\right] + 1+(q-1)\E\left[\sum_{i=0}^n Z_i\right]\sum_{k=2}^\infty \frac{(k-2)(k-1)k}{2}p_k  \\
&= 2(q-1)\cdot O((q-1)^n) + 1+(q-1)(cq)^3/2\sum_{i=0}^n (cq(q-1))^i  \\
&= O((q-1)^n).
\end{align*}
Once again, the first inequality uses the fact that conditioning the expected number of descendants of a child edge conditioned on not all of its vertices having $n$ levels of saturation is at most the expected number of descendants of an unconditioned child edge.
\end{proof}

\section{Gap Guarantee Protocols}
\label{app:gapguarantee}

We use the following protocol for reconciling (multi)sets of sets.  In the multisets of sets reconciliation problem, Alice and Bob each have a parent multiset of at most $s$ child sets, each containing at most $h$ elements from a universe of size $u$.  The sum of the sizes of of all of the child sets is at most $n'$.  $\d$ is the sum over each of Alice and Bob's child sets of their minimum set difference with one of the other party's child sets.  $\q = \min(\d, s)$.  The goal of the problem is for Bob to successfully recover Alice's multiset of sets.

\begin{theorem}[Theorem 3.11 of \cite{mitzenmacher2017reconciling}]\label{thm:ssr}
Multisets of sets reconciliation can be solved in 3 rounds using \begin{equation*}O(\lceil\log_{\q}(1/\delta)\rceil \q\log s + \log (\q / \delta) \q\log h + \lceil\log_\d(1/\delta)\rceil \d \log(un'))\end{equation*} bits of communication and \begin{equation*}O(\log (\q / \delta) (n'+\q^2)+\d^2+\min(\d h, n' \sqrt{\d}, n' \log^2 h))\end{equation*} time with probability at least $1 - \delta$.
\end{theorem}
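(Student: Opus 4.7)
The plan is to decompose the problem into three stages: detecting which child sets differ, pairing up the differing child sets across the two parties, and reconciling element-level differences within each matched pair. Each stage will be analyzed separately, and the three communication terms in the theorem statement will come out naturally, one per stage.

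First I would fingerprint each child set via a collision-resistant hash, producing an $O(\log s)$-bit summary that is identical for identical child sets and distinct otherwise with high probability. Alice and Bob then reconcile the multisets of fingerprints using a standard IBLT; since only child sets that actually differ at the element level can contribute unmatched fingerprints, at most $\widehat{z}$ fingerprint differences arise. Building $\lceil \log_{\widehat{z}}(1/\delta) \rceil$ independently seeded IBLTs drives the per-round failure probability below the target and yields the $\lceil \log_{\widehat{z}}(1/\delta) \rceil \widehat{z} \log s$ communication term.

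Next, to match up the differing child sets across the two parties, each party sends a compact similarity sketch of size $O(\log h)$ per unresolved child set --- for instance a small collection of min-hashes whose aggregate length is $\log h$ bits --- which Bob uses to compute approximate Jaccard similarities between each of Alice's unresolved child sets and each of his own and extract a greedy matching. This step contributes the $\log(\widehat{z}/\delta) \widehat{z} \log h$ term, where the logarithmic factor absorbs a union bound over the at-most $\widehat{z}^2$ candidate pairings. Finally, for each matched pair, the parties insert $(\text{pair-id}, \text{element})$ tuples into a single global IBLT with $O(z)$ cells; each tuple occupies $O(\log(u n'))$ bits, and amplification to overall failure probability $\delta$ produces the remaining $\lceil \log_z(1/\delta) \rceil z \log(u n')$ term. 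Any child set for which no good match has been identified can simply be transmitted verbatim within this same budget.

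The main obstacle is the matching step: a single mis-pairing would turn what should have been a small element-level discrepancy into an almost total disagreement, inflating the IBLT budget from $O(z)$ to $\Theta(h \widehat{z})$. Proving that sketches of size $O(\log h)$ suffice to identify a pairing whose induced element-difference count is within a constant factor of $z$ requires Chernoff-style concentration of the min-hash estimators combined with a union bound over the $O(\widehat{z}^2)$ candidate pairings. The three-way $\min(z h, n' \sqrt{z}, n' \log^2 h)$ in the running-time bound suggests that the algorithmic realization of the matching step should switch among brute-force comparison, an IBLT-based $\sqrt{z}$-decomposition, and a sketch-based nearest-neighbor lookup depending on which of $z$, $n'$, and $h$ is the binding parameter, so the proof will also need to verify that at least one of these three implementations is always available within the stated time budget.
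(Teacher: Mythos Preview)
This theorem is not proved in the present paper at all: it is quoted verbatim as Theorem~3.11 of \cite{mitzenmacher2017reconciling} and used as a black box in the analysis of the Gap Guarantee protocol. Consequently there is no ``paper's own proof'' against which to compare your proposal.

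That said, your three-stage outline (fingerprint the child sets and reconcile the fingerprints; match up the differing child sets via similarity sketches; reconcile element-level differences inside matched pairs with a global IBLT) is indeed the high-level architecture of the argument in \cite{mitzenmacher2017reconciling}, and the three communication terms do correspond to those three stages. The part of your sketch that is thinnest is exactly the part you flag yourself: showing that $O(\log h)$-bit min-hash sketches per child set suffice to recover a matching whose total element-level discrepancy is $O(z)$, uniformly over all $O(\widehat{z}^2)$ candidate pairings. A single Chernoff bound on Jaccard estimates does not immediately give this, because the ``correct'' matching need not be the one maximizing pairwise Jaccard similarity, and a greedy matching on noisy similarities can in principle cascade errors. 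The cited paper handles this with a more careful argument, and the $\min(zh,\,n'\sqrt{z},\,n'\log^2 h)$ time term reflects three genuinely different algorithmic realizations of that step rather than a single estimator analyzed three ways. If you intend to reconstruct the proof rather than cite it, that matching step is where the real work lies.
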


Using this, we can prove the correctness of the protocol described in \autoref{sec:genericalg}.

\thmgapguarantee*

\begin{proof}
We chose $m$ so that the probability that two far elements match on one of their hashes is at most $1/2$.  By a Chernoff bound, the probability that two far elements have keys matching in more than $h(1/2+\epsilon/6)$ entries is at most
\begin{equation*}e^{-O(h)} = 1/\poly(n)\end{equation*}
so with high probability, no pair of far keys match in more than $h(1/2+\epsilon/6)$ entries.

Now consider a close pair of elements.  The expected number matches in their keys is at least
\begin{equation*}h \cdot p_1^m = h \cdot p_1^{\log_{p_2}(1/2)} = h(1/2)^\rho \geq h (1/2)^{1-\epsilon} \geq h (1/2)^{-\log_2(1/2+\epsilon/3)} = h(1/2 + \epsilon/3),\end{equation*}
where the final inequality follows from a Taylor expansion of $-\log_2(1/2+\epsilon/3)$.

By a Chernoff bound, the probability that a close pair matches in less than $h(1/2+\epsilon/6)$ LSHs is at most
\begin{equation*}e^{-O(h)} = 1/\poly(n),\end{equation*}
so with high probability, no pair of close keys matches in less than $h(1/2+\epsilon/6)$ entries.

The total number of differences between the multisets of sets, excluding the far points, is at most
\begin{equation*}n h (1 - p_1^m) = nh(1 - (1/2)^\rho) = \Theta(nh\rho) = \Theta(\rho n \log n)\end{equation*} 
in expectation.  By a Chernoff bound, the total number is no more than $(1+c)$ times this with probability at most $e^{-O(c \rho n \log n)}$.  If $\rho > 1/n$, then this is at most $1/\poly(n)$ for $c = 1$.  If $\rho \leq 1 / n$, then this is at most $1/\poly(n)$ for $c = 1 / (\rho n)$.  Therefore, with high probability the total number of differences between the multisets of sets, including the far points, is at most 
\begin{equation*}O(\max(\rho n \log n, \log n) + k \log n) = O((k + \rho n) \log n).\end{equation*}

Now we reconcile the LSH vectors via \autoref{thm:ssr}.  Here $n' = n \log n$, $\q \leq \d = k \log n + \rho n \log n$, $s = n$, $\log u = \log n$, $h = \log n$, and $\delta = 1/n$.  We get communication
\begin{align*}
O&\left(\left\lceil\frac{\log n}{\log \d}\right\rceil \d \log n + \log(\d n) \d \log \log n\right)\\
&= O\left((k + \rho n) \log^2 n \left( \frac{\log n}{\log (k + \rho n) + \log \log n} + \log \log n\right)\right),
\end{align*}

and time 
\begin{align*}
O&(\log(\d n) (n + \d^2)+\d^2+\min(\d \log n, n \log n \sqrt{\d}, n\log n \log^2 \log n))\\
&= O(n \log n + (k+\rho n)^2 \log^3 n).
\end{align*}

Now we have already argued that, with high probability, every far key is successfully identified, and no close key is misidentified as a far key.  Note that we must have at most $k$ elements corresponding to those identified far keys.  The far keys may not be unique, but if a pair of elements corresponds to the same far key, then by our analysis both of the elements must not be close elements so by transmitting all of them, we only use $O(k \log |U|)$ communication.

The time to construct the keys is $O(t h m n) = O(t n \log n  / \log(1/p_2))$.  All that remains is to determine which keys of Alice's keys differ in at least $h(1/2+\epsilon/6)$ entries from every one of Bob's keys.  There are at most $k \log n + \rho n \log n$ keys that differ between Alice and Bob, so it takes $O((k+\rho n)^2 \log^3 n)$ to compare all of the differing keys.
\end{proof}

\subsection{Protocol for Low Dimensions} \label{sec:lowd}
While the algorithm in \autoref{sec:genericalg} works with any provided LSH, we can do slightly better in low dimensional $\ell_p$ metric spaces by using a special class of LSHs.  Specifically, we can construct an LSH with the property that $p_2 = 0$.  

The following is an LSH scheme with this kind of one-sided error that we can use for an $([\Delta]^d,\ell_p)$ metric space.  Construct a randomly shifted grid of width $r_2 / d^{1/p}$.  A point's hash value is the grid cell it falls into.   Since the maximum distance apart two points falling in the same grid cell can be is exactly $r_2$, $p_2 = 0$ as desired.  Now we bound $p_1$.

Let $x_1,\ldots,x_d$ be the absolute values of the differences between two points in each dimension.  Since we are looking at $p_1$, we want the total distance to be $r_1$ so $(\sum_{i=1}^d x_i^p)^{1/p} = r_1$.  In order for the two points to round to the different grid point, they must have at least one dimension that rounds to a different value.  By a union bound, the probability of this is at most
\begin{align*}
\sum_{i=1}^d \frac{x_i \cdot d^{1/p}}{r_2} &\leq \frac{d^{1+1/p}}{r_2} \left(\sum_{i=1}^d \frac{x_i^p}{d} \right)^{1/p} \; \text{ (Jensen's inequality)} \\
&= \frac{d}{r_2} \left(\sum_{i=1}^d x_i^p \right)^{1/p} = \frac{r_1d}{r_2}.
\end{align*}
Thus $p_1 \geq 1 - \frac{r_1d}{r_2}$.

We use the same basic protocol as before, except now this one sided LSH allows us to use $m=1$, since we don't need any replication to reduce the probability of far points colliding.  Furthermore we now only have to choose $h$ large enough that each close pair matches in at least one hash, which we achieve with $h = \Theta\left(\log n / \log\left(\frac{r_2}{r_1 d}\right)\right).$  This yields the following bound.

\thmgapguaranteelowd*

\begin{proof}
The protocol is the same as in \autoref{thm:gapguarantee}, except now $m = 1$, $h = \Theta(\lceil\log n / \log(1 / \hat{\rho})\rceil)$ and we determine one of Alice's points to be close if any of its hashes match with any of Bob's.  The analysis follows very similarly to \autoref{thm:gapguarantee}.

Consider a close pair of elements.  The probability that none of their hashes match is at most
\begin{equation*}(1 - p_1)^h = \rho^{\Theta(\lceil\log n / \log(1 / \hat{\rho})\rceil)} = 1 / \poly(n),\end{equation*}
so with high probability each pair of close elements has at least one match in their keys.  The expected number of differences in their keys is at most $h(1 - p_1) = h \hat{\rho}$.

The total number of differences between the multisets of sets, excluding the far points, is $O(\hat{\rho} n \lceil\log n / \log(1 / \hat{\rho})\rceil)$ in expectation.  By a Chernoff bound, the total number is no more than $(1+c)$ times this with probability at most $e^{-O(c \hat{\rho} n \lceil\log n / \log(1 / \hat{\rho})\rceil)}$.  If $\hat{\rho}/\log(1/\hat{\rho}) > 1/n$, then this is at most $1/\poly(n)$ for $c = 1$.  If $\hat{\rho}/\log(1/\hat{\rho}) \leq 1 / n$, then this is at most $1/\poly(n)$ for $c = \log(1/\hat{\rho}) / (\hat{\rho} n)$.  Therefore, with high probability the total number of differences between the multisets of sets, including the far points, is at most 
\begin{equation*}O(\max(\hat{\rho} n \lceil \log n / \log(1 / \hat{\rho})\rceil), \log n) + k \lceil \log n / \log(1 / \hat{\rho})\rceil) = O(\lceil(k + \hat{\rho} n) / \log(1/\hat{\rho})\rceil \log n).\end{equation*}

Now we reconcile the LSH vectors via \autoref{thm:ssr}.  Here $n' = n \lceil\log n / \log(1 / \hat{\rho})\rceil$, $\q \leq \d = \lceil(k + \hat{\rho} n) / \log(1/\hat{\rho})\rceil \log n$, $s = n$, $\log u = \log n$, $h = \lceil\log n / \log(1 / \hat{\rho})\rceil$, and $\delta = 1/n$.  We get communication
\begin{align*}
O&\left(\left\lceil\frac{\log n}{\log \d}\right\rceil \d \log n + \log(\d n) \d \log \log n\right)\\
&= O\left(\left\lceil\frac{k + \hat{\rho} n}{\log(1/\hat{\rho})}\right\rceil \log^2 n \left( \frac{\log n}{\log \left\lceil\frac{k + \hat{\rho} n}{\log(1/\hat{\rho})}\right\rceil + \log \log n} + \log \log n\right)\right),
\end{align*}

and time 
\begin{align*}
O&\left(\log(\d n) (n + \d^2)+\d^2+\min\left(\d \left\lceil\frac{\log n}{\log(1 / \hat{\rho})}\right\rceil, n \log n \sqrt{\d}, n\log n \log^2  \left\lceil\frac{\log n}{\log(1 / \hat{\rho})}\right\rceil\right)\right)\\
&=O\left(n \log n + \left\lceil\frac{k + \hat{\rho} n}{\log(1/\hat{\rho})}\right\rceil^2 \log^3 n\right).
\end{align*}

Now we have already argued that, every far key is successfully identified, and with high probability, no close key is misidentified as a far key.  Note that we must have at most $k$ elements corresponding to those identified far keys.  The far keys may not be unique, but if a pair of elements corresponds to the same far key, then by our analysis both of the elements must not be close elements so by transmitting all of them, we only use $O(k \log |U|)$ communication.

The time to construct the keys is $O(d h n) = O(d n \lceil\log n / \log(1 / \hat{\rho})\rceil)$.  All that remains is to determine which keys of Alice's keys differ in all of their entries from every one of Bob's keys.  There are at most $\lceil(k + \hat{\rho} n) / \log(1/\hat{\rho})\rceil \log n$ keys that differ between Alice and Bob, so it takes $O(\left\lceil(k + \hat{\rho} n)/\log(1/\hat{\rho})\right\rceil^2 \log^3 n)$ to compare all of the differing keys.
\end{proof}

\section{Gap Guarantee Lower Bound}
\label{sec:gglowerbound}

Here we provide the proof of our lower bound for the Gap Guarantee model:

\thmgglowerbound*

\begin{proof}
We reduce from the index problem, in which Alice has an $n$-bit string $x \in \{0,1\}^n$, Bob has an integer $i \in [n]$, and Alice wishes to send a message to Bob so that He can recover $x_i$.  The randomized communication complexity of the index problem is known to be $\Omega(n)$ \cite{kremer1999randomized}.  We reduce the index problem to the relevant form of the Gap Guarantee as follows.

In advance, the two parties agree on a set of $n+1$ $(d-1)$-bit strings $c_1,\ldots,c_{n+1}\in \{0,1\}^{d-1}$ such that for all $i \neq j \in [n+1]$, $f_H(c_i, c_j) \geq r_2$.  This is achievable for our setting of $d = \Omega(\log n + r_2)$ by a variety of error-correcting codes, such as Reed-Muller codes \cite{van2012introduction}.

Alice constructs her set of $n$ points as \begin{equation*}S_A = \{c_1 || x_1,\ldots,c_n || x_n\},\end{equation*} where $||$ is the concatenation operator.  In other words, she takes the first $n$ agreed upon codewords and appends her corresponding bit to each one.  Bob's point set is \begin{equation*}S_B = \{c_1 || 0, \ldots, c_{i-1} || 0, c_{i+1} || 0, \ldots, c_{n+1} || 0\}.\end{equation*}  He takes the set of codewords except for the $i$th one, and appends a $0$ to each one.

Now, Alice sends Bob a message so that He recovers $S_B'$ according to the Gap Guarantee definition.  $S_B'$ must contain $c_i || x_i$, and it can not have any other points within $r_2$ of that, so Bob simply finds the only new point that is at least $r_2$ from all of his original points, and reports its final bit as the solution to the index problem.
\end{proof}

\end{document}